\documentclass[11pt,a4paper]{article}

\usepackage[T1]{fontenc}
\usepackage{lmodern}
\usepackage{microtype}
\usepackage{latexsym}
\usepackage{amsmath,amssymb,amsfonts,amsthm,mathtools}
\usepackage{mathrsfs}
\usepackage{graphicx}
\usepackage{float}
\usepackage[numbers,sort&compress]{natbib}
\usepackage{xcolor}
\usepackage[hidelinks]{hyperref}
\usepackage{comment}
\usepackage{subfigure}
\usepackage{float}

\hypersetup{
  pdftitle={Fredholm determinant solutions to mKP and 2DTL:
isospectral and non-isospectral reductions},
  pdfsubject={Fredholm determinants and integrable hierarchies},
  pdfkeywords={Fredholm determinant, tau-function, modified KP hierarchy, two-dimensional Toda lattice, modified Camassa-Holm, Bessel kernel}
}
\usepackage[nameinlink]{cleveref}
\usepackage[a4paper,left=0.82in,right=0.82in,top=0.72in,bottom=0.76in]{geometry}
\usepackage[normalem]{ulem}
\usepackage{soul}

\allowdisplaybreaks
\numberwithin{equation}{section}

\newtheorem{theorem}{Theorem}[section]
\newtheorem{lemma}[theorem]{Lemma}
\newtheorem{proposition}[theorem]{Proposition}
\newtheorem{corollary}[theorem]{Corollary}

\newtheorem{example}[theorem]{Example}
\newtheorem{remark}[theorem]{Remark}

\crefname{equation}{Eq.}{Eqs.}
\Crefname{equation}{Eq.}{Eqs.}
\crefname{section}{Sec.}{Secs.}
\Crefname{section}{Sec.}{Secs.}
\crefname{subsection}{Sec.}{Secs.}
\Crefname{subsection}{Sec.}{Secs.}
\crefname{subsubsection}{Sec.}{Secs.}
\Crefname{subsubsection}{Sec.}{Secs.}
\crefname{appendix}{App.}{Apps.}
\Crefname{appendix}{App.}{Apps.}
\crefname{theorem}{Thm.}{Thms.}
\Crefname{theorem}{Thm.}{Thms.}
\crefname{lemma}{Lem.}{Lems.}
\Crefname{lemma}{Lem.}{Lems.}
\crefname{proposition}{Prop.}{Props.}
\Crefname{proposition}{Prop.}{Props.}
\crefname{corollary}{Cor.}{Cors.}
\Crefname{corollary}{Cor.}{Cors.}
\crefname{definition}{Def.}{Defs.}
\Crefname{definition}{Def.}{Defs.}
\crefname{assumption}{Assump.}{Assumps.}
\Crefname{assumption}{Assump.}{Assumps.}
\crefname{example}{Ex.}{Exs.}
\Crefname{example}{Ex.}{Exs.}
\crefname{remark}{Rem.}{Rems.}
\Crefname{remark}{Rem.}{Rems.}

\newcommand{\dd}{\,\mathrm d}
\newcommand{\ii}{\mathrm i}
\newcommand{\Ai}{\operatorname{Ai}}
\newcommand{\Tr}{\operatorname{Tr}}

\newcommand{\R}{\mathbb R}

\title{Fredholm determinant solutions to the modified KP hierarchy and 2D Toda lattice:
isospectral and non-isospectral reductions}

\author{Dan Dai$^1$, Han-Han Sheng$^2$\\
Department of Mathematics, City University of Hong Kong, Hong Kong, P.R. China\\
  ${}^1$ \texttt{dandai@cityu.edu.hk}\\
    ${}^2$ \texttt{hsheng@cityu.edu.hk}
}
\date{}

\begin{document}
\maketitle

\begin{abstract}
We study Fredholm determinant solutions to the first two positive and the first negative members of the modified Kadomtsev-Petviashvili (mKP) hierarchy, together with the bilinear two-dimensional Toda lattice (2DTL) equation. By applying isospectral reductions, we explicitly construct Fredholm determinant solutions for the modified Korteweg-de Vries (mKdV), the modified Camassa-Holm, and the focusing nonlinear Schr{\"o}dinger equations. Furthermore, we demonstrate that in the non-isospectral case, the reduction from the mKP and 2DTL equations to the non-isospectral negative KdV equation yields solutions characterized by the deformed Bessel determinant.
\end{abstract}

\noindent\textbf{Keywords.} Fredholm determinant; $\tau$-function; modified KP hierarchy; two-dimensional Toda lattice; Bessel kernel.

\medskip
\noindent\textbf{2020 Mathematics Subject Classification.} 37K10; 35Q53; 47B10; 60G55.

\section{Introduction}\label{sec:introduction}

Determinantal point processes have attracted a lot of research interests in the past a few decades due to their connections with a wide range of topics in both mathematics and physics. For example, they are related to random unitary matrices \cite{And:Gui:Zei,mehta}, Dyson Brownian motion \cite{Dyson1962}, free fermionic theory \cite{Dean:LeD:Maj:Sch2016,Dou:Maj:Sch2018}, quantum gravity \cite{DiF:Gin:Zinn1995,Stan:Witt2020}, and many others. For a comprehensive overview of their properties and applications, we refer the reader to the excellent surveys by Soshnikov \cite{Sosh2000}, Johansson \cite{Johan2006}, Borodin \cite{Borodin-Survey2011}, and the references therein.

Let $\mathcal{X}$ be a configuration such that $\#(\mathcal{X} \cap [a,b])$ is finite for any bounded interval $[a,b] \subset\mathbb{R}$. A \emph{point process} $\mathcal{P}$ is a probability measure on the space of all the configurations. It is called \emph{determinantal} if, for each $k \in \mathbb{N}$, its $k$-point correlation function $\rho_k(x_1, \ldots, x_k) $ admits the determinant form
\begin{equation} \label{corre-det}
\rho_k(x_1, \ldots, x_k) = \det[K(x_i,x_j)]_{i,j=1}^k,
\end{equation}
where $K(\cdot,\cdot)$ is the \emph{correlation kernel}. A determinantal point process is completely characterized by its correlation kernel $K(\cdot,\cdot)$; indeed, all statistical information about the process is encoded in this kernel. Three canonical universal limiting kernels are the sine, Airy, and Bessel kernels, which describe the local statistics in the bulk, at the soft edge, and at the hard edge of the spectrum, respectively. They are given explicitly as follows:
\begin{align}
K_{\sin}(u,v) &=\frac{\sin \pi(u-v)}{\pi(u-v)}, \\
 K_{\mathrm{Ai}}(u,v) &=
\frac{\operatorname{Ai}(u)\operatorname{Ai}'(v)
-\operatorname{Ai}'(u)\operatorname{Ai}(v)}
{u-v}, \label{eq: Airy-kernel-def} \\
  K_{\mathrm{Be}}^{ (\alpha)}(u,v)&=\frac{J_{\alpha}\bigl(\sqrt{u}\bigr)\sqrt{v}\,J'_{\alpha}\bigl(\sqrt{v}\bigr) - \sqrt{u}\,J'_{\alpha}\bigl(\sqrt{u}\bigr)J_{\alpha}\bigl(\sqrt{v}\bigr)}{2(u - v)},\qquad u,v>0. \label{eq: Bessel-kernel-def}
\end{align}
In the above formulas, $\operatorname{Ai}$ is the Airy function and $J_\alpha$ is the Bessel function of the first kind of order $\alpha > -1$.

One of the central topics in the study of point processes is to understand the spacing distribution of the particles. This can be characterized by the so-called \emph{gap probability}, which is the probability that there are no particles in a given Borel set $J \subset \mathbb{R}$. For determinantal point processes, it is a fundamental result that the gap probability can be expressed as the Fredholm determinant of an integral operator:
\begin{align}
\mathbb P\{\text{no particles in } J\} &
=
\det(I-\mathcal K \vert_{L^2(J)} ) \label{eq:Fred-det-def}  \\
& =
1+\sum_{n=1}^{\infty}\frac{(-1)^n}{n!}
\int_{J^n}
\det\!\left[K(u_i,u_j)\right]_{i,j=1}^{n}
\,\mathrm{d}u_1\cdots \mathrm{d}u_n,
\end{align}
where $\mathcal K \vert_{L^2(J)} $ is the trace-class operator acting on $L^2(J)$ with kernel $K(\cdot,\cdot)$. Although Fredholm determinants are formally defined via the infinite series above, Jimbo et al. \cite{jimbo1980} discovered in their seminal work that the logarithmic derivative of the sine-kernel determinant on $[-s,s]$ satisfies an integrable nonlinear ordinary differential equation, namely the \emph{Painlev\'e}  V equation. Later, Tracy and Widom \cite{tracywidom1994airy,tracywidom1994bessel} adapted this approach to relate the Fredholm determinants of the Airy and Bessel kernels to the Painlev\'e II and III equations, respectively. In particular, the celebrated Tracy-Widom distribution is defined by the following Airy-kernel determinant:
\begin{equation}\label{Tracy-Widom formula}
    F_{ \mathrm{TW} }(s):= \det(I- \mathcal{K}_{\Ai} \vert_{(s, \infty)} ) := \exp \left(-\int_s^\infty (x-s) q_{\rm{HM}}^2(x) \dd x \right),
\end{equation}
where $\mathcal{K}_{\Ai} \vert_{(s, \infty)}$ is  the trace-class operator acting on $L^2 (s, \infty)$ with the Airy kernels $K_{\Ai}$ given in \eqref{eq: Airy-kernel-def}. Here, $q_{\mathrm{HM}}(x)$ is the Hastings-McLeod solution to the Painlev\'{e} II equation
\begin{equation}\label{Painleve II}
q''(x)=x q(x)+2 q^3(x),
\end{equation}
which is smooth on $\mathbb{R}$ and satisfies the asymptotic boundary condition $ q_{\rm{HM}}(x) \sim \operatorname{Ai} (x) $ as  $x\rightarrow+\infty. $

Beyond these connections to integrable ODEs, the relationship between Fredholm determinants and PDEs has also been investigated \cite{jimbo1980,tracywidom1994airy,tracywidom1994bessel}. In these works, when the Borel set $J$ in \eqref{eq:Fred-det-def} comprises a union of disjoint intervals rather than a single interval like that in \eqref{Tracy-Widom formula}, the determinant can be considered as a function of the endpoints of the interval. Its evolution with respect to these boundary points is shown to be governed by systems of integrable PDEs. A distinct perspective on PDE connections emerges in the context of finite-temperature models, as first demonstrated in the remarkable work of Its et al. \cite{its1990}. There, the Fredholm determinant of a finite-temperature sine kernel on a single interval $[-s,s]$ was shown to satisfy an integrable PDE. More recently, based on the framework of \cite{its1990}, Cafasso et al. \cite{cafassoclaeysruzza2021} studied the Fredholm determinant of the finite-temperature deformed Airy kernel, defined as
\begin{align} \label{eq:deformed-Airy-Fred-det}
K_{\sigma,x,t}^{\mathrm{Ai}}(u,v)
=
\sqrt{\sigma\left(\frac{u}{t^{2/3}}+\frac{x}{t} \right)} \, K_{\mathrm{Ai}}(u,v) \sqrt{\sigma\left(\frac{v}{t^{2/3}}+\frac{x}{t} \right)}, 
\qquad
 t>0,
\end{align}
where \(\sigma:\mathbb{R}\to[0,1]\) is an admissible weight satisfying certain smoothness and decay conditions.
Denoting by
\(\mathcal K_{\sigma,x,t}^{\mathrm{Ai}} \vert_{L^2(\mathbb{R})} \) the trace-class operator acting on $L^2(\mathbb{R})$ with this kernel, and defining the determinant
\begin{equation} \label{eq:finite-deformed-airy-det}
Q^{\mathrm{Ai}}_{\sigma}(x,t)
=
\det\!\left(
I-\mathcal K_{\sigma,x,t}^{\mathrm{Ai}} \vert_{L^2(\mathbb{R})}
\right),
\end{equation}
it was shown in \cite{cafassoclaeysruzza2021} that the function
\begin{align}
\phi(x,t)
=
\partial_x^2\log Q^{\mathrm{Ai}}_{\sigma}(x,t)
+
\frac{x}{2t},
\end{align}
solves the Korteweg-de Vries (KdV) equation
\begin{align} \label{eq: kdv-eqn}
\phi_t + 2 \phi \phi_x +
\frac{1}{6} \phi_{xxx}
=
0.
\end{align}
A special case of the admissible weight $\sigma(\cdot)$ in \eqref{eq:deformed-Airy-Fred-det} is the Fermi weight:
\begin{equation}\label{sigma-kpz}
\sigma_{\mathrm{KPZ}}(x)
=
\frac{1}{1+e^{-x}}. 
\end{equation}
For this specific choice, the deformed Airy-kernel determinant corresponds to the generating function of the solution to the Kardar-Parisi-Zhang (KPZ) equation with narrow-wedge initial data \cite{amircorwinquastel2011}. It is worth noting that this celebrated KPZ distribution is also characterized by an integro-differential generalization of the Painlev\'e II equation.

Building upon the relationship between exact KPZ statistics and one-dimensional integrable PDEs, a profound generalization was recently achieved by Quastel and Remenik \cite{quastelremenik2022}. They discovered that for a wide range of initial data, the logarithmic derivative of the generating function for KPZ solutions evolves according to the Kadomtsev-Petviashvili (KP) equation:
\begin{equation} \label{eq:KP-eqn}
\phi_t + \phi_r \phi + \frac{1}{12} \phi_{rrr} +  \frac{1}{4} \partial_r^{-1} \phi_{xx} = 0.
\end{equation}
If $\phi$ is independent of $x$, this equation reduces to the KdV equation \eqref{eq: kdv-eqn} up to a trivial rescaling. Because the integral kernels governing these general initial data are no longer the classical kernels or their finite-temperature deformations, Quastel and Remenik observed that they instead satisfy a set of underlying differential identities. Specifically, they proved that if $\mathcal{K}_{t,x,r}\vert_{L^2(\mathbb{R}^+)}$ is a trace-class operator acting on $L^2(\mathbb{R}^+)$ with a kernel $K(u,v; t,x,r)$ that is real analytic in $t$, $x$, and $r$, and satisfies the following differential relations
\begin{eqnarray}
K_r(u,v) &=& (\partial_u + \partial_v)K(u,v), \label{eq:QR-diff-condition1} \\
K_x(u,v) &= & (\partial_v^2 -  \partial_u^2)K(u,v), \label{eq:QR-diff-condition2} \\
K_t(u,v) &=& - \frac{1}{3} (\partial_u^3 + \partial_v^3)K(u,v), \label{eq:QR-diff-condition3}
\end{eqnarray}
then the logarithmic derivative $\phi(t,x,r) = \partial^2_{r} \log \det(I-\mathcal{K}_{t,x,r}) $ satisfies the KP equation \eqref{eq:KP-eqn}; see \cite[Theorem 1.3]{quastelremenik2022}\footnote{In fact, it was shown that the logarithmic derivative of the $n$-space point distribution functions of the KPZ fixed point satisfies the matrix KP equation, which involves an $n \times n$ matrix kernel acting on an $n$-fold direct sum space. This reduces to the scalar KP equation \eqref{eq:KP-eqn} when $n=1$.} and the earlier work by P\"oppe \cite{poppesattinger1988,poppe1989}. The connection between Fredholm determinants and KP equations has also been developed in the context of periodic growth models. Baik, Liu and Silva \cite{baik2022} showed that the logarithmic second derivative of the Fredholm determinant associated with periodic totally asymmetric simple exclusion process (TASEP) satisfies the KP equation.\footnote{Later, Baik, Prokhorov and Silva \cite{baik2023} also extended this framework to matrix KP equations.}

Beyond the KP hierarchy, there exist other fundamental classes of integrable systems, such as the modified Kadomtsev-Petviashvili (mKP) equation and the two-dimensional Toda lattice (2DTL) equation. These systems are mathematically rich, which couples continuous differential flows (including negative flows) with a discrete lattice variable. In this paper, we systematically study the Fredholm determinant solutions to both the mKP and the 2DTL equations. Motivated by the structural framework of Quastel and Remenik, our first main result provides sufficient conditions on a general kernel $K(u,v)$ under which the corresponding Fredholm determinants solve the bilinear mKP and 2DTL equations. Furthermore, just as the KdV equation \eqref{eq: kdv-eqn} can be understood as an isospectral reduction of the KP equation \eqref{eq:KP-eqn}, we investigate the corresponding isospectral reductions of the mKP and 2DTL equations. We also provide sufficient conditions on the kernel $K(u,v)$ such that the corresponding Fredholm determinants solve the modified Korteweg-de Vries (mKdV), the modified Camassa-Holm (mCH), and the focusing nonlinear Schr\"odinger (fNLS) equations. As our second main result, we investigate the \emph{non-isospectral} reduction of the mKP and 2DTL equations, which yields the negative KdV equation. Notably, this equation is directly connected to the PDE recently derived by Ruzza \cite{ruzza2025} for finite-temperature Bessel kernel determinants. 

The rest of the paper is organized as follows. In Sec. \ref{sec:main-theorem}, we state our main results about the Fredholm determinants solution to the  mKP and 2DTL hierarchies, as well as their isospectral and non-isospectral reductions. In Sec. \ref{sec:isospectral}, some explicit examples for the Fredholm determinant solutions of several classical (1+1)-dimensional integrable systems via isospectral reductions, are listed, including the mKdV equation and the mCH equation. We prove our main theorem in  Sec. \ref{sec:proof-main}. The non-isospectral reductions of the mKP and 2DTL equations and their relation to the Bessel-kernel determinant are provided in Sec. \ref{sec:nonisospectral}.  In App. \ref{pf-fnls}, we give an alternative Fredholm determinant representation for solutions of the fNLS equation, which we illustrate with examples of breather gas solutions. Finally, in  App. \ref{non-isospectral}, we establish non-isospectral Lax representations for the differential equations governing the finite-temperature deformed Bessel kernel determinant.

\section{Statement of results}\label{sec:main-theorem}

\subsection{The mKP-2DTL theorem and isospectral reductions}
\label{subsec:main-hierarchy-results}

We work on the half-line $\R_-=(-\infty,0]$, and consider the following four bilinear equations
 \begin{gather}(D_{x_1}^2+D_{x_2})\tau_n\cdot\tau_{n+1}=0,\label{eq:mkp-first}\\
 (D_{x_1}^3-3D_{x_1}D_{x_2}-4D_{x_3})\tau_n\cdot\tau_{n+1}=0,\label{eq:mkp-second}\\
 \bigl[D_{x_{-1}}(D_{x_1}^2+D_{x_2})-4D_{x_1}\bigr]
 \tau_n\cdot\tau_{n+1}=0,\label{eq:mkp-negative}\\
 \left(\frac12D_{x_1}D_{x_{-1}}-1\right)\tau_n\cdot\tau_n
 =-\tau_{n+1}\tau_{n-1}.\label{eq:2dtl-central} 
 \end{gather}
 Here Hirota's bilinear operators are defined by
\begin{equation}\label{eq:hirota-def}
 D_x^mD_t^nf\cdot g=
 (\partial_x-\partial_{x'})^m(\partial_t-\partial_{t'})^n
 f(x,t)g(x',t')\big|_{x'=x,\,t'=t}.
\end{equation}
The first three equations correspond to the first two positive flows and the first negative flow of the mKP hierarchy, respectively, while Eq. \eqref{eq:2dtl-central} is the bilinear
2DTL equation \cite{hirota2004,jimbomiwa1983,dickey2003}. Our first main result establishes Fredholm determinant solutions to these bilinear equations, as summarized in the following theorem.

\begin{theorem}[Fredholm determinant solutions to the mKP and 2DTL equations]\label{thm:main}
For each $n\in\mathbb Z$, let $\mathcal K^{(n)}(\mathbf x)$
be a trace-class integral operator on $L^2(\mathbb R_-)$,
with kernel $K^{(n)}(u,v;\mathbf x)$  and $\mathbf x = (x_{-1}, x_1, x_2, x_3)$.
Assume sufficient regularity and decay so that all kernel derivatives and primitives used below define trace-class operators, locally uniformly in $\mathbf x$. Suppose further that the kernel satisfies the following differential and shift relations:
\begin{align}
 \partial_{x_1} K^{(n)}(u,v)&=(\partial_u+\partial_v)K^{(n)}(u,v),\label{eq:flow-x1}\\
\partial_{x_2} K^{(n)}(u,v)&=(\partial_u^2-\partial_v^2)K^{(n)}(u,v),\label{eq:flow-x2}\\
 \partial_{x_3} K^{(n)}(u,v)&=(\partial_u^3+\partial_v^3)K^{(n)}(u,v),\label{eq:flow-x3}\\
 \partial_{x_{-1}}K^{(n)}(u,v)&=(\partial_u^{-1}+\partial_v^{-1})K^{(n)}(u,v),\label{eq:flow-xm1}\\
 \partial_u K^{(n)}(u,v)&+\partial_v K^{(n+1)}(u,v)=0,\label{eq:adjacent-order}
\end{align}
where $\partial_u^{-1}$ and $\partial_v^{-1}$ denote primitives normalized to vanish at $-\infty$. Then, the Fredholm determinant
\begin{equation}\label{eq:tau-main}
 \tau_n(\mathbf x)=\det(I+\mathcal{K}^{(n)}(\mathbf x)),
\end{equation}
satisfies the bilinear equations Eqs. \eqref{eq:mkp-first}-\eqref{eq:2dtl-central}.
\end{theorem}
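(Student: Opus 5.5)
Our approach follows the Pöppe / Quastel--Remenik operator calculus, adapted to the half-line $\mathbb R_-$ and to the discrete index $n$. We write $K=K^{(n)}$, $\tilde K=K^{(n+1)}$, $R=(I+K)^{-1}$, and $\log\tau_n=\Tr\log(I+K)$. The flow relations \eqref{eq:flow-x1}--\eqref{eq:flow-xm1} then give derivatives of $\log\tau_n$ as traces of commutator-type expressions. For example, $\partial_{x_1}K=[D,K]+\{\text{boundary term}\}$ with $D=\partial_u$ on $\mathbb R_-$. Integrating by parts on $(-\infty,0]$, every such expression collapses to a boundary evaluation at $u=v=0$. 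This gives, e.g., $\partial_{x_1}\log\tau_n=-\bigl(R K\bigr)(0,0)$ up to sign conventions. Analogous formulas hold for $x_2,x_3,x_{-1}$, the latter using the primitive $\partial^{-1}$ normalized at $-\infty$.

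The key structural step is to interpret the adjacent-order relation \eqref{eq:adjacent-order}, $\partial_uK+\partial_v\tilde K=0$, as an intertwining $D K=\tilde K D$ modulo a rank-one boundary term at $0$. From this we want a rank-one relation between the resolvents, of the form
\begin{equation*}
\frac{\tau_{n+1}}{\tau_n}=1+\langle\delta_0,\,(\text{something built from }R)\,\mathbf 1\rangle ,
\end{equation*}
that is, $\det(I+\tilde K)/\det(I+K)=\det\bigl(I+(I+K)^{-1}(\tilde K-K)\bigr)$. Here $\tilde K-K$ should be expressible through $\partial^{-1}$ applied to a rank-one kernel. The plan is to introduce the wave-type functions $\psi_n=(I+K)^{-1}K\delta_0$ and $\phi_n=\delta_0^{T}(I+K)^{-1}$, evaluated along $\mathbb R_-$. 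We then show that $\tau_{n+1}/\tau_n$ and the mixed Hirota derivatives are expressible via $\psi_n(0)$, $\phi_n(0)$ and their $u$-derivatives at $0$. This reproduces the standard Grammian / Sato structure in which $\tau_n$ plays the role of a Wronskian-shifted tau function.

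With these ingredients, each bilinear equation reduces to an identity among boundary values. For \eqref{eq:mkp-first} and \eqref{eq:mkp-second} we differentiate the ratio $\tau_{n+1}/\tau_n$ in $x_1,x_2,x_3$ using \eqref{eq:flow-x1}--\eqref{eq:flow-x3}. Each derivative of $R$ is $-R(\partial K)R$, and $\partial_u^k\pm\partial_v^k$ is rewritten as commutators with $D^k$ plus boundary terms. We then match terms, exactly as in the proof of \cite[Theorem 1.3]{quastelremenik2022} but with $D_{x}$ acting on the pair $(\tau_n,\tau_{n+1})$. For the negative flow \eqref{eq:mkp-negative} and the 2DTL equation \eqref{eq:2dtl-central}, we use \eqref{eq:flow-xm1} in the form $\partial_{x_{-1}}K=[D^{-1},K]$ up to boundary terms. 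Combining it with the $x_1$ flow gives $\partial_{x_1}\partial_{x_{-1}}\log\tau_n$ as a product of boundary values. Identifying this product with $\tau_{n+1}\tau_{n-1}/\tau_n^2-1$ via the rank-one ratio formulas for both $n\to n+1$ and $n-1\to n$ yields \eqref{eq:2dtl-central}.

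The main obstacle is the second step: deriving a clean rank-one ratio formula for $\tau_{n+1}/\tau_n$ from \eqref{eq:adjacent-order}. Since $D$ is not invertible on $\mathbb R_-$ without the boundary normalization, the boundary terms must be tracked carefully. We would first try the factorization $I+\tilde K=D(I+K)D^{-1}+(\text{rank one})$ with $D^{-1}$ the primitive from $-\infty$, and check trace-class validity using the assumed decay. The remaining computations for \eqref{eq:mkp-second} and \eqref{eq:mkp-negative} are lengthy but routine bookkeeping.
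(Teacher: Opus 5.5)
Your outline is viable, and for the 2DTL equation it is essentially the paper's argument.

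\emph{The 2DTL part.} The paper works on the weighted spaces $H^j_\beta$, where $\mathsf D$ and $\mathcal V$ are bounded mutual inverses; neither is bounded on $L^2(\R_-)$, so this is what makes your Sylvester step legitimate. From \eqref{eq:toda-rank-one-operator} it derives $\tau_{n\pm1}/\tau_n=1\pm\delta_\pm$. It then writes $\mathcal K_{x_{-1}}=\mathcal V\mathcal K-\mathcal K\mathcal V-\mathbf v_+\otimes\ell_0$, where the rank-one correction is itself a consequence of \eqref{eq:adjacent-order}, and obtains $w_{x_{-1}}=\delta_--\delta_++\delta_-\delta_+$.

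\emph{The three mKP members: a genuinely different route.} The paper never uses the ratio formula for these equations.
\begin{itemize}
\item It works with $w=(\log\tau_n)_{x_1}$, $w^+=(\log\tau_{n+1})_{x_1}$ and both resolvents $U,U^+$.
\item It uses mixed bracket identities such as $[U\mathcal TU^+]=\Tr(U\mathcal T_r+\mathcal T_lU^+)$. Their derivation is exactly where $\mathscr A=\mathcal K_l+\mathcal K_r^+=0$ enters.
\item For \eqref{eq:mkp-first}--\eqref{eq:mkp-second}, every residual becomes a trace containing $\mathscr A$ or its derivatives.
\item For \eqref{eq:mkp-negative}, it uses $\mathcal K^{(n+1)}_{x_{-1}r}=\mathcal K^{(n+1)}-\mathcal K^{(n)}$ together with $U^+(\mathcal K^+-\mathcal K)U=U-U^+$.
\end{itemize}

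You instead treat $h=\tau_{n+1}/\tau_n=1+g(0)$, with $g=U\mathbf v_+$, as an eigenfunction for the potential built from $\tau_n$. This needs only the single resolvent $U$ and Tracy--Widom type formulas such as $U_{x_1}=[\mathsf D,U]-\mathbf v_-\otimes\operatorname{ev}_0U$.

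This does close for \eqref{eq:mkp-first}. Set $\mathbf z=U\partial_vK(\cdot,0)$, with primes denoting $\partial_u$. The flows and \eqref{eq:adjacent-order} give $g_{x_1}=g'-h\mathbf v_-$, $g_{x_2}=g''-g'(0)\mathbf v_-+h\mathbf z$ and $w_{x_1}=\mathbf v_-'(0)-w^2+\mathbf z(0)$. These combine to $h_{x_2}-h_{x_1x_1}=2w_{x_1}h$, which is \eqref{eq:mkp-first}.

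\emph{What each route buys.} Yours gives transparency: each mKP member becomes a linear eigenfunction equation, as in the proof of Lemma~\ref{lem:fnls-mother}. It also uses one rank-one mechanism for all four equations. The cost is commutators with the unbounded $\mathsf D^k$, endpoint functionals $\operatorname{ev}_0\mathsf D^k$, and a growing family of auxiliary vectors $U\partial_v^jK(\cdot,0)$. The paper's bracket calculus stays with continuous kernels and traces, at the price of a long ledger.

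Either way, you still have to carry out \eqref{eq:mkp-second} and \eqref{eq:mkp-negative}. They do not follow by quoting Quastel--Remenik, whose argument involves a single $\tau$.

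\emph{A step to discard.} You suggest $\mathcal K^{(n+1)}-\mathcal K^{(n)}$ is of rank one, or a primitive of a rank-one kernel. Under the hypotheses of the theorem this is false in general. For the separated kernel \eqref{eq:separated-model} the difference equals $-\iint(-p/q)^n\frac{p+q}{q}\,e^{pu+\theta(p)}e^{qv+\widehat\theta(q)}\omega(p,q)\,\dd p\,\dd q$. This is rank one for Cauchy-type weights $\omega(p,q)=a(p)b(q)/(p+q)$, but not for a generic $\omega$. So $\det\bigl(I+U(\tilde K-K)\bigr)$ is not a rank-one determinant. The rank-one structure appears only after conjugating by $\mathsf D$ and $\mathcal V$, which is your fallback and exactly \eqref{eq:toda-rank-one-operator}.
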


\begin{remark}
It is well-known that the mKP and 2DTL hierarchies consist of an infinite sequence of commuting differential members; see \cite{hirota2004,jimbomiwa1983,dickey2003}. In the above theorem, we restrict our attention to the first two positive members (associated with the variables $x_1, x_2, x_3$) and the first negative member ($x_{-1}$) of the mKP hierarchy, alongside the central equation of the 2DTL hierarchy. By incorporating additional time variables into the parameter vector $\mathbf{x}$ and imposing the corresponding higher-order differential relations on the kernel $K^{(n)}(u,v;\mathbf{x})$, one can systematically extend this Fredholm determinant construction to higher-order members within these hierarchies.
\end{remark}

\begin{remark}
Thm. \ref{thm:main} generalizes the seminal results of Quastel and Remenik \cite{quastelremenik2022}. While they showed that the logarithmic derivative of the Fredholm determinant solves the nonlinear KP equation \cite[Theorem 1.3]{quastelremenik2022}, they also noted that the determinants themselves satisfy the equivalent Hirota bilinear equation \cite[Remark 1.2(4)]{quastelremenik2022}. We state our main theorem for the bilinear system (2.1)–(2.4), in which the Fredholm determinants enter directly as $\tau$-functions. This formulation provides a common starting point for the reductions considered below. However, in the subsequent corollaries concerning reductions to classical equations, where the nonlinear differential forms are concise and standard, we will revert to explicit differential equations.
\end{remark}

\begin{remark}
Quastel and Remenik's investigation was originally motivated by the $n$-space point distribution functions of the KPZ fixed point. Consequently, they established their results for the matrix KP equation, which involves an $n \times n$ matrix kernel acting on an $n$-fold direct sum space and reduces to the scalar KP equation \eqref{eq:KP-eqn} when $n=1$.  We also note that, in their study of the polynuclear growth model,
Matetski, Quastel, and Remenik \cite{MatetskiQuastelRemenik}
derived a matrix (non-Abelian) 2DTL structure from Fredholm determinants
associated with discrete matrix kernels acting on an $n$-fold
direct sum of $\ell^2(\mathbb Z_{>0})$. We expect that our results in Thm. \ref{thm:main} can be generalized to the matrix mKP and matrix 2DTL equations as well. However, for the sake of clarity and accessibility, we restrict ourselves to the scalar case.
\end{remark}

To illustrate Thm. \ref{thm:main}, we provide a concrete example of a kernel that satisfies the requisite flow equations.
\begin{example}
Let $\Gamma$ and $\widehat{\Gamma}$ be oriented contours, and $\omega(p,q)$ be a complex-valued weight function. For each $n\in\mathbb Z$, we define the kernel $K^{(n)}$ by
\begin{align}\label{eq:separated-model}
K^{(n)}(u,v;\mathbf x)=\int_\Gamma
 \int_{\hat{\Gamma}} \left(-\frac pq\right)^n e^{pu+\theta(p;\mathbf x)}e^{qv+\widehat\theta(q;\mathbf x)}\omega(p,q)\dd p\dd q,
\end{align}
where the phase functions are 
\begin{align*}
 \theta(p;\mathbf x)&=px_1+p^2x_2+p^3x_3+p^{-1}x_{-1},\\
 \widehat\theta(q;\mathbf x)&=qx_1-q^2x_2+q^3x_3+q^{-1}x_{-1}.
\end{align*}
We assume that $\Gamma$, $\widehat{\Gamma}$, and $\omega$ are chosen such that $K^{(n)}$ defines a trace-class integral operator, ensuring its Fredholm determinant is well-defined. By direct verification, $K^{(n)}$ satisfies the differential relations Eqs. \eqref{eq:flow-x1}-\eqref{eq:adjacent-order}. Consequently, the associated Fredholm determinant $\tau_n$ satisfies all four bilinear equations established in Thm. \ref{thm:main}.
\end{example}

As a consequence of Thm. \ref{thm:main}, we can obtain Fredholm determinant solutions for several classical (1+1)-dimensional integrable systems through reductions.

\begin{corollary}[Fredholm determinant solution to the mKdV equation]\label{thm:mkdv}
Let $\mathcal{K}=\mathcal{K}(x,t)$ be a trace-class integral operator on $L^2(\mathbb{R}_-)$ with kernel $K(u,v)$ satisfying
\begin{align}
& K_x=(\partial_u+\partial_v)K,\label{eq:mkdv-kernel-x}\\
 &K_t=-4(\partial_u^3+\partial_v^3)K,\label{eq:mkdv-kernel-t}\\
 &(\partial_u-\partial_v)K=0.\label{eq:mkdv-kernel-reduction}
\end{align}
Defining the $\tau$-functions as
\begin{equation}\label{eq:mkdv-tau-pair}
 \tau_+=\det(I+\mathcal{K}),\qquad \tau_-=\det(I-\mathcal{K}),
\end{equation}
then
\begin{equation}\label{eq:mkdv-field}
 \phi=\ii\,\partial_x\log\frac{\tau_+}{\tau_-}
\end{equation}
solves the mKdV equation
\begin{equation}\label{eq:mkdv}
 \phi_t+6\phi^2\phi_x+\phi_{xxx}=0.
\end{equation}
Furthermore, if the kernel $K(\xi,\eta)$ is purely imaginary, the resulting solution $\phi$ is real-valued wherever $\tau_+\tau_- \ne 0$.
\end{corollary}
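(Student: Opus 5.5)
Our plan is to derive the corollary from Thm. \ref{thm:main} via a two-periodic reduction in the lattice variable $n$. Set $K^{(n)}=(-1)^nK$ for $n\in\mathbb Z$. The reduction condition \eqref{eq:mkdv-kernel-reduction}, $\partial_uK=\partial_vK$, then gives
\[
\partial_uK^{(n)}+\partial_vK^{(n+1)}=(-1)^n(\partial_u-\partial_v)K=0,
\]
which is exactly \eqref{eq:adjacent-order}. The flows are handled as follows.
\begin{itemize}
\item Take $x_1=x$. Relation \eqref{eq:flow-x1} is then \eqref{eq:mkdv-kernel-x}.
\item Set $x_2=0$. This is consistent with \eqref{eq:flow-x2}, because the reduction gives $(\partial_u^2-\partial_v^2)K=(\partial_u+\partial_v)(\partial_u-\partial_v)K=0$. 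So the $x_2$-flow is trivial and we may regard $K$ as independent of $x_2$.
\item Take $x_3=-t/4$. Relation \eqref{eq:flow-x3} then follows from \eqref{eq:mkdv-kernel-t}.
\item The negative flow is not needed. We freeze $x_{-1}$ and use only \eqref{eq:mkp-first} and \eqref{eq:mkp-second}. Inspecting the proof of Thm. \ref{thm:main} should show that these two equations use only \eqref{eq:flow-x1}--\eqref{eq:flow-x3} and \eqref{eq:adjacent-order}. Alternatively, one can introduce a formal $x_{-1}$-dependence by the primitive flow; this is compatible since the operators $\partial_u^{\pm1}+\partial_v^{\pm1}$ commute.
\end{itemize}
With this choice, $\tau_{2m}=\tau_+$ and $\tau_{2m+1}=\tau_-$.

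Applying \eqref{eq:mkp-first} with $D_{x_2}$ acting trivially (the $\tau$'s are independent of $x_2$) gives $D_x^2\tau_+\cdot\tau_-=0$. Applying \eqref{eq:mkp-second} with $D_{x_2}=0$ and $D_{x_3}=-4D_t$ gives $(D_x^3+16D_t)\tau_+\cdot\tau_-=0$. Here I will check the scaling carefully against the stated coefficient $-4$ in \eqref{eq:mkdv-kernel-t}, since the normalization of $t$ must produce exactly $\phi_t+6\phi^2\phi_x+\phi_{xxx}=0$; if needed, we adjust the identification $x_3=ct$. The pair
\[
D_x^2\,\tau_+\cdot\tau_-=0,\qquad (D_t+D_x^3)\,\tau_+\cdot\tau_-=0
\]
is the standard bilinear form of mKdV with $\phi=\ii\,\partial_x\log(\tau_+/\tau_-)$.

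The key step is converting the bilinear equations to \eqref{eq:mkdv}. Write $\tau_\pm=e^{\rho\pm\ii\psi}$. This is local, wherever $\tau_+\tau_-\neq0$, so the logarithms make sense locally. Then $\log(\tau_+/\tau_-)=2\ii\psi$, so $\phi=-2\psi_x$. Use the standard formulas for $D$-operators on exponentials, with $u=\log\tau_+-\log\tau_-$ and $w=\log\tau_++\log\tau_-$:
\[
\frac{D_x^2\tau_+\cdot\tau_-}{\tau_+\tau_-}=w_{xx}+u_x^2,\qquad \frac{D_x^3\tau_+\cdot\tau_-}{\tau_+\tau_-}=u_{xxx}+3u_xw_{xx}+u_x^3.
\]
The first equation gives $w_{xx}=-u_x^2$. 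Substituting this into the second eliminates $w$ and yields $u_t+u_{xxx}-2u_x^3=0$. Differentiating in $x$ and setting $\phi=\ii u_x$ gives $\phi_t+\phi_{xxx}+6\phi^2\phi_x=0$. The sign of the cubic term is where the factor $\ii$ in \eqref{eq:mkdv-field} is essential, and I will verify it explicitly.

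For reality, suppose $K$ is purely imaginary, say $K=\ii L$ with $L$ real. Then $\overline{\det(I+\mathcal K)}=\det(I+\overline{\mathcal K})=\det(I-\mathcal K)$, so $\tau_-=\overline{\tau_+}$. Hence $\tau_+/\tau_-$ has modulus one, and $\log(\tau_+/\tau_-)$ is purely imaginary wherever it is defined. Its $x$-derivative is therefore imaginary, and $\phi$ is real wherever $\tau_+\tau_-\neq0$.

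The main obstacle is bookkeeping rather than conceptual. It consists of getting the time scaling between $x_3$ and $t$ exactly right, and justifying that the $x_{-1}$ dependence (and the trivial $x_2$ dependence) can be imposed without contradiction. I will treat $x_2$ and $x_{-1}$ as frozen. I will then argue that the proofs of \eqref{eq:mkp-first} and \eqref{eq:mkp-second} in Sec. \ref{sec:proof-main} only use the corresponding flows. The trace-class hypotheses of Thm. \ref{thm:main} transfer directly, since $K^{(n)}=\pm K$.
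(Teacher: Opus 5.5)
Your route is the same as the paper's. You set $K^{(n)}=(-1)^nK$, so that \eqref{eq:mkdv-kernel-reduction} yields both the adjacent-order relation and a trivial $x_2$-flow. You then use only \eqref{eq:mkp-first}--\eqref{eq:mkp-second} to get $D_x^2\tau_+\cdot\tau_-=0$ and $(D_x^3+D_t)\tau_+\cdot\tau_-=0$. From these you derive $\ell_t+\ell_{xxx}-2\ell_x^3=0$ for $\ell=\log(\tau_+/\tau_-)$, and $\tau_-=\overline{\tau_+}$ gives reality.

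The one real error is the time identification you flagged. It must be $x_3=-4t$, so that $K_t=-4K_{x_3}=-4(\partial_u^3+\partial_v^3)K$ and $-4D_{x_3}=D_t$. Your choice $x_3=-t/4$ gives $K_t=-\tfrac14(\partial_u^3+\partial_v^3)K$, contradicting \eqref{eq:mkdv-kernel-t}, and leads to the wrongly scaled $(D_x^3+16D_t)\tau_+\cdot\tau_-=0$. With $x_3=-4t$ your argument is complete.
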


\begin{corollary}[Fredholm determinant solution to the mCH equation]
\label{cor:mch}
Let $\mathcal K^{(n)}=\mathcal K^{(n)}(y,s)$, $n=0,1$, be
trace-class integral operators on $L^2(\mathbb{R}_-)$ with kernels
$K^{(n)}(u,v)$ satisfying 
\begin{align}
 &K_y^{(n)}
 = (\partial_u+\partial_v)K^{(n)},\label{mch-1}\\
 &K_s^{(n)}
 = \frac12\left(
 \partial_u^{-1}
 +\partial_v^{-1}
 \right)K^{(n)},\label{mch-2}\\
&\partial_u K^{(0)}
 +\partial_vK^{(1)}=0,\label{mch-3}\\
  &(\partial_u-\partial_v-1)K^{(n)}=0,\label{mch-4}
\end{align}
Define the $\tau$-functions as
\begin{equation}
 \tau_n=\det(I+\mathcal K^{(n)}),\qquad n=0,1.\label{tau-mch}
\end{equation}
Assume that $\tau_0,\tau_1>0$ and that $\tau_n-1$, together
with the required derivatives, tend to zero as $y\to-\infty$.
Then the parametric formula
\begin{equation}\label{hodo-mch}
 x=y+s+2\log\frac{\tau_1}{\tau_0},\qquad
 t=s,\qquad
 \phi=1-\partial_y\partial_s\log(\tau_0\tau_1)
\end{equation}
defines a solution of the mCH equation
\begin{equation}\label{eq:mch}
 m_t+\bigl[m(\phi^2-\phi_x^2)\bigr]_x=0,\qquad m=\phi-\phi_{xx},
\end{equation}
wherever $x_y\ne0$.
\end{corollary}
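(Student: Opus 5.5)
The plan is to obtain the corollary as a reduction of Thm.~\ref{thm:main}, using only the two mKP bilinear equations \eqref{eq:mkp-first} and \eqref{eq:mkp-negative} for the pair $(\tau_0,\tau_1)$, followed by a hodograph (reciprocal) transformation.

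\emph{Step 1: embedding in the theorem.} I identify $x_1=y$ and $x_{-1}=s/2$. Then \eqref{mch-1} is \eqref{eq:flow-x1}, \eqref{mch-2} is \eqref{eq:flow-xm1} (since $\partial_s=\tfrac12\partial_{x_{-1}}$), and \eqref{mch-3} is \eqref{eq:adjacent-order} with $n=0$. The variable $x_2$ is introduced by setting $\widetilde K^{(n)}(u,v;x_{-1},x_1,x_2)=K^{(n)}(u,v;y=x_1+x_2,s=2x_{-1})$. The constraint \eqref{mch-4} gives
\[
(\partial_u^2-\partial_v^2)K^{(n)}=(\partial_u+\partial_v)(\partial_u-\partial_v)K^{(n)}=(\partial_u+\partial_v)K^{(n)}=\partial_yK^{(n)},
\]
so \eqref{eq:flow-x2} holds. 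The $x_3$ flow does not enter \eqref{eq:mkp-first} or \eqref{eq:mkp-negative}, and these bilinear identities for the pair $(\tau_n,\tau_{n+1})$ only involve $K^{(n)},K^{(n+1)}$. I will check in the proof of Thm.~\ref{thm:main} that the argument for these two equations uses no other members of the family; if needed, one can append a trivial $x_3$ dependence.

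\emph{Step 2: reduced bilinear system.} Since $D_{x_2}$ acts as $D_y$ on functions of $x_1+x_2$, the theorem yields
\[
(D_y^2+D_y)\tau_0\cdot\tau_1=0,\qquad \bigl[D_s(D_y^2+D_y)-2D_y\bigr]\tau_0\cdot\tau_1=0 .
\]
This is the known bilinear form of the mCH equation (Matsuno-type). I introduce $\sigma=\log(\tau_1/\tau_0)$ and $\rho=\log(\tau_0\tau_1)$ and convert the system via the standard formulas for $D$-operators in terms of logarithmic derivatives. The first equation becomes $\sigma_{yy}+\sigma_y^2+\rho_{yy}+\sigma_y=0$ up to sign conventions that I will fix carefully. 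The second becomes a relation involving $\sigma_s$, $\rho_{ys}$ and $\sigma_y$; after using the first equation, it should take the form of a conservation law $\partial_s(\cdot)=\partial_y(\cdot)$.

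\emph{Step 3: hodograph transformation.} From \eqref{hodo-mch} we have $x_y=1+2\sigma_y$ and $x_s=1+2\sigma_s$, and I will express $\partial_x=x_y^{-1}\partial_y$ and $\partial_t=\partial_s-(x_s/x_y)\partial_y$. Using the conservation law from Step 2, I expect $x_y$ to play the role of $1/m$ (or of $m$ up to a factor), that is, $m\,x_y=\mathrm{const}$. The mCH equation in the form $m_t+[m(\phi^2-\phi_x^2)]_x=0$ is then equivalent to $\partial_s(1/m)=\partial_y(\phi^2-\phi_x^2)$ in reciprocal variables. It therefore suffices to verify two identities: that $\phi=1-\rho_{ys}$ satisfies $\phi-\phi_{xx}=m$, and that $x_s=\phi^2-\phi_x^2$ in suitable units. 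I will prove these by eliminating $\sigma_s$ and $\rho_{ys}$ with the two bilinear relations. The boundary assumptions (that $\tau_n\to1$ as $y\to-\infty$) fix the integration constants arising when passing from $x_s$ and $x_y$ to these identities. Positivity of $\tau_0,\tau_1$ makes the logarithms real and well defined, and $x_y\neq0$ makes the change of variables locally invertible.

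\emph{Expected obstacle.} The main obstacle is Step 3: the nonlinear bookkeeping that shows the reciprocal-transformed bilinear system is exactly mCH with the normalizations $x=y+s+2\sigma$ and $\phi=1-\rho_{ys}$, including the correct constants. I would first derive the identity for $m$, namely $m=1/x_y$, differentiating the first bilinear equation in $s$ and substituting the second. I would then obtain $\phi_x$ as $x_y^{-1}\phi_y$ and verify the flux identity directly.
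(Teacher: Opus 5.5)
You follow the paper's route: embed into Thm.~\ref{thm:main} with $x_1=y$ and $x_{-1}=s/2$, use \eqref{mch-4} to turn the $x_2$-flow into $\pm$ the $y$-flow, reduce \eqref{eq:mkp-first} and \eqref{eq:mkp-negative}, and then undo the reciprocal transformation. The paper only cites Hu--Yin--Wu and Feng--Hu for that last step; checking it by hand is fine. The genuine gap is a sign, and it makes your Step~3 fail as set up.

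With \eqref{mch-4} as printed, $(\partial_u-\partial_v)K=K$, so $\partial_{x_2}=+\partial_y$. You correctly obtain
$(D_y^2+D_y)\tau_0\cdot\tau_1=0$ and $[D_s(D_y^2+D_y)-2D_y]\tau_0\cdot\tau_1=0$.
The paper's proof instead sets $x_1=y+T$, $x_2=T$, $\partial_T=0$, i.e.\ $\partial_{x_2}=-\partial_y$. This requires $(\partial_u+\partial_v)(\partial_u-\partial_v+1)K^{(n)}=0$, as in the proof's display and the gauge $e^{-\frac12(u-v)}$ of Example~\ref{ex:mch-separated}. It yields
$(D_y^2+D_y)\tau_1\cdot\tau_0=0$ and $(2D_sD_y^2+2D_sD_y-4D_y)\tau_1\cdot\tau_0=0$.
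Your system is this one with $\tau_0\leftrightarrow\tau_1$, and only the paper's version is compatible with $x=y+s+2\log(\tau_1/\tau_0)$.

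Concretely, set $\sigma=\log(\tau_1/\tau_0)$ and $\rho=\log(\tau_0\tau_1)$. Note that $D_y^2f\cdot g/(fg)=\rho_{yy}+\sigma_y^2$, so the $\sigma_{yy}$ term in your Step~2 should not be there.

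\emph{Paper's pair.} It gives $\rho_{yy}=-\sigma_y-\sigma_y^2$ and $\sigma_{yys}+(1+2\sigma_y)\rho_{ys}-2\sigma_y=0$. With $x_y=1+2\sigma_y$ this means $\phi\,x_y=1+\sigma_{yys}$ and $\phi_y=\sigma_{ys}x_y$. Hence $\phi_x=\sigma_{ys}$, $m=\phi-\phi_{xx}=1/x_y$, and $\partial_y(x_s-\phi^2+\phi_x^2)=0$, with the constant fixed to zero as $y\to-\infty$. The mCH equation is then just $\partial_s x_y=\partial_y x_s$ rewritten, which is exactly your plan.

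\emph{Your pair.} It gives $\rho_{yy}=\sigma_y-\sigma_y^2$, $\phi=(1-\sigma_{yys})/(1-2\sigma_y)$ and $\phi_y=-\sigma_{ys}(1-2\sigma_y)$. The same reciprocal identities therefore hold for $x=y+s-2\sigma$, not for the printed one. With the printed $x_y=1+2\sigma_y$ you get $m\,x_y=1+4\sigma_y+O(\sigma^2)$. So the identity $m\,x_y=\mathrm{const}$ on which your Step~3 rests fails, and no later ``sign convention'' repairs it.

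You must either take $(\partial_u-\partial_v+1)K^{(n)}=0$ in \eqref{mch-4} (i.e.\ embed with $y=x_1-x_2$), or replace $\log(\tau_1/\tau_0)$ by $\log(\tau_0/\tau_1)$ in \eqref{hodo-mch}.

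A minor point: a ``trivial'' $x_3$-dependence would contradict \eqref{eq:flow-x3}. What you actually need is only that the paper's proofs of \eqref{eq:mkp-first} and \eqref{eq:mkp-negative} never use the $x_3$-relation or the kernels with $n\neq0,1$.
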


In Sec. \ref{sec:isospectral}, we provide several explicit Fredholm determinant solutions for the equations in the above corollaries. These include soliton and soliton gas solutions for the mKdV equation in Examples \ref{ex:mkdv-one-soliton} and \ref{ex:mkdv-hankel}, a soliton gas solution for the mCH equation in Example \ref{ex:mch-separated}.

\subsection{Non-isospectral reductions: the Bessel-kernel determinant}
\label{subsec:main-nonautonomous-results}

Our second main result is motivated by the relationship of the KPZ narrow wedge solution to the KP and KdV equations. Recall that the generating function of the KPZ narrow wedge solution admits the following Fredholm determinant representation (cf. \cite{amircorwinquastel2011})
\begin{align} \label{eq:nw-KPZ-Fred}
    G_{\text{nw}}(r,x,t)=\det (I-\mathcal{K}_{\text{nw}}|_{L^2(\mathbb{R}_+)}),
\end{align}
where the kernel of $\mathcal{K}_{\text{nw}}$ is given by
\begin{align}
K_{\text{nw}}(u,v)=\int_\R\frac{t^{-2/3}}{1+e^y}
\Ai\!\left(t^{-1/3}(u+r-y)+t^{-4/3}x^2\right)\Ai\!\left(t^{-1/3}(v+r-y)+t^{-4/3}x^2\right)\dd y.\label{nwKPZ-ker}
\end{align}
Conjugating this kernel by the factor $e^{(v-u)x/t}$ leaves the underlying Fredholm determinant unchanged. Because the resulting conjugated kernel $e^{(v-u)x/t}K_{\text{nw}}(u,v)$ satisfies the differential conditions Eqs. \eqref{eq:QR-diff-condition1}-\eqref{eq:QR-diff-condition3}, Quastel and Remenik established in \cite[Thm. 3.1]{quastelremenik2022} that $G_{\text{nw}}(r,x,t)$ serves as a $\tau$-function for the KP equation \eqref{eq:KP-eqn}. Furthermore, they also suggested the following coordinate transformation
\begin{align} \label{eq:QR-change-of-variables}
    X=r+\frac{x^2}{t},\ T=t,
\end{align}
which maps the kernel $K_{\text{nw}}$ to
\begin{align}
   \widetilde{K}_{\text{nw}}(u,v)&=\int_\R T^{-2/3}\frac{1}{1+e^y}
\Ai\!\left(T^{-1/3}(u+X-y)\right)\Ai\!\left(T^{-1/3}(v+X-y)\right)\dd y\notag\\
&=\int_\R T^{-1/3}\sigma_{\mathrm{KPZ}}(-X+T^{\frac 13}z)
\Ai\!\left(z+T^{-1/3}u\right)\Ai\!\left(z+T^{-1/3}v\right)\dd z.
\end{align}
Here, $\sigma_{\mathrm{KPZ}}(\cdot)$ is the Fermi weight defined in \eqref{sigma-kpz}. Consequently, the Fredholm determinant associated with this transformed kernel coincides exactly with $Q^{\mathrm{Ai}}_{\sigma}(x,t)$ as defined in \eqref{eq:finite-deformed-airy-det}. 

To see this, we factor the associated integral operator as $\widetilde{\mathcal{K}}_{\text{nw}}\vert{}_{L^2(\mathbb{R}_+)}=\mathcal{A}^* \circ \mathcal{A}$, where $\mathcal{A}:\ L^2(\mathbb{R}_+)\rightarrow L^2(\mathbb{R})$ is defined as  
\begin{align}
    (\mathcal{A}h)(z):=T^{-\frac 16}\sqrt{\sigma_{\mathrm{KPZ}}(-X+T^{\frac 13}z)}\int_0^{+\infty}\Ai\!\left(z+T^{-1/3}u\right)h(u)\dd u,
\end{align}
and its adjoint $\mathcal{A}^*: \ L^2(\mathbb{R})\rightarrow L^2(\mathbb{R}_+)$ is given by
\begin{align}
    (\mathcal{A}^*f)(u):=T^{-\frac 16}\int_\mathbb{R} \sqrt{\sigma_{\mathrm{KPZ}}(-X+T^{\frac 13}z)}\Ai\!\left(z+T^{-1/3}u\right)f(z)\dd z.
\end{align}
A direct norm calculation shows that $\mathcal{A}$ is a Hilbert-Schmidt operator:
 \begin{align}
     ||\mathcal{A}||_2^2&=\int_\mathbb{R}\int_0^{+\infty}T^{-\frac 13}\sigma_{\mathrm{KPZ}}(-X+T^{\frac 13}z)\Ai^2\!\left(z+T^{-1/3}u\right)\dd u \dd z\notag\\
     &=\int_\mathbb{R}\sigma_{\mathrm{KPZ}}(-X+T^{\frac 13}z)\int_0^{+\infty}\Ai^2\!\left(z+a\right)\dd a \dd z\notag\\
     &=\int_\mathbb{R}\sigma_{\mathrm{KPZ}}(-X+T^{\frac 13}z)K_{\Ai}(z,z)\dd z < +\infty.
 \end{align}
This gives us
\begin{equation}
      \widetilde G_{\text{nw}}(X,T)=\det (I-\widetilde{\mathcal{K}}_{\text{nw}}|_{L^2(\mathbb{R}_+)}) = \det (I-\mathcal{A}^* \circ \mathcal{A}|_{L^2(\mathbb{R}_+)}) = \det (I-\mathcal{A}\circ \mathcal{A}^*|_{L^2(\mathbb{R})}),
\end{equation}
where the kernel of the composition $\mathcal{A} \circ \mathcal{A}^*$ on $L^2(\mathbb{R})$ is given by
\begin{align}
   A\circ A^*(u,v)&=T^{-\frac 13}\sqrt{\sigma_{\mathrm{KPZ}}(-X+T^{\frac 13}u)}\sqrt{\sigma_{\mathrm{KPZ}}(-X+T^{\frac 13}v)}\int_0^{+\infty}\Ai\!\left(u+T^{-1/3}a\right)\Ai\!\left(v+T^{-1/3}a\right)\dd a\notag\\
   &=\sqrt{\sigma_{\mathrm{KPZ}}(-X+T^{\frac 13}u)}\sqrt{\sigma_{\mathrm{KPZ}}(-X+T^{\frac 13}v)}\int_0^{+\infty}\Ai\!\left(u+b\right)\Ai\!\left(v+b\right)\dd b\notag\\
   &=\sqrt{\sigma_{\mathrm{KPZ}}(-X+T^{\frac 13}u)}\sqrt{\sigma_{\mathrm{KPZ}}(-X+T^{\frac 13}v)}K_{\Ai}(u,v) = K_{\sigma,\rho,s}^{\mathrm{Ai}}(u,v),
\end{align}
with  $\rho=-XT^{-\frac 12},\ s=T^{-\frac 12}$. Note that $K_{\sigma,\rho,s}^{\mathrm{Ai}}(u,v)$ is the finite-temperature deformed Airy kernel given in \eqref{eq:deformed-Airy-Fred-det}, whose Fredholm determinant is related to the KdV equation; see Eqs. \eqref{eq:finite-deformed-airy-det}-\eqref{eq: kdv-eqn}.

Consequently, the coordinate transformation \eqref{eq:QR-change-of-variables} establishes a direct relation between the Fredholm determinant solutions of the KP and KdV equations. At the level of the associated partial differential equations, this relationship can be understood as a non-isospectral reduction. We formulate this connection in the following proposition.
\begin{proposition} \label{prop:KP-reduction}
    Let $\phi(r,x,t)=\partial_r^2\log\tau(r,x,t)$ satisfy the KP equation \eqref{eq:KP-eqn}, and $\tau(r,x,t)$ satisfy the bilinear KP equation. Then, under the following non-isospectral reduction
\begin{align}
    (t\partial_x-2x\partial_r)\tau=0,\label{ckdv-red}
\end{align}
and eliminating the variable $x$, we have $\phi(r,0,t)$ satisfies the cylindrical KdV equation
\begin{align}
\phi_t+\phi\phi_r+\frac{1}{12}\phi_{rrr}+\frac{\phi}{2t}=0.\label{ckdv}
\end{align}
\end{proposition}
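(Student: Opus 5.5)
The plan is to work at the level of the potential $\phi=\partial_r^2\log\tau$. We use the reduction \eqref{ckdv-red} to express $x$-derivatives through $r$-derivatives. Then we substitute into the KP equation \eqref{eq:KP-eqn} and evaluate at $x=0$.

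\textbf{Step 1: express the $x$-derivatives.} Since $\partial_r$ and $\partial_x$ commute, applying $\partial_r^2$ to $\log\tau$ and using \eqref{ckdv-red} in the form $t\,\partial_x\log\tau=2x\,\partial_r\log\tau$ gives
\begin{equation*}
t\,\phi_x=2x\,\phi_r .
\end{equation*}
Differentiating once more in $x$ yields
\begin{equation*}
t\,\phi_{xx}=2\phi_r+2x\,\phi_{rx}=2\phi_r+\frac{4x^2}{t}\phi_{rr}.
\end{equation*}
Hence, at $x=0$,
\begin{equation*}
\partial_r^{-1}\phi_{xx}\big|_{x=0}=\frac{2}{t}\,\phi ,
\end{equation*}
with $\partial_r^{-1}$ normalized so that primitives vanish at infinity, as implicit in \eqref{eq:KP-eqn}.

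\textbf{Step 2: handle the $t$-derivative.} We must check that $\partial_t$ of $\phi(r,x,t)$ at $x=0$ equals $\frac{d}{dt}\phi(r,0,t)$. This is immediate, since $x=0$ is held fixed while $t$ varies.

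\textbf{Step 3: substitute.} Inserting Steps 1--2 into \eqref{eq:KP-eqn} gives
\begin{equation*}
\phi_t+\phi\phi_r+\frac{1}{12}\phi_{rrr}+\frac14\cdot\frac{2}{t}\phi=0,
\end{equation*}
which is exactly \eqref{ckdv}.

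\textbf{Alternative via elimination of $x$.} More conceptually, \eqref{ckdv-red} says that $\tau$ depends on $(r,x)$ only through $X=r+x^2/t$, at fixed $t$, mirroring \eqref{eq:QR-change-of-variables}. One writes $\tau=F(r+x^2/t,t)$, computes $\partial_t$ with the extra term $-x^2/t^2\,F_X$, and sets $x=0$. I would present the direct computation above and mention this consistency check.

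\textbf{Expected obstacle.} The main subtlety is justifying the nonlocal term $\partial_r^{-1}\phi_{xx}$. One must confirm that $\partial_r^{-1}(2\phi_r)=2\phi$ with no integration constant. This uses the decay of $\phi$ as $r\to\infty$, or equivalently one works with the bilinear/differentiated form $\partial_r(\ldots)+\frac14\phi_{xx}=0$ to avoid inverting $\partial_r$. In the latter form the term becomes $\frac{1}{2t}\phi_r$ at $x=0$, and integrating in $r$ with decay gives \eqref{ckdv}.
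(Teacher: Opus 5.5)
Your argument is correct and is the same computation as the paper's, only carried out on $\phi$ instead of on $\tau$: the paper inserts $\tau_x|_{x=0}=0$ and $\tau_{xx}|_{x=0}=\frac{2}{t}\tau_r|_{x=0}$ into the bilinear form $\bigl(D_rD_t+\frac1{12}D_r^4+\frac14D_x^2\bigr)\tau\cdot\tau=0$, divides by $2\widehat\tau^2$ and differentiates in $r$. Working at the bilinear level is what removes your ``expected obstacle'': there the nonlocal term appears as $\frac14\partial_r(\log\tau)_{xx}$, which at $x=0$ equals $\frac{1}{2t}\phi$ exactly, so you need neither a normalization of $\partial_r^{-1}$ nor decay in $r$.
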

\begin{proof}
The Hirota bilinear form of the KP equation \eqref{eq:KP-eqn} is
\[
\left(D_rD_t+\frac1{12}D_r^4+\frac14D_x^2\right)
\tau\cdot\tau=0.
\]
The constraint in Eq. \eqref{ckdv-red} gives
\[
\tau_x\big|_{x=0}=0,
\qquad
\tau_{xx}\big|_{x=0}
=\frac{2}{t}\tau_r\big|_{x=0}.
\]
Hence, combining the above formulas and restricting the 
to $x=0$, we have
\[
\left(D_rD_t+\frac1{12}D_r^4\right)
\widehat\tau\cdot\widehat\tau
+\frac1{2t}\partial_r(\widehat\tau^2)=0,
\qquad
\widehat\tau(r,t)=\tau(r,0,t).
\]
Dividing by $2\widehat\tau^2$ and taking the derivative with respect to $r$ directly yields Eq. \eqref{ckdv} for
$\phi(r,0,t)=\partial_r^2\log\widehat\tau$.
\end{proof}
\begin{remark}
    Under the variable transformation
    \begin{align}
        \rho=-rt^{-\frac 12},\ s=t^{-\frac 12},\ \tilde{\phi}(\rho,s)=t\phi(r,t)-\frac 12 r,
    \end{align}
    the cylindrical KdV equation \eqref{ckdv} transforms directly into the standard KdV equation \eqref{eq: kdv-eqn}; see \cite{cafassoclaeysruzza2021,claeysglesnerruzzatarricone2024}.
\end{remark}

Returning to the Bessel kernel defined in \eqref{eq: Bessel-kernel-def}, we recall its double integral representation (cf. \cite[Prop. 2.1]{girotti2015}):
\begin{equation} \label{eq: Bessel-kernel-double-integal}
K_{\mathrm{Be}}^{ (\alpha)}(u,v) =  \left(\frac{v}{u} \right)^{\alpha/2} \int_\gamma \frac{\dd q}{2\pi i}\ \int_{\hat\gamma} \frac{\dd p}{2\pi i} \frac{e^{u q - \frac{1}{4 q} - v p + \frac{1}{4 p}}}{p - q} \left(\frac{p}{q}\right)^{\alpha} ,\quad u,v>0,
\end{equation}
where $\gamma$ is a curve that extends to $-\infty$ and winds counterclockwise around the origin, and $\hat \gamma$ is its image under the mapping $q \mapsto 1/q$. 

Comparing this representation with \eqref{eq:separated-model}, it is natural to expect that the Bessel kernel belongs to the family, and that its corresponding Fredholm determinant is related to the $\tau$-functions of the mKP or 2DTL hierarchies established in Thm. \ref{thm:main}. The connection between the Fredholm determinant $\det(I- \mathcal{K}_{\mathrm{Be}} \vert_{(0,s)})$ and the Toda lattice was originally discovered by Forrester and Witte \cite[Sec. 4]{ForresterWitte2002}. Here, we naturally recover this relationship within our framework.
\begin{proposition}\label{prop:cly-toda}
Let $G_\alpha(x_{-1})=\det(I-\mathcal{K}_{\mathrm{Be}} ^{(\alpha)} |_{L^2(0,-4x_{-1})})$ with $x_{-1}<0$. Then, $G_\alpha$  satisfies the Toda lattice equation
\begin{align} 
    \delta^2\log G_\alpha=x_{-1}\left(1-\frac{G_{\alpha+1}G_{\alpha-1}}{G_{\alpha}^2}\right),\quad \delta=x_{-1}\partial_{x_{-1}}.\label{eqn:cyl-toda}
\end{align}
\end{proposition}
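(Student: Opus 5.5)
We will deduce the statement from Thm. \ref{thm:main}, specifically from the 2DTL equation \eqref{eq:2dtl-central}, followed by a non-isospectral reduction that removes $x_1$.

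\textbf{Step 1: move the operator to $\mathbb R_-$.} Substituting $u\mapsto -u$ in \eqref{eq: Bessel-kernel-double-integal} and conjugating away the factor $(v/u)^{\alpha/2}$, which leaves the determinant unchanged, gives a kernel of the separated form \eqref{eq:separated-model}. Here $n$ is identified with the Bessel order $\alpha$, using the factor $(p/q)^{\alpha}$ up to sign conventions. The weight is $\omega(p,q)=1/(p-q)$, and the contours are $\gamma$ and $\hat\gamma$.

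\textbf{Step 2: absorb the interval into the times.} The terms $\mp\frac1{4q},\pm\frac1{4p}$ in the exponent play the role of the $x_{-1}$ phase. We then introduce the full phases $\theta,\widehat\theta$ in $x_{-1}$ and $x_1$. Because of $1/(p-q)$, the kernel is integrable, with $(\partial_u+\partial_v)K$ of rank one. Hence the flows \eqref{eq:flow-x1}, \eqref{eq:flow-xm1} and the shift relation \eqref{eq:adjacent-order} hold by differentiating under the integral, exactly as in the Example.

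The restriction to $(0,-4x_{-1})$ must be encoded by a characteristic function multiplying a kernel on $\mathbb R_-$. A cleaner route is to use the scaling $u\mapsto \lambda u$, $p\mapsto p/\lambda$: this maps $L^2(0,s)$ to $L^2(0,1)$ and turns $\frac{1}{4p}$ into $\frac{\lambda}{4p}$. The interval length is thereby converted into the coefficient of $p^{-1}$, i.e.\ into $x_{-1}$ after rescaling. The $x_1$-flow then comes from translation in $u$, and we would take $x_1$ as a shift variable.

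\textbf{Step 3: identify $\tau_\alpha$ and apply Thm. \ref{thm:main}.} With the setup of Step 2, $\tau_\alpha(x_{-1},x_1)$ restricted to $x_1=0$ equals $G_\alpha(x_{-1})$, up to the sign $I\pm\mathcal K$, which is fixed by replacing $\omega$ with $-\omega$. Thm. \ref{thm:main} then gives \eqref{eq:2dtl-central}:
\[
\tau\,\partial_{x_1}\partial_{x_{-1}}\tau-\partial_{x_1}\tau\,\partial_{x_{-1}}\tau-\tau^2=-\tau_{n+1}\tau_{n-1}.
\]

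\textbf{Step 4: eliminate $x_1$ via a scaling constraint.} The main obstacle is that the kernel depends on $x_1$ and $x_{-1}$ only through a combination. We would use the scaling invariance $p\mapsto cp$, $q\mapsto cq$, under which $\omega\,\dd p\,\dd q$ and $(p/q)^n$ are homogeneous. This invariance yields a non-isospectral constraint of Euler type,
\[
(x_1\partial_{x_1}-x_{-1}\partial_{x_{-1}})\tau=\text{const}\cdot\tau,
\]
playing the role that \eqref{ckdv-red} plays in Prop. \ref{prop:KP-reduction}. The constant is zero because the trace is dilation invariant.

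Using this constraint to express $\partial_{x_1}$ in terms of $\delta=x_{-1}\partial_{x_{-1}}$, the left side becomes $\tau^2\,\partial_{x_1}\partial_{x_{-1}}\log\tau$. This is proportional to $\tau^2\,x_{-1}^{-1}\delta^2\log\tau$ once we set $x_1$ to a fixed value, say $x_1=1$ after rescaling. Dividing by $\tau^2$ gives \eqref{eqn:cyl-toda}, possibly after the rescaling $x_{-1}\to -4x_{-1}$ to absorb constants.

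The delicate points are:
\begin{itemize}
\item checking that the constraint holds exactly, so that the endpoint-dependence is correctly traded for $x_1$;
\item tracking the signs and the factor $-4$ so that the coefficient in front of $(1-G_{\alpha+1}G_{\alpha-1}/G_\alpha^2)$ comes out as exactly $x_{-1}$.
\end{itemize}
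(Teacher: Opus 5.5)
\textbf{There is a genuine gap in Step 4.}

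Your Steps 1--3 essentially rebuild the paper's setup. After the rescaling of Step 2 you have the kernel $\int_\gamma\int_{\hat\gamma}\frac{e^{(u+x_1)q+x_{-1}/q-(v+x_1)p-x_{-1}/p}}{p-q}(p/q)^{\alpha}\frac{\dd p\,\dd q}{(2\pi\ii)^2}$ on $L^2(0,1)$. Its determinant is $\det(I-\mathcal K_{\mathrm{Be}}^{(\alpha)}|_{L^2(s_1,s_2)})$ with $s_1=-4x_{-1}x_1$ and $s_2=-4x_{-1}(1+x_1)$, and it equals $G_\alpha$ at $x_1=0$.

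The dilation $p\mapsto cp$, $q\mapsto cq$ preserves the kernel only together with $u,v\mapsto cu,cv$. That maps $L^2(0,1)$ onto $L^2(0,c)$, so it relates $\tau$ on different intervals and gives no identity for $\tau$ on a fixed interval. It would give one on a half-line, but the finite-interval Bessel determinant is not of that form; that is why you had to leave $\mathbb R_-$ in Step 2.

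Concretely, $E=x_1\partial_{x_1}-x_{-1}\partial_{x_{-1}}$ annihilates $s_1$ but $Es_2=4x_{-1}$. Hence $E\log\tau=-4x_{-1}R(s_2,s_2)\neq 0$, where $R$ is the resolvent kernel.
\begin{itemize}
\item As a sanity check, at $x_1=0$ your constraint reads $\delta G_\alpha=0$. That would mean $G_\alpha$ does not depend on the length of $(0,-4x_{-1})$, which is false.
\item Even if the constraint held, at $x_1=0$ (where Step 3 identifies $\tau$ with $G_\alpha$) it says nothing about $\partial_{x_1}\tau$.
\item Moving to $x_1=1$ replaces $G_\alpha$ by the determinant on $(-4x_{-1},-8x_{-1})$, so it does not rescue the argument.
\end{itemize}

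\textbf{The missing idea:} the reduction comes from the hard edge, not from homogeneity. Use $D=\partial_{x_1}-x_{-1}\partial_{x_{-1}}$ instead. Differentiating the determinant with respect to its endpoints gives
\[
D\log\tau=4x_{-1}(x_1-1)R(s_1,s_1)-4x_{-1}x_1R(s_2,s_2).
\]
On the line $x_1=0$ this equals $-4x_{-1}R(0,0)$, which is zero because the Bessel kernel and its resolvent vanish at the hard edge for $\alpha>0$.

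So $\partial_{x_1}\log\tau=\delta\log\tau$ holds identically in $x_{-1}$ along $x_1=0$. Differentiating in $x_{-1}$ gives $\partial_{x_1}\partial_{x_{-1}}\log\tau|_{x_1=0}=x_{-1}^{-1}\delta^2\log G_\alpha$. Inserting this into $\partial_{x_1}\partial_{x_{-1}}\log\tau_n=1-\tau_{n+1}\tau_{n-1}/\tau_n^2$ gives \eqref{eqn:cyl-toda} exactly, with no rescaling of $x_{-1}$.

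\textbf{A secondary point:} after Step 2 you are on $L^2(0,1)$, not $\mathbb R_-$, so Thm.~\ref{thm:main} cannot be quoted as stated. The boundary terms at the left endpoint, and the normalization of $\partial_u^{-1}$ in the $x_{-1}$-flow, must be controlled at $x_1=0$ by the same hard-edge vanishing, as in Lem.~\ref{lem:app-be-mkp}. Also, the adjacent-order relation forces $n=-\alpha$ rather than $n=\alpha$. This is harmless, since \eqref{eq:2dtl-central} is symmetric under $n\to-n$.
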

\begin{remark}
The equation in the preceding proposition is equivalent to the one in \cite[Eq. (4.12)]{ForresterWitte2002} with $x_{-1}=-t$, $\bar{G}_\alpha=t^{\frac{\alpha^2}{2}}e^tG_{\alpha}$. 
\end{remark}

The detailed proof of this proposition is given in Sec. \ref{sec:nonisospectral}.

Recently, \citet{ruzza2025} studied the finite-temperature deformed Bessel kernel
\begin{equation}
K_{\sigma,x,t}^{\mathrm{Be}}(u,v)
=
\sigma\!\left(r\right)
K_{\mathrm{Be}}^{(\alpha)}(u,v),
\quad r=x^{-2}u+t,\quad
x>0.   
\end{equation}
For a large class of admissible weight $\sigma(\cdot)$, he established the analytic framework for the  corresponding Fredholm determinant 
\begin{align} \label{eq:Bessel-deformed-det}
&Q^{\mathrm{Be}}_{\alpha,\sigma}(x,t)
=
\det\!\left(
I-\mathcal K_{\sigma,x,t}^{\mathrm{Be}} \vert_{L^2(0,\infty)}
\right)
\end{align}
More precisely, let 
\begin{align}
\phi(x,t)
=
-\partial_x\log Q^{\mathrm{Be}}_{\alpha,\sigma}(x,t)
+\frac{xt}{2}
-\frac{4\alpha^2-1}{8x},
\end{align}
then \(\phi(x,t)\) satisfies the following nonlinear integrable PDE (cf. \cite[Thm 1.2]{ruzza2025})
\begin{align}
    (2\phi_x-t)\phi_t^2
+\frac14 \phi_{xt}^2
-\frac12 \phi_{xxt}\phi_t
=
\frac{\alpha^2}{4}.\label{n-nkdv}
\end{align}
Differentiating the above equation with respect to $x$, the above equation becomes
\begin{align}
 \phi_{xx}\phi_t+(2\phi_x-t)\phi_{xt}-\frac{1}{4}\phi_{xxxt}
=0.\label{n-nkdv1}
\end{align}
Ruzza \cite{ruzza2025} also briefly noted a structural similarity between the aforementioned equations and the negative KdV equation. In the present paper, we demonstrate that Eq. \eqref{n-nkdv1} can be derived directly from the coupled mKP-2DTL system \eqref{eq:mkp-first}-\eqref{eq:2dtl-central} via a non-isospectral reduction. This derivation closely parallels the methodology employed in Prop. \ref{prop:KP-reduction}, where a similar non-isospectral constraint was used to extract the cylindrical KdV equation from the KP framework.

\begin{theorem}\label{thm:bessel-reduction}
    Let $\tau_n(x_1,x_2,x_{-1})$ satisfy the first mKP equation \eqref{eq:mkp-first} and the 2DTL equation \eqref{eq:2dtl-central}. Suppose that $\tau_n$ obeys the non-isospectral reduction constraint evaluated at $x_2=0$:
 \begin{align}
        \left(x_1\partial_{x_2}+(n+\alpha)\partial_{x_1}\right)\tau_n=0.\label{noniso-reduction}
\end{align}
    Then, under the variable transformation $x_1=\frac{x^2}{4}$ and $x_{-1}=t$, the function 
    \begin{align}
        \phi(x,t)=-\partial_x\log \tau_0(\frac{x^2}{4},0,t)
+\frac{xt}{2}
-\frac{4\alpha^2-1}{8x},
    \end{align}
    satisfies Eq. \eqref{n-nkdv1}.
\end{theorem}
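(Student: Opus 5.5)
The plan is to eliminate the neighbouring $\tau$-functions $\tau_{\pm1}$ from the coupled system and obtain a single scalar equation for $\tau_0$ on the slice $x_2=0$. After that, Eq. \eqref{n-nkdv1} should follow by the change of variables $x_1=x^2/4$, $x_{-1}=t$. Write $'=\partial_{x_1}$, $f=\log\tau_0$, $u=\log(\tau_1/\tau_0)$ and $v=\log(\tau_{-1}/\tau_0)$.

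\emph{Step 1: remove $\partial_{x_2}$.} On $x_2=0$ the constraint \eqref{noniso-reduction} gives $\partial_{x_2}\tau_n=-\frac{n+\alpha}{x_1}\partial_{x_1}\tau_n$. Substituting this into \eqref{eq:mkp-first} for $n=0$ (the pair $\tau_0,\tau_1$) and for $n=-1$ (the pair $\tau_{-1},\tau_0$) removes all $x_2$-derivatives. Dividing by $\tau_n\tau_{n+1}$ and using $D_{x_1}^2 a\cdot b/(ab)=(\log ab)''+((\log a/b)')^2$, I expect two Riccati-type relations:
\begin{align*}
2f''+u''+u'^2+\frac{1}{x_1}f'+\frac{1+\alpha}{x_1}u'&=0,\\
2f''+v''+v'^2+\frac{1}{x_1}f'+\frac{1-\alpha}{x_1}v'&=0.
\end{align*}
Here the second relation comes from the $n=-1$ equation, whose coefficients are $\alpha-1$ and $\alpha$; it should be rechecked carefully, since the sign of the $\alpha$-terms is what finally produces the $\alpha^2$ constant. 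The central equation \eqref{eq:2dtl-central} at $n=0$, divided by $\tau_0^2$, gives the algebraic relation
\[
e^{u+v}=1-\partial_{x_1}\partial_{x_{-1}}f .
\]
Thus $u+v$ is known in terms of $f$.

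\emph{Step 2: eliminate the unknown difference $w=u-v$.} Adding and subtracting the two Riccati relations, and using $u'^2\pm v'^2$ expressed through $s=u+v$ and $w$, gives two equations. The difference is linear in $w'$ and in $w''$, with coefficients involving $s'$ and $1/x_1$. The sum involves $w'^2$ and $f''$, and can be solved for $w'^2$. The most efficient route is to solve the sum for $w'^2$, differentiate it in $x_1$, and substitute $w''$ from the difference equation. This leaves an identity involving $w'$ only linearly, and this term should cancel against $s'$-terms, or can otherwise be removed by squaring. The result is a single equation for $f$ alone, involving $f', f''$ and $\partial_{x_{-1}}$-derivatives through $s=\log(1-f'_{x_{-1}})$. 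I expect this to be the first integral \eqref{n-nkdv}. The constant $\alpha^2/4$ should arise from the product of the $(1\pm\alpha)/x_1$ terms, consistent with the remark that \eqref{n-nkdv1} is the $x$-derivative of \eqref{n-nkdv}. If direct elimination is messy, I would instead aim straight for \eqref{n-nkdv1}, differentiating once more so that the constant drops out.

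\emph{Step 3: change variables.} Since $x_1=x^2/4$ gives $\partial_x=\frac{x}{2}\partial_{x_1}$, we have $-\partial_x\log\tau_0=-\frac{x}{2}f'$. The shifts $\frac{xt}{2}$ and $-\frac{4\alpha^2-1}{8x}$ in $\phi$ are designed to absorb, respectively, the constant $1$ in $1-\partial_{x_1}\partial_{x_{-1}}f$ (since $\partial_t(xt/2)=x/2$) and the $1/x_1$ terms produced when converting $\partial_{x_1}^2$ into $\partial_x^2$. I will verify directly that the combination $\phi_{xx}\phi_t+(2\phi_x-t)\phi_{xt}-\frac14\phi_{xxxt}$ reduces to the equation of Step 2 multiplied by a power of $x$.

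The main obstacle is Step 2: making the elimination of $w$ close, rather than producing a higher-order equation. The crux is choosing the right combination, and the correct signs of $\alpha$ in the $n=-1$ relation, so that $w'$ cancels.
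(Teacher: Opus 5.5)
Your route is correct, but it differs from the paper's. Your Riccati relations are right, including the $(1-\alpha)/x_1$ coefficient from $n=-1$. The elimination in Step~2 closes exactly, with no leftover $w'$ and no squaring. Put $z=w'+\alpha/x_1$. The difference equation $w''+(s'+1/x_1)w'+\frac{\alpha}{x_1}s'=0$ then becomes $z'=-(s'+1/x_1)z$. The sum equation, which also contains the linear term $\frac{\alpha}{x_1}w'$, becomes $\frac12 z^2=R$ with
\[
R=\frac{\alpha^2}{2x_1^2}-4f''-s''-\frac12 s'^2-\frac{2}{x_1}f'-\frac{1}{x_1}s'.
\]
Hence $(x_1^2e^{2s}R)'=0$, an equation for $f$ alone once you substitute $e^{s}=1-\partial_{x_1}\partial_{x_{-1}}f$.

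Two of your expectations need correcting.
\begin{itemize}
\item This is not the first integral \eqref{n-nkdv}. The relations you use only show that $x_1^2e^{2s}R=\frac12(x_1e^{s}z)^2$ is independent of $x_1$. The value behind $\alpha^2/4$ comes from boundary behaviour, such as $\tau_n\to1$ as $x_1\to0$ for the Bessel determinants, not from the local equations. So aim directly at \eqref{n-nkdv1}, which is all the theorem asserts.
\item In Step~3 the multiplier is not only a power of $x$. With $P=2\phi_t=xe^{s}$ one has the identity $8x_1^2e^{2s}R=P_x^2-2PP_{xx}+4(2\phi_x-t)P^2$, whose $x$-derivative equals $16P$ times the left side of \eqref{n-nkdv1}. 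Dividing by $P$ is legitimate where $\tau_{\pm1}\neq0$, which your logarithms already require. The remaining points follow by continuity, since \eqref{n-nkdv1} holds trivially on any open set where $\phi_t\equiv0$.
\end{itemize}

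The paper avoids the elimination by a gauge. It sets $\widehat\tau_n=x^{(4(\alpha+n)^2-1)/8}\tau_n(x^2/4,0,t)$. Then the linearized forms of your Riccati relations (the equations for $e^{u}$ and $e^{v}$) become the single Schr\"odinger equation $\psi_{xx}=(2\phi_x-t)\psi$ for $\psi_\pm=\widehat\tau_{\pm1}/\widehat\tau_0$. The Toda relation reads $\phi_t=\frac12\psi_+\psi_-$. Equation \eqref{n-nkdv1} is then the standard third-order equation $P_{xxx}=4(2\phi_x-t)P_x+4\phi_{xx}P$ for a product of two eigenfunctions.

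The two proofs have the same content: $2x_1e^{s}z$ is exactly the $x$-Wronskian of $\psi_+,\psi_-$, and your conserved quantity $x_1^2e^{2s}R$ is $\frac18$ of its square. The paper's version is shorter, needs only $\tau_0\neq0$, and explains the shift $-\frac{4\alpha^2-1}{8x}$ in $\phi$ as the common gauge potential. Yours uncovers the same structure by direct computation, but leaves a brute-force change of variables for Step~3.
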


\begin{remark}
The Fredholm determinant of the deformed Bessel kernel \eqref{eq:Bessel-deformed-det} serves as a canonical example of this reduction.  Furthermore, in App. \ref{non-isospectral}, we demonstrate that applying a non-isospectral deformation to the Lax pair of the negative KdV equation yields a new Lax pair that directly generates Eq. \eqref{n-nkdv1}. This construction provides a clear and exact explanation for Ruzza's observation.
\end{remark}
The detailed proof is given in Sec. \ref{sec:nonisospectral}.

\section{Isospectral reductions and Fredholm determinant solution to some (1+1)-dimensional integrable systems}\label{sec:isospectral}

In this section, we aim to obtain solutions to some classical (1+1)-dimensional integrable systems in the form of Fredholm determinants through isospectral reductions.

\subsection{Classical KP reductions}\label{subsec:classical-kp}
We first record the result of the KP equation in the normalization used below.

\begin{lemma}[\cite{poppe1989,poppesattinger1988,quastelremenik2022}]
\label{lem:poppe-kp}
Let $\mathcal{K}=\mathcal{K}(x_1,x_2,x_3)$ be a trace-class integral operator on $L^2(\R_-)$
with sufficiently
decaying kernel $K(u,v)$.  If
\begin{equation}\label{eq:poppe-kp-flows}
 K_{x_j}=\bigl(\partial_u^j-(-\partial_v)^j\bigr)K,
 \qquad j=1,2,3,
\end{equation}
then $\tau=\det(I+\mathcal{K})$ satisfies
\begin{equation}\label{eq:bilinear-kp}
 (D_{x_1}^4-4D_{x_1}D_{x_3}+3D_{x_2}^2)\tau\cdot\tau=0.
\end{equation}
\end{lemma}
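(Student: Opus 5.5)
We plan to deduce the lemma from the Quastel--Remenik theorem \cite[Thm.~1.3]{quastelremenik2022}, recalled in \eqref{eq:QR-diff-condition1}--\eqref{eq:QR-diff-condition3}. This requires three changes: move the half-line, flip the sign of the operator, and rescale the times. Throughout, $D_j$ abbreviates $D_{x_j}$.

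\textbf{Step 1: reflection.} Set $\widetilde K(a,b)=-K(-a,-b)$ for $a,b\in\R_+$. The unitary map $h(u)\mapsto h(-u)$ sends $L^2(\R_-)$ onto $L^2(\R_+)$, so
\[
\tau=\det(I+\mathcal K)=\det(I-\widetilde{\mathcal K})\quad\text{on } L^2(\R_+).
\]
Since $\partial_u=-\partial_a$ and $\partial_v=-\partial_b$, and the flows are linear (so the overall sign of $K$ is irrelevant), \eqref{eq:poppe-kp-flows} becomes
\[
\widetilde K_{x_1}=-(\partial_a+\partial_b)\widetilde K,\qquad
\widetilde K_{x_2}=(\partial_a^2-\partial_b^2)\widetilde K,\qquad
\widetilde K_{x_3}=-(\partial_a^3+\partial_b^3)\widetilde K .
\]

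\textbf{Step 2: change of times.} Put $r=-x_1$, $x=-x_2$, $t=3x_3$. Then $\widetilde K$ satisfies exactly \eqref{eq:QR-diff-condition1}--\eqref{eq:QR-diff-condition3}. By \cite[Thm.~1.3]{quastelremenik2022}, $\phi=\partial_r^2\log\tau$ solves the KP equation \eqref{eq:KP-eqn}.

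\textbf{Step 3: bilinearization.} We pass to the Hirota form
\[
\Bigl(D_rD_t+\tfrac1{12}D_r^4+\tfrac14 D_x^2\Bigr)\tau\cdot\tau=0,
\]
the same form used in the proof of Prop.~\ref{prop:KP-reduction}. The dictionary is $D_r=-D_1$, $D_t=\tfrac13 D_3$, $D_x=-D_2$. Substituting and multiplying by $12$ gives
\[
\bigl(D_1^4-4D_1D_3+3D_2^2\bigr)\tau\cdot\tau=0,
\]
which is \eqref{eq:bilinear-kp}. So the normalization in \eqref{eq:poppe-kp-flows} matches the standard KP time scaling.

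\textbf{Main obstacle.} Going from the nonlinear KP equation for $\partial_r^2\log\tau$ to the bilinear identity loses information. Dividing the bilinear expression by $\tau^2$ gives
\[
2\partial_r\partial_t\log\tau+\tfrac1{12}\bigl(2\partial_r^4\log\tau+12(\partial_r^2\log\tau)^2\bigr)+\tfrac12\partial_x^2\log\tau .
\]
Its second $r$-derivative vanishes by KP. Hence the expression itself equals $c_1(x,t)\,r+c_0(x,t)$.

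To show $c_0=c_1=0$, we let $r\to+\infty$, which corresponds to $x_1\to-\infty$ in the original variables. There $\widetilde K$ shifts into the decay region, so $\tau\to1$ and all $\log\tau$ derivatives vanish; this needs the decay and local-uniformity assumptions. Alternatively, \cite[Remark~1.2(4)]{quastelremenik2022} asserts the bilinear form directly, which we could cite instead.

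\textbf{Fallback.} If this limiting argument is not clean, we would prove \eqref{eq:bilinear-kp} directly by Pöppe's method. Write $R=(I+\mathcal K)^{-1}$ and denote by $\lceil F\rceil$ the value of a kernel at $(0,0)$. On $\R_-$, integrating by parts turns each flow $(\partial_u^j-(-\partial_v)^j)K$ into a commutator plus a boundary term at $0$. This expresses every derivative of $\log\tau$ through brackets $\lceil R\,\partial^k K\ldots\rceil$. The bilinear identity then follows from Pöppe's product rule for these brackets.
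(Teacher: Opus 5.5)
Your proposal is correct and follows the paper's route: the paper gives no argument beyond citing \cite{poppe1989,poppesattinger1988,quastelremenik2022}, and your steps supply the normalization check that citation leaves implicit. The reflection $u\mapsto -u$ with $K\mapsto -K$ and the rescaling $r=-x_1$, $x=-x_2$, $t=3x_3$ correctly carry the Quastel--Remenik hypotheses and their Hirota form $(D_rD_t+\frac1{12}D_r^4+\frac14D_x^2)\tau\cdot\tau=0$ over to \eqref{eq:poppe-kp-flows} and \eqref{eq:bilinear-kp}. One small caveat: your $r\to+\infty$ limit fixes the integration constants only on the component of $\{\tau\neq0\}$ that extends to $r=+\infty$. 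To get the identity everywhere, cite \cite[Remark~1.2(4)]{quastelremenik2022} directly or run P\"oppe's bracket computation.
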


Two classic reductions of the KP equation are 2-reduction to the KdV equation and 3-reduction to the Boussinesq equation. Here $p$-reduction means making the equation independent of the variable $x_p$. Thus, from Lem. \ref{lem:poppe-kp} we can obtain the following corollary.

\begin{corollary}
    [Fredholm determinant solution for KdV and Boussinesq]\label{prop:classical-kernel-criteria}
Let $\mathcal{K}$ be a trace-class integral operator with a sufficiently
decaying kernel $K(u,v)$.

\smallskip
\noindent\emph{(i) KdV (also see in \cite{poppe1984}).}  Suppose that $\mathcal{K}=\mathcal{K}(X,T)$ satisfies
\begin{align}
 &K_X=(\partial_u+\partial_v)K,\label{eq:kdv-kernel-x}\\
 &K_T=-4(\partial_u^3+\partial_v^3)K,\label{eq:kdv-kernel-t}\\
& (\partial_u-\partial_v)K=0.\label{eq:kdv-kernel-reduction}
\end{align}
On every open set where $\det(I+\mathcal{K})\ne0$, the function
\begin{equation}\label{eq:kdv-field}
\phi(X,T)=2\partial_X^2\log\det(I+\mathcal{K})
\end{equation}
solves
\begin{equation}\label{eq:kdv}
 \phi_T+6 \phi \phi_X+ \phi_{XXX}=0.
\end{equation}

\smallskip
\noindent\emph{(ii) Boussinesq.}  Suppose that $\mathcal{K}=\mathcal{K}(X,T)$ satisfies
\begin{align}
 &K_X=(\partial_u+\partial_v)K,\label{eq:bq-kernel-x}\\
 &K_T=(\partial_u^2-\partial_v^2)K,\label{eq:bq-kernel-t}\\
 &(\partial_u^2-\partial_u\partial_v+\partial_v^2)K=0.\label{eq:bq-kernel-reduction}
\end{align}
On every open set where $\det(I+\mathcal{K})\ne0$, the function
\begin{equation}\label{eq:bq-field}
 \phi(X,T)=3\partial_X^2\log\det(I+\mathcal{K})
\end{equation}
solves
\begin{equation}\label{eq:boussinesq}
 3\phi_{TT}+2(\phi^2)_{XX}+\phi_{XXXX}=0.
\end{equation}
\end{corollary}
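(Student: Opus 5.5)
Both parts follow from Lem. \ref{lem:poppe-kp} by building a three-variable kernel $\widehat K(u,v;x_1,x_2,x_3)$ that satisfies \eqref{eq:poppe-kp-flows} and restricts to the given $K$ after identifying variables. The reduction constraint then removes one variable from the bilinear KP equation \eqref{eq:bilinear-kp}. It remains to convert the reduced bilinear equation to the nonlinear form, which is the standard computation through the Hirota identities
\[
\frac{D_x^4\tau\cdot\tau}{\tau^2}=2w_{xxxx}+12w_{xx}^2,\qquad \frac{D_xD_t\tau\cdot\tau}{\tau^2}=2w_{xt},\qquad \frac{D_t^2\tau\cdot\tau}{\tau^2}=2w_{tt},
\]
where $w=\log\tau$.

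\textbf{Part (i), KdV.}
\begin{itemize}
\item Set $x_1=X$ and $x_3=-T/4$, and let the kernel be independent of $x_2$.
\item Condition \eqref{eq:kdv-kernel-reduction} gives $(\partial_u^2-\partial_v^2)K=(\partial_u-\partial_v)(\partial_u+\partial_v)K=0$. This is consistent with the $j=2$ flow in \eqref{eq:poppe-kp-flows} having zero right-hand side, so we may take $K_{x_2}=0$.
\item The $j=1$ flow is \eqref{eq:kdv-kernel-x}. The $j=3$ flow is $K_{x_3}=(\partial_u^3+\partial_v^3)K=-4K_T\cdot(-1/4)$, which is consistent with \eqref{eq:kdv-kernel-t}.
\item Lemma \ref{lem:poppe-kp} therefore gives $(D_X^4+D_XD_T)\tau\cdot\tau=0$.
\item By the identities above this reads $2w_{XXXX}+12w_{XX}^2+2w_{XT}=0$. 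Differentiating once in $X$ and putting $\phi=2w_{XX}$ gives $\phi_T+\phi_{XXX}+6\phi\phi_X=0$.
\end{itemize}

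\textbf{Part (ii), Boussinesq.}
\begin{itemize}
\item Set $x_1=X$ and $x_2=T$, and look for a kernel independent of $x_3$. This needs $(\partial_u^3+\partial_v^3)K=0$.
\item We have $\partial_u^3+\partial_v^3=(\partial_u+\partial_v)(\partial_u^2-\partial_u\partial_v+\partial_v^2)$. Hence \eqref{eq:bq-kernel-reduction} yields exactly $(\partial_u^3+\partial_v^3)K=0$, so $K_{x_3}=0$ is compatible with Lemma \ref{lem:poppe-kp}.
\item The lemma gives $(D_X^4+3D_T^2)\tau\cdot\tau=0$, which becomes $2w_{XXXX}+12w_{XX}^2+6w_{TT}=0$.
\item Take $\partial_X^2$ of this and substitute $\phi=3w_{XX}$, so that $w_{XX}=\phi/3$. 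This gives $\tfrac23\phi_{XXXX}+\tfrac43(\phi^2)_{XX}+2\phi_{TT}=0$. Multiplying by $3/2$ yields \eqref{eq:boussinesq}.
\end{itemize}

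The main point needing care is the legitimacy of applying Lemma \ref{lem:poppe-kp} to a kernel with fewer variables than the lemma uses. I would define $\widehat K(u,v;x_1,x_2,x_3):=K(u,v;X(x),T(x))$, with $T$ depending linearly on $x_3$ (respectively on $x_2$) and the remaining variable dummy. I would then check all three flows of \eqref{eq:poppe-kp-flows} directly from the given relations, as above. After that, the decay and trace-class hypotheses transfer verbatim, and the determinant identities hold on open sets where $\tau\neq0$.
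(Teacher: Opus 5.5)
Your proposal is correct and follows the paper's proof: the reduction condition kills the $x_2$-flow (resp.\ the $x_3$-flow, via $\partial_u^3+\partial_v^3=(\partial_u+\partial_v)(\partial_u^2-\partial_u\partial_v+\partial_v^2)$) of Lem.~\ref{lem:poppe-kp}, and the remaining flows are identified with $X,T$. This gives $(D_X^4+D_XD_T)\tau\cdot\tau=0$ and $(D_X^4+3D_T^2)\tau\cdot\tau=0$, which $\phi=2(\log\tau)_{XX}$ and $\phi=3(\log\tau)_{XX}$ turn into KdV and Boussinesq. One slip: the identification must be $x_3=-4T$ (i.e.\ $T=-x_3/4$, so $\widehat K_{x_3}=-\tfrac14K_T=(\partial_u^3+\partial_v^3)K$), not $x_3=-T/4$, which would make the $j=3$ flow off by a factor $16$; your reduced bilinear equation $(D_X^4+D_XD_T)\tau\cdot\tau=0$ is the one that the correct identification produces.
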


\begin{proof}
For part~(i), Eq. \eqref{eq:kdv-kernel-reduction} is the reduction condition that makes the \(x_2\)-flow in Lem. \ref{lem:poppe-kp} vanishes,
while Eq. \eqref{eq:kdv-kernel-t} corresponds to $x_3=-4T$. In this case, the bilinear equation in
Lem. \ref{lem:poppe-kp} becomes 
\begin{align}
    \left(D_X^4+D_XD_T\right)\tau\cdot\tau=0,
\end{align}
which is nothing but the bilinear KdV equation. 

As we mentioned before, Part~(ii) is the
3-reduction of KP. Eq. \eqref{eq:bq-kernel-reduction} removes the $x_3$ flow,
and Eqs. \eqref{eq:bq-kernel-x}-\eqref{eq:bq-kernel-t} identify the remaining two flows. The bilinear equation \eqref{eq:bilinear-kp} are rewritten as
\begin{align}
    \left(D_X^4+3D_T^2\right)\tau\cdot \tau=0
\end{align}
The dependent variable transformation in Eq. \eqref{eq:kdv-field} and Eq. \eqref{eq:bq-field} gives the nonlinear equations.
\end{proof}

\subsection{Fredholm determinant solution to the mKdV equation}\label{subsec:mkdv}

We first prove Cor. \ref{thm:mkdv}, and then give its
one-soliton and continuous spectral realizations.

\begin{proof}[Proof of Cor. \ref{thm:mkdv}]
Based on Thm. \ref{thm:main}, we start from Eq. \eqref{eq:mkp-first} and Eq. \eqref{eq:mkp-second}. Similar to the case of the KdV equation, Eq. \eqref{eq:mkdv-kernel-reduction} is the 2-reduction condition, and Eq. \eqref{eq:mkdv-kernel-x} and Eq. \eqref{eq:mkdv-kernel-t} correspond to the variable transformation $x_1=x,\ x_3=-4t$. By setting $K^{(n)}=(-1)^nK$, Eq. \eqref{eq:adjacent-order} holds automatically. Then with the definition of $\tau_+$ and $\tau_-$ in Eq. \eqref{eq:mkdv-tau-pair}, we have
\begin{align}
 D_x^2\tau_+\cdot\tau_-&=0,\label{eq:mkdv-bilinear-1}\\
 (D_x^3+D_t)\tau_+\cdot\tau_-&=0.\label{eq:mkdv-bilinear-2}
\end{align}
Where $\tau_+\tau_-\ne0$, these identities imply
\[
 \ell_t+\ell_{xxx}-2\ell_x^3=0,
 \qquad
 \ell=\log\frac{\tau_+}{\tau_-}.
\]
Differentiating in $x$ and setting $\phi=i\ell_x$
gives the mKdV equation.
If $K$ is purely imaginary, then
$\tau_-=\overline{\tau_+}$, so $\phi$ is real.
\end{proof}

\begin{example}[One-soliton solution]\label{ex:mkdv-one-soliton}
Let $\kappa>0$ and $x_0\in\R$.  Define the rank-one operator
$\mathcal{K}$ on $L^2(\mathbb{R_-})$ with kernel $K(u,v)$
\begin{equation}\label{eq:mkdv-soliton-kernel}
 K(u,v;x,t)
 =-2i\kappa
   e^{\kappa(u+v+2x-2x_0)-8\kappa^3t}.
\end{equation}
It is trace class, purely imaginary, and satisfies the conditions of
Cor. \ref{thm:mkdv}.  This kernel yields one-soliton solution to the mKdV equation
\begin{equation}\label{eq:mkdv-soliton}
 \phi(x,t)
 =2\kappa\operatorname{sech}
   \bigl(2\kappa(x-4\kappa^2t-x_0)\bigr).
\end{equation}
\end{example}

\begin{example}[The dense mKdV soliton gas]
\label{ex:mkdv-hankel}
Fix $0<\eta_1<\eta_2$, let $r$ be the positive spectral density used in
\cite{girotti2023}, and set
\begin{equation}\label{eq:mkdv-gas-profile}
 K(u,v;x,t)
 =-\frac{i}{2\pi}\int_{\eta_1}^{\eta_2}
 r(ip)e^{p(u+v+2x)-8p^3t}\,\dd p.
\end{equation}
This is the half-line form of \citet[Eq.~(3.14)]{girotti2023}.  It is trace-class, purely imaginary, and satisfies 
Eqs. \eqref{eq:mkdv-kernel-x}-\eqref{eq:mkdv-kernel-reduction}.  Consequently,
\begin{equation}\label{eq:mkdv-gas-solution}
\phi(x,t)=\ii\partial_x\log
 \frac{\det(I+\mathcal K)}{\det(I-\mathcal K)}
\end{equation}
is the solution to the mKdV equation. Equivalently, by
Sylvester's identity, the same determinants are those of the Cauchy operator
\(\mathcal C_{\rm mKdV}\) on $\Sigma_1=\ii(\eta_1,\eta_2)$ with kernel
\begin{equation}\label{eq:mkdv-girotti-contour-kernel}
 c_{\rm mKdV}(u,v)=
 \frac{\sqrt{r(u)}e^{-\ii\theta(u;x,t)}
       \sqrt{r(v)}e^{-\ii\theta(v;x,t)}}
      {2\pi\ii(u+v)},
 \qquad \theta(u;x,t)=4tu^3+xu,
\end{equation}
used in \citet[Thm.~2.1]{girotti2023}. This is the Fredholm determinant form they derived for the dense soliton gas. A similar form can be found in \cite{tracywidom1996mkdv}.
\end{example}

\subsection{Fredholm determinant solution to the mCH equation}\label{subsec:mch}
We prove Cor. \ref{cor:mch} using the first positive
and first negative mKP members, i.e., Eq. \eqref{eq:mkp-first} and \eqref{eq:mkp-negative}.

\begin{proof}[Proof of Cor. \ref{cor:mch}]
Introduce the change of variables
\begin{align}
x_1=y+T,\qquad
x_2=T,\qquad
x_{-1}=\frac{s}{2},
\end{align}
then we have
\begin{align}
\partial_{x_1}=\partial_y,
\qquad
\partial_{x_2}=\partial_T-\partial_y,
\qquad
\partial_{x_{-1}}=2\partial_s.
\end{align}
Therefore, kernel identities \eqref{mch-1}-\eqref{mch-3} correspond to the variable transformation. Reduction condition \eqref{mch-4} means
\begin{align}
    \partial_T K^{(n)}=(\partial_{x_1}+\partial_{x_2})K^{(n)}=(\partial_u+\partial_v)(\partial_u-\partial_v+1)K^{(n)}=0.
\end{align}
Then, with the definition of $\tau_n$ in Eq. \eqref{tau-mch}, we have
\begin{align}
 &(D_y^2+D_y)\tau_1\cdot\tau_0=0,
 \label{eq:mch-reduced-positive}\\
 &\bigl(2D_sD_y^2+2D_sD_y-4D_y\bigr)
 \tau_1\cdot\tau_0=0,
 \label{eq:mch-reduced-negative}
\end{align}
which are the bilinear equations of the mCH equation \eqref{eq:mch}. Under the transformation \eqref{hodo-mch}, they can be restored to the mCH equation; see \citet{huyinwu2016,fenghu2024}.
\end{proof}

\begin{example}
\label{ex:mch-separated}
Let $\Gamma=[p_-,p_+]\subset(0,\frac 12)\cup(\frac 12,+\infty)$ and let $\mu$ be a finite
positive measure on $\Gamma$.  Define
\begin{align}\label{eq:mch-separated-kernel}
K^{(n)}(u,v;y,s)
 =\int_\Gamma
 \left(-\frac{2p-1}{2p+1}\right)^n
 \exp\left(
p(u+v+2y)+\frac{4p}{4p^2-1}s
 \right)\,\dd\mu(p),
 \qquad n=0,1.
\end{align}
Direct differentiation verifies the kernel $e^{-\frac12 (u-v)}K^{(n)}(u,v;y,s)$ satisfies
identities in Cor. \ref{cor:mch}. Since the factor $e^{-\frac12 (u-v)}$ leaves the associated Fredholm determinant invariant, we have $\tau_n=\det(I+\mathcal{K}^{(n)})$ under the transformation \eqref{hodo-mch} is the solution of the mCH equation.

This Fredholm determinant has a Cauchy-operator form
parallel to the mKdV soliton gas.  Indeed, defining
\begin{equation}\label{eq:mch-spectral-weight}
 \omega_n^{\rm mCH}(p;y,s)=
 \left(-\frac{2p+1}{2p-1}\right)^{-n}
 \exp\left(2py+\frac{4p}{4p^2-1}s\right),
\end{equation}
Sylvester's identity gives
\begin{equation}\label{eq:mch-cauchy-determinant}
 \det(I+\mathcal{K}^{(n)})_{L^2(\mathbb{R_-})}
 =\det(I+\mathcal C_n^{\rm mCH})_{L^2(\Gamma,\dd\mu)},
 \quad
 c_n^{\rm mCH}(p,q)=
\frac{
\sqrt{\omega_n^{\mathrm{mCH}}(p;y,s)}
\sqrt{\omega_n^{\mathrm{mCH}}(q;y,s)}
}{p+q},
\end{equation}
where $\mathcal C_n^{\rm mCH}$ is the integral operator with kernel
$c_n^{\rm mCH}$.
For a discrete measure $\mu=\sum_{j=1}^N w_j\delta_{p_j}$, the determinants
reduce to
\[
 \tau_n=
 \det_{1\leq i,j\leq N}
 \left[
 \delta_{ij}
 +\frac{w_i}{p_i+p_j}
 \left(-\frac{2p_i-1}{2p_i+1}\right)^n
 \exp\left(2p_i y+\frac{4p_i}{4p_i^2-1}s\right)
 \right],
\]
which gives the finite Gram determinants in
\cite{huyinwu2016,fenghu2024}. For $\dd\mu(p)=r(p)\,\dd p$ with continuous $r\ge0$, we can obtain the dense soliton gas to the mCH equation. Fig. \ref{mch-solitongas} shows the dynamics of soliton gas, as well as the acceleration of an individual soliton by the soliton gas.
\end{example}

\begin{remark}
    Under the transformation $\lambda=p^2,\ \mu=q^2$, the kernel $c_n^{\mathrm{mCH}}$ in Eq. \eqref{eq:mch-cauchy-determinant} can be rewritten into an integrable form in the sense of Its-Izergin-Korepin-Slavnov (IIKS) \cite{its1990}. This structure makes it possible to rigorously study the asymptotic behavior of the associated solutions via Riemann-Hilbert techniques. We intend to explore this asymptotic analysis in a future publication.
\end{remark}

\begin{figure}[H]
	\centering
	\subfigure[pure soliton gas]
    {
	    \includegraphics[width=2.5in]{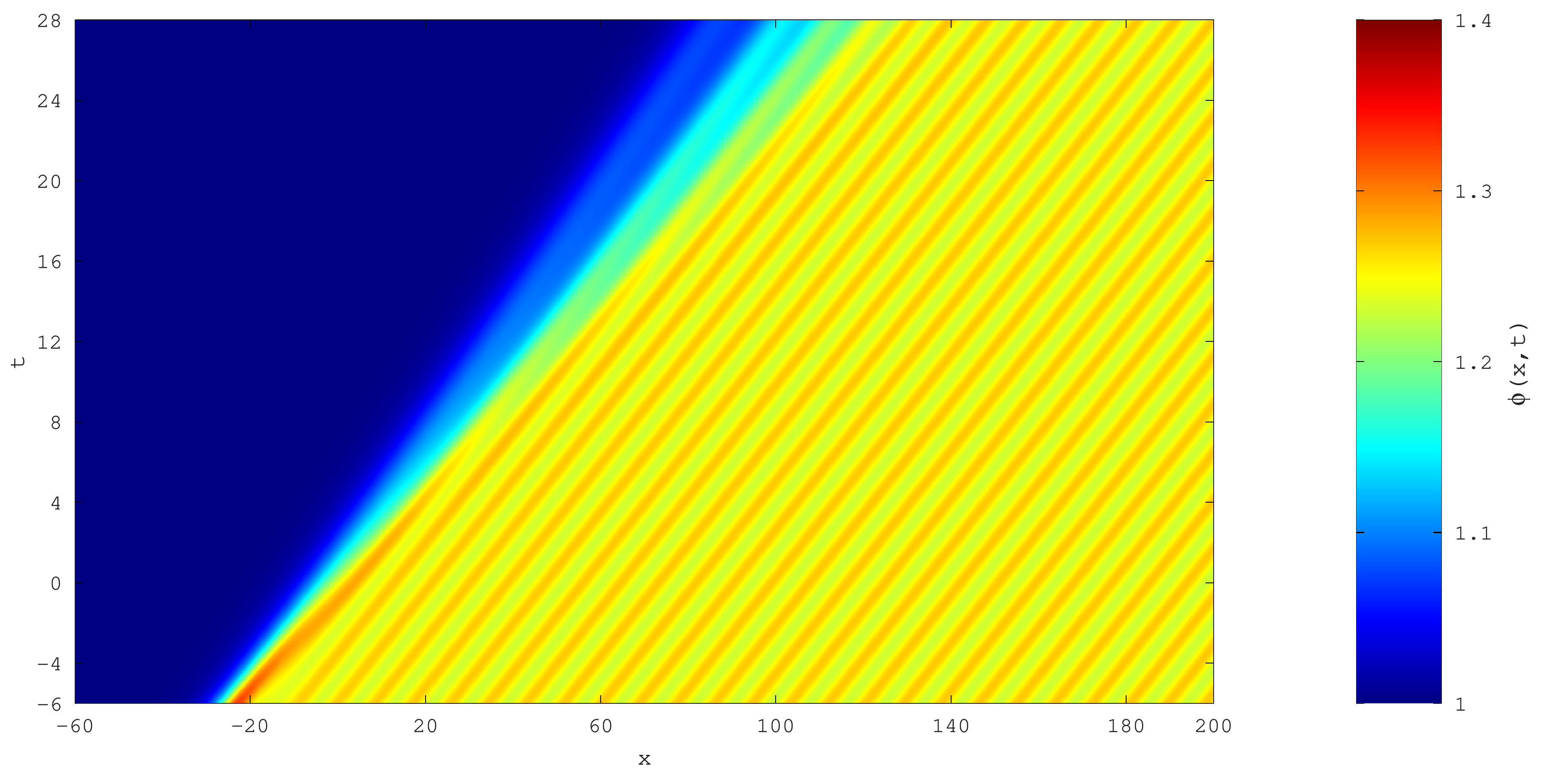}
    }
	\hspace*{3em}
    	\subfigure[pure soliton gas]
	{
		\includegraphics[width=2.5in]{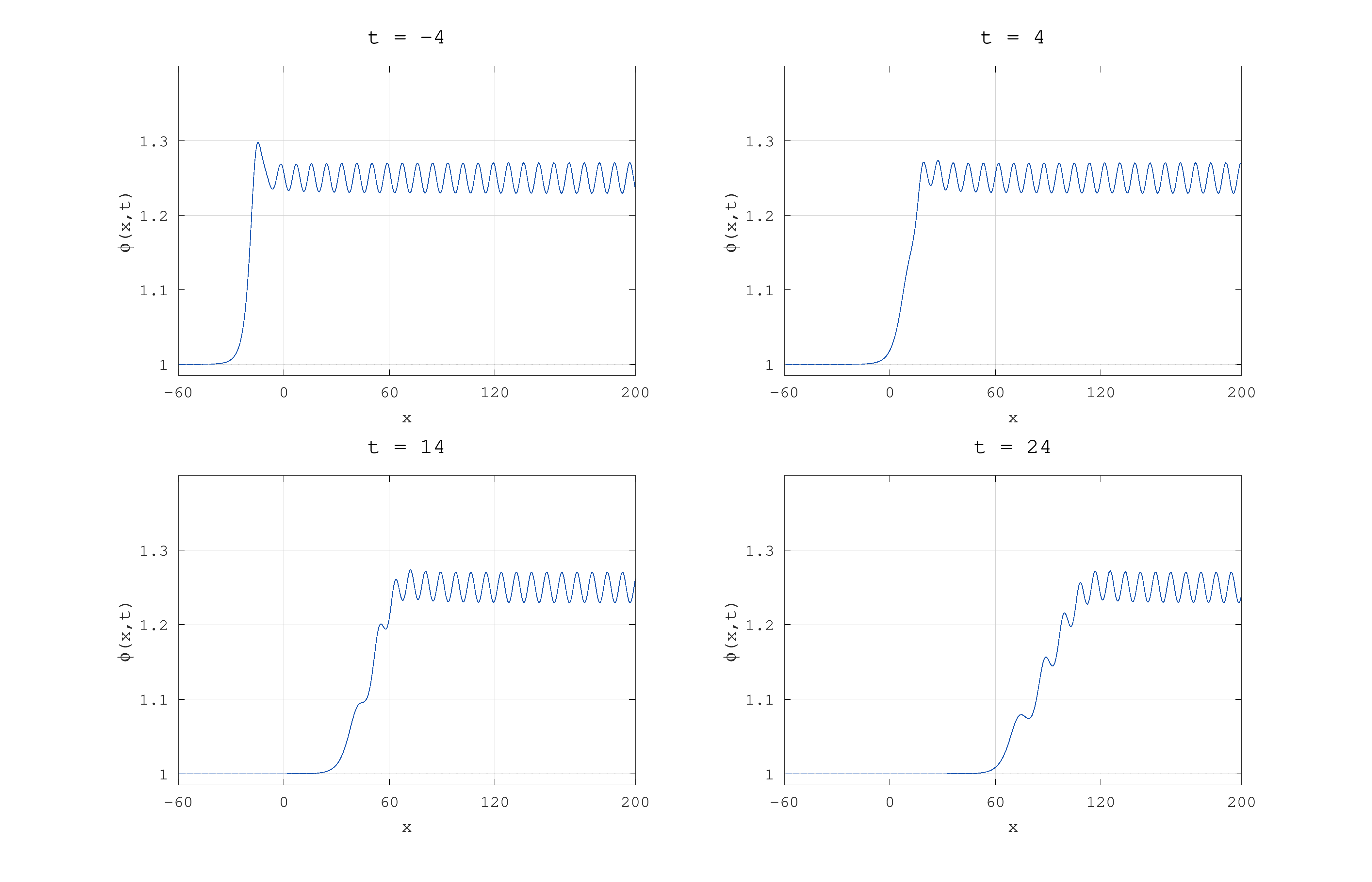}
	}
	\\
	\subfigure[soliton-soliton gas interaction]
	{
		\includegraphics[width=2.5in]{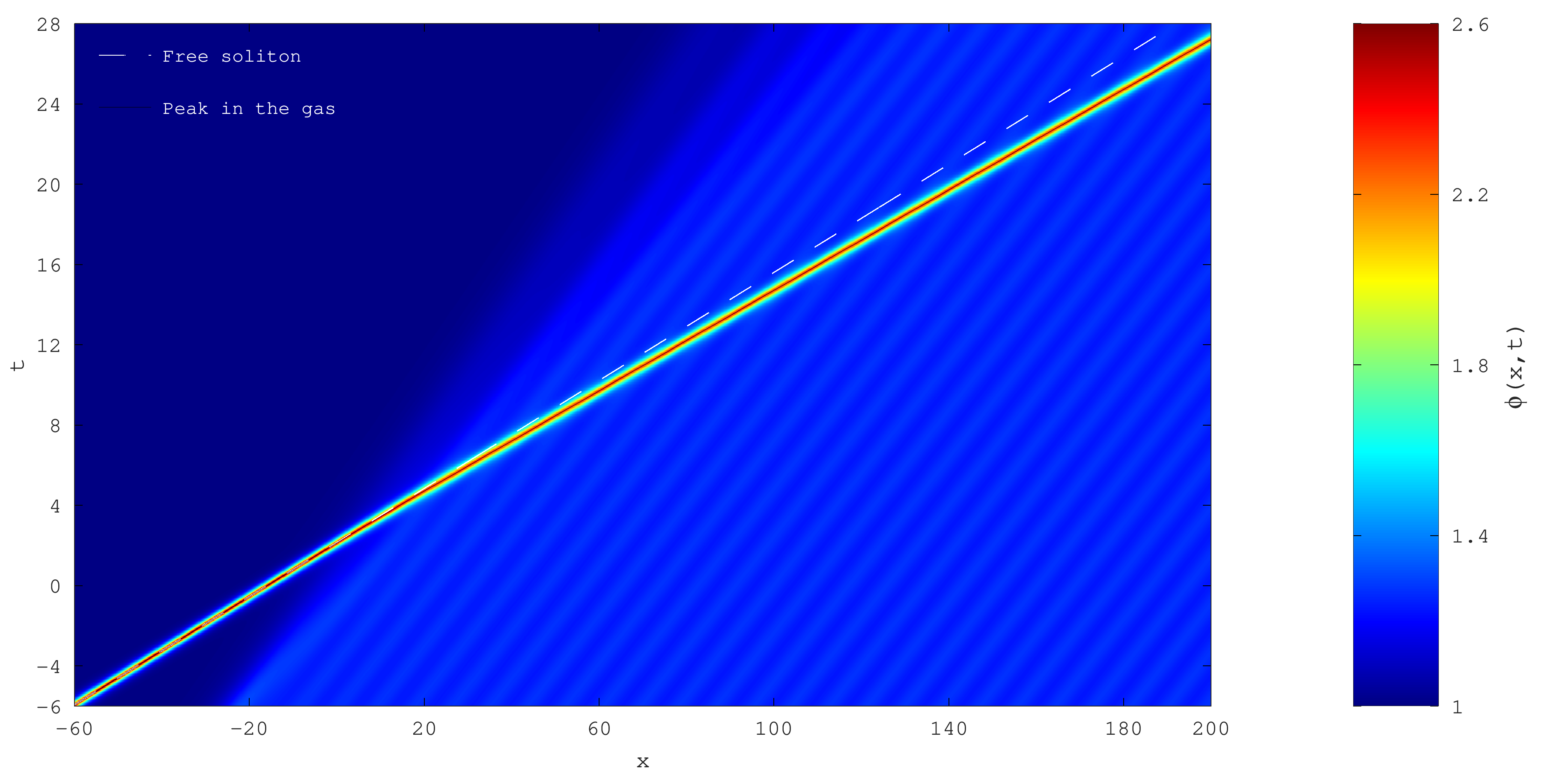}
	}
	\hspace*{3em}
	\subfigure[soliton-soliton gas interaction]
	{\
		\includegraphics[width=2.5in]{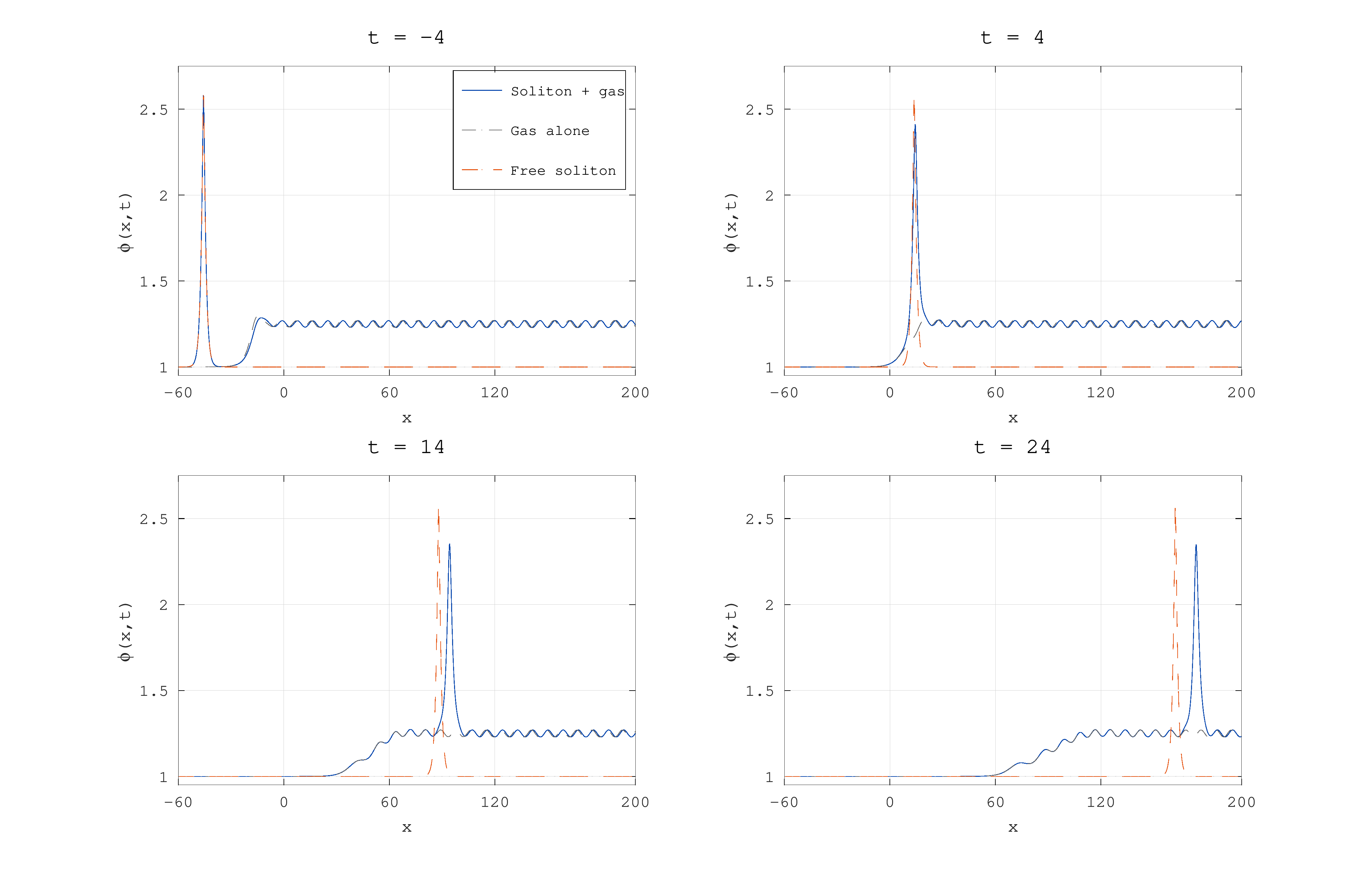}
	}
	\caption{(a)(b) Dense soliton gas to the mCH equation in Example \ref{ex:mch-separated} with $\Gamma=[0.1,0.3]$; (c)(d) interaction between soliton and soliton gas with $\Gamma=\{0.4\}\cup[0.1,0.3]$. }\label{mch-solitongas}
\end{figure}

\section{Proof of Thm. \ref{thm:main}}\label{sec:proof-main}

We use only standard trace-ideal facts.  If $\mathcal{K}(t)$ is differentiable in trace norm and $I+\mathcal{K}$ is invertible, then Jacobi's formula gives
\begin{equation}\label{eq:jacobi-fredholm}
\partial_t\log\det(I+\mathcal{K})=\Tr\bigl((I+\mathcal{K})^{-1}\mathcal{K}_t\bigr).
\end{equation}
Set
\[
H_\beta^j
=\{h:e^{-\beta u}h(u)\in H^j(\mathbb R_-)\},
\qquad j=0,1,
\]
and assume that, for some $\beta>0$, each $\mathcal K^{(n)}$
is trace class from $L^2(\mathbb R_-)$ into $H_\beta^1$.
The maps
\begin{align}
    Dh=h',\qquad
(\mathcal Vh)(u)=\int_{-\infty}^u h(v)\,\dd v,\label{maps}
\end{align}
are bounded inverses between $H_\beta^1$ and $H_\beta^0$.
Hence Sylvester's identity gives
\[
\begin{aligned}
\det(I+\mathcal D\mathcal K\mathcal V|_{H_\beta^0})
&=\det(I+\mathcal K\mathcal V\mathcal D|_{H_\beta^1})
 =\det(I+\mathcal K|_{H_\beta^1}),\\
\det(I+\mathcal V\mathcal K\mathcal D|_{H_\beta^1})
&=\det(I+\mathcal D\mathcal V\mathcal K|_{H_\beta^0})
 =\det(I+\mathcal K|_{H_\beta^0}).
\end{aligned}
\]
Applying the same identity to the natural inclusions
$H_\beta^1\subset H_\beta^0\subset L^2(\mathbb R_-)$
identifies both determinants with
$\det(I+\mathcal K|_{L^2(\mathbb R_-)})$.
\begin{proof}[Proof of Thm. \ref{thm:main}]
Put
\begin{equation}\label{eq:proof-notation}
 \mathcal{K}= \mathcal{K}^{(n)},\quad  \mathcal{K}^\pm= \mathcal{K}^{(n\pm1)},\quad
 U=(I+ \mathcal{K})^{-1},\quad U^\pm=(I+ \mathcal{K}^\pm)^{-1},\quad
 \mathscr C= \mathcal{K}_r+ \mathcal{K}_l^+.
\end{equation}
Write $K$ and $K^\pm$ for the respective integral kernels of $ \mathcal{K}$ and
$\mathcal{K}^\pm$.
For smoothing integral operators $\mathcal A$ and $\mathcal B$ with continuous kernels $A$
and $B$, respectively, write $[\mathcal A]=A(0,0)$ and $[\mathcal B]=B(0,0)$, and define
\begin{align}
(\mathcal A_l h)(u)
=\int_{-\infty}^0 \partial_u A(u,v)h(v)\,\dd v,
\qquad
(\mathcal A_r h)(u)
=\int_{-\infty}^0 \partial_v A(u,v)h(v)\,\dd v.\label{def-lr-deri}
\end{align}
Thus, the subscripts $l$ and $r$ denote differentiation
of the kernel in its first and second variables, respectively.
For products of integral operators,
\begin{align}
(\mathcal A\mathcal B)_l=\mathcal A_l\mathcal B,
\qquad
(\mathcal A\mathcal B)_r=\mathcal A\mathcal B_r.\label{brac-rule}
\end{align}
Higher and mixed derivatives are defined analogously. Integration by
parts on $\mathbb{R_-}$ gives the identities (also see in \cite{poppe1989})
\begin{equation}\label{eq:poppe-identities}
 [\mathcal A]=\Tr(\mathcal A_l+\mathcal A_r),\qquad [\mathcal A][\mathcal B]=[\mathcal A_r\mathcal B+\mathcal A\mathcal B_l].
\end{equation}
All boundary terms at $-\infty$ vanish by hypothesis.  The same identities,
together with $U=I-\mathcal{K}U=I-U\mathcal{K}$, imply, for every smoothing integral operator
$\mathcal T$,
\begin{align}
 [U\mathcal T U]&=\Tr(U \mathcal T_{x_1}-U\mathcal T U \mathcal{K}_{x_1}),\label{eq:resolvent-bracket-1}\\
 [U\mathcal T U^+]&=\Tr(U \mathcal T_r+\mathcal T_lU^+),\label{eq:resolvent-bracket-2}\\
 [U^+\mathcal T U]&=\Tr(\mathcal T_lU+U^+\mathcal T_r
 -\mathscr C U^+\mathcal T U).\label{eq:resolvent-bracket-3}
\end{align}
Here Eq. \eqref{eq:resolvent-bracket-1} assumes
$\mathcal T_{x_1}=\mathcal T_l+\mathcal T_r$, whereas
Eqs.\eqref{eq:resolvent-bracket-2}-\eqref{eq:resolvent-bracket-3} hold for every such
$\mathcal T$.  
These are identities between continuous endpoint kernels; no value of the distributional kernel of the identity operator is used.

Define
\begin{equation}\label{eq:w-def}
 w=(\log\tau_n)_{x_1}=[\mathcal{K}U]=[I-U],\qquad
 w^+=(\log\tau_{n+1})_{x_1}=[\mathcal{K}^+U^+]=[I-U^+].
\end{equation}
The primitives appearing below are normalized by
\begin{equation}\label{eq:x1-primitive}
 \partial_{x_1}^{-1}(w-w^+)_{x_j}
 =\partial_{x_j}\log\frac{\tau_n}{\tau_{n+1}}.
\end{equation}

\medskip
\noindent\emph{The first positive mKP equation.}
After division of Eq. \eqref{eq:mkp-first} by $\tau_n\tau_{n+1}$, its residual is
\begin{equation}\label{eq:first-normalized}
 w_{x_1}+w^+_{x_1}+(w-w^+)^2
 +\partial_{x_1}^{-1}(w-w^+)_{x_2}.
\end{equation}
Using  Eq. \eqref{eq:poppe-identities} and $\mathcal{K}_l+\mathcal{K}_r^+=0$ gives
\begin{equation}\label{eq:first-bracket}
 w_{x_1}+w^+_{x_1}+(w-w^+)^2
 =[U\mathscr C U^+]=\Tr(U\mathscr C_r+\mathscr C_lU^+).
\end{equation}
Moreover,
\begin{equation}\label{eq:first-x2}
 \partial_{x_1}^{-1}(w-w^+)_{x_2}
 =\Tr(\mathcal{K}_{x_2}U-\mathcal{K}^+_{x_2}U^+).
\end{equation}
Hence Eq. \eqref{eq:first-normalized} equals
\begin{align}
 \Tr\{U(\mathscr C_r+\mathcal{K}_{x_2})+U^+(\mathscr C_l-\mathcal{K}^+_{x_2})\}
 &=\Tr\{U\partial_l(\mathcal{K}_l+\mathcal{K}_r^+)+U^+\partial_r(\mathcal{K}_l+\mathcal{K}_r^+)\}=0,\label{first-eqn}
\end{align}
where  Eq. \eqref{eq:flow-x2} has been used. This proves  Eq. \eqref{eq:mkp-first}.

\medskip
\noindent\emph{The second positive mKP equation.}
A direct expansion of the Hirota operators gives
\begin{align}\label{eq:second-hirota-normalized}
 \frac{(D_{x_1}^3-3D_{x_1}D_{x_2}-4D_{x_3})
 \tau_n\cdot\tau_{n+1}}{\tau_n\tau_{n+1}}
 ={}&w_{x_1x_1}-w^+_{x_1x_1}
 +3(w-w^+)(w_{x_1}+w^+_{x_1})
 +(w-w^+)^3\notag\\
 &-3(w_{x_2}+w^+_{x_2})
 -3(w-w^+)\partial_{x_1}^{-1}(w-w^+)_{x_2}\notag\\
 &-4\partial_{x_1}^{-1}(w-w^+)_{x_3}.
\end{align}
Eliminating the $x_2$ primitive by the already proved first-flow identity
in  Eq. \eqref{eq:first-normalized}, the vanishing of
 Eq. \eqref{eq:second-hirota-normalized} is equivalent to $B_2=0$, where
\begin{align}\label{eq:B2-def-new}
 B_2={}&w_{x_1x_1}-w^+_{x_1x_1}
 -4\partial_{x_1}^{-1}(w-w^+)_{x_3}-3(w_{x_2}+w^+_{x_2})\notag\\
 &+(w-w^+)\{6(w_{x_1}+w^+_{x_1})+4(w-w^+)^2\}.
\end{align}

We now spell out the operator reduction of this residual.  Differentiating
 Eq. \eqref{eq:w-def} and using $U_{x_j}=-U\mathcal{K}_{x_j}U$ gives
\begin{align}
 w_{x_j}&=[U\mathcal{K}_{x_j}U],\label{eq:w-first-derivatives}\\
 w_{x_1x_1}
 &=\bigl[U\mathcal{K}_{x_1x_1}U-2U\mathcal{K}_{x_1}U\mathcal{K}_{x_1}U\bigr],\label{eq:w-second-x1}
\end{align}
with identical formulas for $w^+$.  In addition,
\begin{equation}\label{eq:primitive-trace-general}
 \partial_{x_1}^{-1}(w-w^+)_{x_j}
 =\Tr(\mathcal{K}_{x_j}U-\mathcal{K}^+_{x_j}U^+),
 \qquad j=2,3.
\end{equation}
Finally,  Eq. \eqref{eq:first-bracket} may be written as
\begin{equation}\label{eq:first-bracket-abbrev}
 w_{x_1}+w^+_{x_1}+(w-w^+)^2=[U\mathscr C U^+].
\end{equation}
Put $d=w-w^+=[U\mathcal{K}-U^+\mathcal{K}^+]$.  First,
\begin{align}
 6(w_{x_1}+w^+_{x_1})+4d^2
 &=2(w_{x_1}+w^+_{x_1})
   +4\{w_{x_1}+w^+_{x_1}+d^2\}\notag\\
 &=2\bigl([U\mathcal{K}_{x_1}U]+[U^+\mathcal{K}^+_{x_1}U^+]\bigr)
   +4[U\mathscr C U^+].
 \label{eq:B2-split}
\end{align}
The three products of brackets reduce term by term as follows:
\begin{align}
 4[U\mathcal{K}-U^+\mathcal{K}^+][U\mathscr C U^+]
 &=4\bigl[U(-\mathscr C_l+\mathcal{K}_{x_1}U\mathscr C)U^+
          +U^+\mathscr C_lU^+\bigr],
 \label{eq:B2-product-Q}\\
 2[U\mathcal{K}-U^+\mathcal{K}^+][U\mathcal{K}_{x_1}U]
 &=2\bigl[U(-\mathcal{K}_{x_1l}+\mathcal{K}_{x_1}U\mathcal{K}_{x_1})U
          +U^+\mathcal{K}_{x_1l}U\bigr],
 \label{eq:B2-product-F}\\
 2[U\mathcal{K}-U^+\mathcal{K}^+][U^+\mathcal{K}^+_{x_1}U^+]
 &=2\bigl[U(-\mathcal{K}^+_{x_1l}+\mathscr C U^+\mathcal{K}^+_{x_1})U^+\bigr]\notag\\
 &\quad
   +2\bigl[U^+(\mathcal{K}^+_{x_1l}-\mathcal{K}^+_{x_1}U^+\mathcal{K}^+_{x_1})U^+\bigr].
 \label{eq:B2-product-Fplus}
\end{align}
The factors $2$ in both terms on the right of
 Eq. \eqref{eq:B2-product-Fplus} are essential.  These formulas follow directly
from $[\mathcal A][\mathcal B]=[\mathcal{A}_r\mathcal{B}+\mathcal{A}\mathcal{B}_l]$, the two resolvent identities, and
$\mathcal{K}_l+\mathcal{K}_r^+=0$.

Together with
\begin{align}
 w_{x_1x_1}-w^+_{x_1x_1}
 &=[U\mathcal{K}_{x_1x_1}U]-2[U\mathcal{K}_{x_1}U\mathcal{K}_{x_1}U]\notag\\
 &\quad-[U^+\mathcal{K}^+_{x_1x_1}U^+]
       +2[U^+\mathcal{K}^+_{x_1}U^+\mathcal{K}^+_{x_1}U^+],
 \label{eq:B2-second-ledger}\\
 -3(w_{x_2}+w^+_{x_2})
 &=-3\bigl([U\mathcal{K}_{x_2}U]+[U^+\mathcal{K}^+_{x_2}U^+]\bigr),
 \label{eq:B2-x2-ledger}\\
 -4\partial_{x_1}^{-1}d_{x_3}
 &=-4\Tr(\mathcal{K}_{x_3}U-\mathcal{K}^+_{x_3}U^+),
 \label{eq:B2-x3-ledger}
\end{align}
 Eqs. \eqref{eq:B2-split}-\eqref{eq:B2-product-Fplus}
give
\begin{align}
 &w_{x_1x_1}-w^+_{x_1x_1}-3(w_{x_2}+w^+_{x_2})
 +d\{6(w_{x_1}+w^+_{x_1})+4d^2\}\notag\\
 &\qquad
 =\bigl[-4U\mathcal{K}_{x_2}U+2U^+\mathcal{K}^+_{x_2}U^+
        +2U^+\mathcal{K}_{x_1l}U+2UGU^+\bigr],
 \label{eq:B2-bracket-ledger}
\end{align}
where
\begin{equation}\label{eq:B2-G}
 G=-2\mathscr C_l+2\mathcal{K}_{x_1}U\mathscr C-\mathcal{K}^+_{x_1l}
   +\mathscr C U^+\mathcal{K}^+_{x_1}.
\end{equation}
The four remaining brackets are
\begin{align}
 -4[U\mathcal{K}_{x_2}U]
 &=\Tr\bigl(-4U\mathcal{K}_{x_1x_2}+4U\mathcal{K}_{x_2}U\mathcal{K}_{x_1}\bigr),
 \label{eq:B2-trace-1}\\
 2[U^+\mathcal{K}^+_{x_2}U^+]
 &=\Tr\bigl(2U^+\mathcal{K}^+_{x_1x_2}
       -2U^+\mathcal{K}^+_{x_2}U^+\mathcal{K}^+_{x_1}\bigr),
 \label{eq:B2-trace-2}\\
 2[U^+\mathcal{K}_{x_1l}U]
 &=\Tr\bigl(2\mathcal{K}_{x_1ll}U+2U^+\mathcal{K}_{x_1lr}
       -2\mathscr C U^+\mathcal{K}_{x_1l}U\bigr),
 \label{eq:B2-trace-3}\\
 2[UGU^+]
 &=\Tr\bigl(2G_lU^+ +2UG_r\bigr),
 \label{eq:B2-trace-4}
\end{align}
with
\begin{align*}
 G_l&=-2\mathscr C_{ll}+2\mathcal{K}_{x_1l}U\mathscr C-\mathcal{K}^+_{x_1ll}
      +\mathscr C_lU^+\mathcal{K}^+_{x_1},\\
 G_r&=-2\mathscr C_{lr}+2\mathcal{K}_{x_1}U\mathscr C_r-\mathcal{K}^+_{x_1lr}
      +\mathscr C U^+\mathcal{K}^+_{x_1r}.
\end{align*}
Substitution of
Eqs. \eqref{eq:B2-x3-ledger}-\eqref{eq:B2-bracket-ledger} and \eqref{eq:B2-trace-1}-\eqref{eq:B2-trace-4},
followed by cyclicity of the trace, gives
\begin{align}
 B_2={}&\Tr(LU)+\Tr(L^+U^+)\notag\\
 &+4\Tr\bigl(U\mathcal{K}_{x_1}U(\mathcal{K}_{x_2}+\mathscr C_r)\bigr)
 -2\Tr\bigl(U^+\mathcal{K}^+_{x_1}U^+(\mathcal{K}^+_{x_2}-\mathscr C_l)\bigr)\notag\\
 &+2\Tr\bigl(U\mathscr C U^+\partial_{x_1}(\mathcal{K}_l+\mathcal{K}_r^+)\bigr),
 \label{eq:B2-residual-new}
\end{align}
where
\begin{align}
 L={}&-4\mathcal{K}_{x_1x_2}+2\mathcal{K}_{x_1ll}-4\mathscr C_{lr}
      -2\mathcal{K}^+_{x_1lr}-4\mathcal{K}_{x_3},\label{eq:B2-L}\\
 L^+={}&2\mathcal{K}^+_{x_1x_2}+2\mathcal{K}_{x_1lr}-4\mathscr C_{ll}
      -2\mathcal{K}^+_{x_1ll}+4\mathcal{K}^+_{x_3}.
 \label{eq:B2-Lplus}
\end{align}
Set
\begin{equation}\label{eq:S-def}
 \mathscr A=\mathcal{K}_l+\mathcal{K}_r^+.
\end{equation}
The two nonlinear factors in  Eq. \eqref{eq:B2-residual-new} are immediately
\begin{align}
 \mathcal{K}_{x_2}+\mathscr C_r
 &=(\mathcal{K}_{ll}-\mathcal{K}_{rr})+(\mathcal{K}_{rr}+\mathcal{K}^+_{lr})
 =\mathcal{K}_{ll}+\mathcal{K}^+_{lr}=\mathscr A_l,\label{eq:B2-factor-left}\\
 \mathcal{K}^+_{x_2}-\mathscr C_l
 &=(\mathcal{K}^+_{ll}-\mathcal{K}^+_{rr})-(\mathcal{K}_{lr}+\mathcal{K}^+_{ll})
 =-(\mathcal{K}_{lr}+\mathcal{K}^+_{rr})=-\mathscr A_r.\label{eq:B2-factor-right}
\end{align}
It remains only to factor the two linear coefficients.  Using
\begin{equation*}
 \mathcal{K}_{x_1x_2}=\mathcal{K}_{lll}+\mathcal{K}_{llr}-\mathcal{K}_{lrr}-\mathcal{K}_{rrr},
 \qquad
 \mathcal{K}_{x_1ll}=\mathcal{K}_{lll}+\mathcal{K}_{llr},
 \qquad
 \mathscr C_{lr}=\mathcal{K}_{lrr}+\mathcal{K}^+_{llr},
\end{equation*}
and the corresponding formulas for $\mathcal{K}^+$, one obtains
\begin{align}
 L={}&-4(\mathcal{K}_{lll}+\mathcal{K}_{llr}-\mathcal{K}_{lrr}-\mathcal{K}_{rrr})
      +2(\mathcal{K}_{lll}+\mathcal{K}_{llr})\notag\\
 &-4(\mathcal{K}_{lrr}+\mathcal{K}^+_{llr})
      -2(\mathcal{K}^+_{llr}+\mathcal{K}^+_{lrr})
      -4(\mathcal{K}_{lll}+\mathcal{K}_{rrr})\notag\\
 ={}&-6(\mathcal{K}_{lll}+\mathcal{K}^+_{llr})-2(\mathcal{K}_{llr}+\mathcal{K}^+_{lrr})
 =-6\mathscr A_{ll}-2\mathscr A_{lr},
 \label{eq:B2-L-factorized}
\end{align}
whereas
\begin{align}
 L^+={}&2(\mathcal{K}^+_{lll}+\mathcal{K}^+_{llr}-\mathcal{K}^+_{lrr}-\mathcal{K}^+_{rrr})
      +2(\mathcal{K}_{llr}+\mathcal{K}_{lrr})\notag\\
 &-4(\mathcal{K}_{llr}+\mathcal{K}^+_{lll})
      -2(\mathcal{K}^+_{lll}+\mathcal{K}^+_{llr})
      +4(\mathcal{K}^+_{lll}+\mathcal{K}^+_{rrr})\notag\\
 ={}&2(\mathcal{K}_{lrr}+\mathcal{K}^+_{rrr})-2(\mathcal{K}_{llr}+\mathcal{K}^+_{lrr})
 =2\mathscr A_{rr}-2\mathscr A_{lr}.
 \label{eq:B2-Lplus-factorized}
\end{align}
Every summand in  Eq. \eqref{eq:B2-residual-new} therefore contains $\mathscr A$
or one of its left/right derivatives.  The adjacent-order condition in
 Eq. \eqref{eq:adjacent-order} is precisely $\mathscr A=0$, so $B_2=0$.  This
proves  Eq. \eqref{eq:mkp-second}.

\medskip
\noindent\emph{The first negative mKP equation.}
In addition to  Eq. \eqref{eq:x1-primitive}, normalize the primitive occurring
below by
\begin{equation}\label{eq:negative-x1-primitive}
 \partial_{x_1}^{-1}
 (w_{x_{-1}x_2}+w^+_{x_{-1}x_2})
 =\partial_{x_{-1}}\partial_{x_2}
 \log(\tau_n\tau_{n+1}).
\end{equation}
After expanding the Hirota operators and using the first positive mKP equation to eliminate the same $x_2$ primitive as above, the normalized residual of  Eq. \eqref{eq:mkp-negative} is
\begin{align}\label{eq:B3-def-new}
 B_3={}&w_{x_1x_{-1}}-w^+_{x_1x_{-1}}
 +2(w_{x_{-1}}+w^+_{x_{-1}})(w-w^+)\notag\\
 &+\partial_{x_1}^{-1}(w_{x_{-1}x_2}+w^+_{x_{-1}x_2})
 -4(w-w^+).
\end{align}
Here
\begin{equation}\label{eq:negative-w-derivative}
 w_{x_{-1}}=[U\mathcal{K}_{x_{-1}}U],
 \qquad
 w^+_{x_{-1}}=[U^+\mathcal{K}^+_{x_{-1}}U^+].
\end{equation}
In particular,
\begin{align}\label{eq:negative-mixed-derivative}
 w_{x_1x_{-1}}
 =\bigl[&U\mathcal{K}_{x_1x_{-1}}U
 -U\mathcal{K}_{x_1}U\mathcal{K}_{x_{-1}}U
 -U\mathcal{K}_{x_{-1}}U\mathcal{K}_{x_1}U\bigr],
\end{align}
with the analogous identity for $w^+$.  Put again $d=w-w^+$.  The product
term in  Eq. \eqref{eq:B3-def-new} is
\begin{align}
 2(w_{x_{-1}}+w^+_{x_{-1}})d
 ={}&2\bigl[U(-\mathcal{K}_{x_{-1}l}+\mathcal{K}_{x_1}U\mathcal{K}_{x_{-1}})U
              +U^+\mathcal{K}_{x_{-1}l}U\bigr]\notag\\
 &+2\bigl[U(-\mathcal{K}^+_{x_{-1}l}+\mathscr C U^+\mathcal{K}^+_{x_{-1}})U^+\bigr]\notag\\
 &+2\bigl[U^+(\mathcal{K}^+_{x_{-1}l}
              -\mathcal{K}^+_{x_1}U^+\mathcal{K}^+_{x_{-1}})U^+\bigr].
 \label{eq:B3-product-ledger}
\end{align}
Jacobi's formula gives the primitive term explicitly as
\begin{align}
 \partial_{x_1}^{-1}(w_{x_{-1}x_2}+w^+_{x_{-1}x_2})
 ={}&\Tr\bigl(\mathcal{K}_{x_{-1}x_2}U-\mathcal{K}_{x_2}U\mathcal{K}_{x_{-1}}U\bigr)\notag\\
 &+\Tr\bigl(\mathcal{K}^+_{x_{-1}x_2}U^+
            -\mathcal{K}^+_{x_2}U^+\mathcal{K}^+_{x_{-1}}U^+\bigr).
 \label{eq:B3-primitive-ledger}
\end{align}
Using $\mathcal{K}_{x_1}=\mathcal{K}_l+\mathcal{K}_r$, $\mathcal{K}_{x_2}=\mathcal{K}_{ll}-\mathcal{K}_{rr}$, and their
$x_{-1}$ derivatives in
 Eqs. \eqref{eq:negative-mixed-derivative}-\eqref{eq:B3-product-ledger}, we obtain
\begin{align}
 &w_{x_1x_{-1}}-w^+_{x_1x_{-1}}
 +2(w_{x_{-1}}+w^+_{x_{-1}})d\notag\\
 ={}&[U(\mathcal{K}_{x_{-1}r}-\mathcal{K}_{x_{-1}l})U]
 +[U(\mathcal{K}_{x_1}U\mathcal{K}_{x_{-1}}-\mathcal{K}_{x_{-1}}U\mathcal{K}_{x_1})U]\notag\\
 &-[U^+(\mathcal{K}^+_{x_{-1}r}-\mathcal{K}^+_{x_{-1}l})U^+]\notag\\
 &-[U^+(\mathcal{K}^+_{x_1}U^+\mathcal{K}^+_{x_{-1}}
        -\mathcal{K}^+_{x_{-1}}U^+\mathcal{K}^+_{x_1})U^+]\notag\\
 &+2[U(-\mathcal{K}^+_{x_{-1}l}+\mathscr C U^+\mathcal{K}^+_{x_{-1}})U^+]
 +2[U^+\mathcal{K}_{x_{-1}l}U].
 \label{eq:B3-bracket-ledger}
\end{align}
The three bracket reductions needed here are
\begin{align}
 \bigl[U(\mathcal{K}_{x_{-1}r}-\mathcal{K}_{x_{-1}l})U
 +U(\mathcal{K}_{x_1}U\mathcal{K}_{x_{-1}}-\mathcal{K}_{x_{-1}}U\mathcal{K}_{x_1})U\bigr]
 &=\Tr\bigl(-U\mathcal{K}_{x_{-1}x_2}+\mathcal{K}_{x_2}U\mathcal{K}_{x_{-1}}U\bigr),
 \label{eq:B3-trace-1}\\
 2[U(-\mathcal{K}^+_{x_{-1}l}+\mathscr C U^+\mathcal{K}^+_{x_{-1}})U^+]
 &=2\Tr\bigl((\mathscr C_lU^+\mathcal{K}^+_{x_{-1}}-\mathcal{K}^+_{x_{-1}ll})U^+\bigr)\notag\\
 &\quad
 +2\Tr\bigl(U(\mathscr C U^+\mathcal{K}^+_{x_{-1}r}-\mathcal{K}^+_{x_{-1}lr})\bigr),
 \label{eq:B3-trace-2}\\
 2[U^+\mathcal{K}_{x_{-1}l}U]
 &=2\Tr\bigl(\mathcal{K}_{x_{-1}ll}U+U^+\mathcal{K}_{x_{-1}lr}
             -\mathscr C U^+\mathcal{K}_{x_{-1}l}U\bigr).
 \label{eq:B3-trace-3}
\end{align}
Combining
 Eqs. \eqref{eq:B3-primitive-ledger}-\eqref{eq:B3-trace-3}
and cancelling the $\mathcal{K}_{x_{-1}x_2}$ terms gives
\begin{align}\label{eq:B3-residual-new}
 &w_{x_1x_{-1}}-w^+_{x_1x_{-1}}
 +2(w_{x_{-1}}+w^+_{x_{-1}})(w-w^+)\notag\\
 &\qquad
 +\partial_{x_1}^{-1}(w_{x_{-1}x_2}+w^+_{x_{-1}x_2})\notag\\
 &=4\Tr\bigl(
 -\mathcal{K}^+_{x_{-1}lr}U-\mathcal{K}^+_{x_{-1}rr}U^+
 +U\mathscr C U^+\mathcal{K}^+_{x_{-1}r}
 \bigr).
\end{align}
For completeness, we now derive each of the three negative-flow identities used to simplify the right-hand side.  From  Eq. \eqref{eq:flow-xm1},
\begin{equation}\label{eq:negative-flow-Fplus}
 \mathcal{K}^+_{x_{-1}}=\partial_l^{-1}\mathcal{K}^++\partial_r^{-1}\mathcal{K}^+.
\end{equation}
Differentiating once in $r$ and using the adjacent-order relation $\mathcal{K}_r^+=-\mathcal{K}_l$ gives
\begin{align}
 \mathcal{K}^+_{x_{-1}r}
 &=\partial_l^{-1}\mathcal{K}_r^++\mathcal{K}^+
 =-\partial_l^{-1}\mathcal{K}_l+\mathcal{K}^+
 =\mathcal{K}^+-\mathcal{K}.\label{eq:negative-id-r}
\end{align}
The vanishing normalization of the primitives at $-\infty$ is used in the last equality.  A further right derivative yields
\begin{equation}\label{eq:negative-id-rr}
 \mathcal{K}^+_{x_{-1}rr}=\mathcal{K}_r^+-\mathcal{K}_r.
\end{equation}
On the other hand, differentiating  Eq. \eqref{eq:negative-flow-Fplus} once in each kernel variable gives directly
\begin{equation}\label{eq:negative-id-lr}
 \mathcal{K}^+_{x_{-1}lr}=\mathcal{K}_r^++\mathcal{K}_l^+.
\end{equation}
Substitution of  Eq. \eqref{eq:negative-id-r}-\eqref{eq:negative-id-lr} into  Eq. \eqref{eq:B3-residual-new} gives
\begin{align}
 \mathcal N
 :=\frac14\,\text{right-hand side of \cref{eq:B3-residual-new}}
 ={}&\Tr\bigl(-(\mathcal{K}_l^++\mathcal{K}_r^+)U-(\mathcal{K}_r^+-\mathcal{K}_r)U^+\notag\\
 &\hspace{4.3em}+U\mathscr C U^+(\mathcal{K}^+-\mathcal{K})\bigr).
 \label{eq:B3-N}
\end{align}
The last term is reduced by the resolvent identity
\begin{equation}\label{eq:resolvent-difference}
 U^+(\mathcal{K}^+-\mathcal{K})U
 =U^+\bigl[(I+\mathcal{K}^+)-(I+\mathcal{K})\bigr]U
 =U-U^+.
\end{equation}
Cyclicity of the trace therefore gives
\begin{equation}\label{eq:B3-mixed-reduction}
 \Tr\bigl(U\mathscr C U^+(\mathcal{K}^+-\mathcal{K})\bigr)
 =\Tr\bigl(\mathscr C(U-U^+)\bigr).
\end{equation}
Using $\mathscr C=\mathcal{K}_r+\mathcal{K}_l^+$ in  Eq. \eqref{eq:B3-N} and then using
$\mathcal{K}_r^+=-\mathcal{K}_l$, we find
\begin{align}
 \mathcal N
 ={}&\Tr\bigl((\mathcal{K}_r-\mathcal{K}_r^+)U-(\mathcal{K}_r^++\mathcal{K}_l^+)U^+\bigr)\notag\\
 ={}&\Tr\bigl((\mathcal{K}_l+\mathcal{K}_r)U-(\mathcal{K}_l^++\mathcal{K}_r^+)U^+\bigr)\notag\\
 ={}&\Tr(\mathcal{K}_{x_1}U-\mathcal{K}^+_{x_1}U^+)
 =w-w^+.
 \label{eq:B3-final-trace}
\end{align}
Consequently, the first four terms in  Eq. \eqref{eq:B3-def-new} equal $4(w-w^+)$, which cancels the last term. Thus, $B_3=0$ and  Eq. \eqref{eq:mkp-negative} follow.

\medskip
\noindent\emph{The 2DTL equation.}
After division of Eq. \eqref{eq:2dtl-central} by $\tau_n^2$, it remains to prove
\begin{equation}\label{eq:toda-normalized}
 w_{x_{-1}}+\frac{\tau_{n+1}\tau_{n-1}}{\tau_n^2}-1=0,
 \qquad w_{x_{-1}}=[U\mathcal{K}_{x_{-1}}U].
\end{equation}
Let $\mathsf D$ and $\mathcal V$ follow the definition in Eq. \eqref{maps}.  Define
\begin{equation}\label{eq:toda-vectors}
 \begin{aligned}
  \operatorname{ev}_0(\varphi)&=\varphi(0),&
  \qquad \ell_0&=\operatorname{ev}_0\circ\mathcal V,\\
  \mathbf K_0(u)&=K(u,0),&
  \qquad \mathbf K_0^+(u)&=K^+(u,0),\\
  \mathbf v_+&=\mathcal V\mathbf K_0^+,&
  \qquad \mathbf v_-&=U\mathbf K_0.
 \end{aligned}
\end{equation}
Here, $\operatorname{ev}_0$ acts on $\mathcal H^1_\beta$ and $\ell_0$ on
$\mathcal H^0_\beta$; neither is being used as a bounded functional on $L^2(\R_-)$.

The adjacent-order relations at $n-1$ and $n$, followed by an integration
by parts, give the two operator identities
\begin{equation}\label{eq:toda-rank-one-operator}
 \mathsf D \mathcal{K}^-\mathcal V=\mathcal{K}-\mathbf K_0\otimes\ell_0,
 \qquad
 \mathcal V \mathcal{K}^+\mathsf D=\mathcal{K}+\mathbf v_+\otimes\operatorname{ev}_0.
\end{equation}
The first equality acts on $\mathcal H^0_\beta(\R_-)$ and the second on
$\mathcal H^1_\beta(\R_-)$, thus $\mathbf K_0\otimes\ell_0$ and
$\mathbf v_+\otimes\operatorname{ev}_0$ are bounded rank-one maps on the
respective levels.
Thus, we have
\begin{equation}\label{eq:toda-neighbour-ratios}
 \frac{\tau_{n-1}}{\tau_n}=1-\delta_-,
 \qquad
 \frac{\tau_{n+1}}{\tau_n}=1+\delta_+,
 \qquad
 \delta_-=\ell_0(\mathbf v_-),\quad
 \delta_+=\operatorname{ev}_0(U\mathbf v_+).
\end{equation}
Consequently,
\begin{equation}\label{eq:toda-product-ratio}
 \frac{\tau_{n+1}\tau_{n-1}}{\tau_n^2}-1
 =-\delta_-+\delta_+-\delta_-\delta_+.
\end{equation}

It remains to compute the logarithmic derivative in
 Eq. \eqref{eq:toda-normalized}.  For the smoothing operator
$\mathcal{K}_{x_{-1}}$, whose kernel is $K_{x_{-1}}$, the identity
$\mathcal{K}_{x_{-1}}U=\mathcal{K}_{x_{-1}}-\mathcal{K}_{x_{-1}}U\mathcal{K}$ and $K(\,\cdot\,,0)=\mathbf K_0$ show that the endpoint column
of $\mathcal{K}_{x_{-1}}U$ is $K_{x_{-1}}(\,\cdot\,,0)-\mathcal{K}_{x_{-1}}\mathbf v_-$.
Therefore, we have
\begin{equation}\label{eq:toda-endpoint-column}
 [U\mathcal{K}_{x_{-1}}U]
 =\operatorname{ev}_0U\bigl(K_{x_{-1}}(\,\cdot\,,0)-\mathcal{K}_{x_{-1}}\mathbf v_-\bigr).
\end{equation}
The negative flow and the adjacent-order relation give
\begin{equation}\label{eq:toda-endpoint-flow}
 K_{x_{-1}}(\,\cdot\,,0)=\mathcal V\mathbf K_0-\mathbf v_+.
\end{equation}
An integration by parts in the right primitive gives
\begin{align}
 \mathcal{K}_{x_{-1}}\mathbf v_-
 &=\mathcal V \mathcal{K}\mathbf v_--\mathcal{K}\mathcal V\mathbf v_-
   -\delta_-\mathbf v_+\notag\\
 &=\delta_-(\mathcal V\mathbf K_0-\mathbf v_+)
   -(\mathcal{K}-\mathcal{K}^-)\mathcal V\mathbf v_-,
 \label{eq:toda-F-p}
\end{align}
where the second equality uses
\begin{equation}\label{eq:toda-J-identities}
 \mathcal V\mathbf K_0=\mathcal V\mathbf v_-+\mathcal V \mathcal{K}\mathbf v_-,
 \qquad
 \mathcal{K}^-\mathcal V\mathbf v_-
 =\mathcal V \mathcal{K}\mathbf v_--\delta_-\mathcal V\mathbf K_0.
\end{equation}
Substitution of  Eq. \eqref{eq:toda-endpoint-flow}-\eqref{eq:toda-F-p} into
Eq. \eqref{eq:toda-endpoint-column} now gives the explicit cancellation
\begin{align}
 w_{x_{-1}}
 &=(1-\delta_-)\operatorname{ev}_0U
   (\mathcal V\mathbf K_0-\mathbf v_+)
   +\operatorname{ev}_0U(\mathcal{K}-\mathcal{K}^-)\mathcal V\mathbf v_-\notag\\
 &=-(1-\delta_-)\delta_+
   +\operatorname{ev}_0U\bigl(\mathcal V\mathbf K_0
     +\mathcal{K}\mathcal V\mathbf v_--\mathcal V \mathcal{K}\mathbf v_-\bigr)\notag\\
 &=-(1-\delta_-)\delta_+
   +\operatorname{ev}_0U(I+\mathcal{K})\mathcal V\mathbf v_-\notag\\
 &=\delta_--\delta_++\delta_-\delta_+.
 \label{eq:toda-log-derivative}
\end{align}
The identities in  Eqs. \eqref{eq:toda-product-ratio} and \eqref{eq:toda-log-derivative} prove
 Eq. \eqref{eq:toda-normalized}, and hence the 2DTL equation directly.
\end{proof}

\section{Non-isospectral reduction and deformed Bessel kernels}\label{sec:nonisospectral}

In this section, we establish the non-isospectral Bessel reductions presented in Sec. \ref{subsec:main-nonautonomous-results}. We begin in Sec. \ref{subsec:classical-bessel} and \ref{subsec:bessel-main} by providing the proofs of Prop. \ref{prop:cly-toda} and Thm. \ref{thm:bessel-reduction}, respectively. Subsequently, in Sec. \ref{subsec:bessel-inde}, we demonstrate how the finite-temperature deformed Bessel kernel naturally generates solutions to the reduced equations.

\subsection{Proof of Prop. \ref{prop:cly-toda}}\label{subsec:classical-bessel}
 \begin{proof}[Proof of Prop. \ref{prop:cly-toda}]
We consider the following kernel
\begin{align}
K_{\mathrm{Toda}}(u,v;x_{-1},x_1,\alpha) = \int_\gamma \frac{\dd q}{2\pi i}\ \int_{\hat\gamma} \frac{\dd p}{2\pi i} \frac{e^{(u+x_1) q + \frac{x_{-1}}{q} - (v+x_1) p - \frac{x_{-1}}{ p}}}{p - q} \left(\frac{p}{q}\right)^{\alpha} ,\quad u,v>0,\ x_{-1}<0,
\end{align}
acting on $L^2([0,1])$. A direct calculation verifies that $K_{\mathrm{Toda}}(u,v;x_{-1},x_1,\alpha)$ satisfies the differential relations \eqref{eq:flow-x1}, \eqref{eq:flow-xm1}, and \eqref{eq:adjacent-order}, where we treat the parameter $\alpha$ as the discrete lattice variable $n$. Although the kernel acts on $(0,1)$ instead of the half-line as in Thm. \ref{thm:main}, we can show that $\det\bigl(I-\mathcal{K}_{\mathrm{Toda}}(x_{-1},x_1,\alpha)\vert_{L^2(0,1)}\bigr)$ is a $\tau$-function for the 2DTL equation \eqref{eq:2dtl-central} using an argument analogous to that in Lem. \ref{lem:app-be-mkp}.

 On the other hand, applying the changes of variables  $\hat{q} =- \frac{q}{4x_{-1}}$ and $\hat{p} =- \frac{p}{4x_{-1}}$, the kernel transforms as
\begin{align*}
   K_{\mathrm{Toda}}(u,v;x_{-1},x_1,\alpha)&= -4x_{-1}\int_\gamma \frac{\dd \hat{q}}{2\pi i}\  \int_{\hat\gamma} \frac{\dd \hat{p}}{2\pi i} \frac{e^{-4x_{-1}(u+x_1) \hat{q} - \frac{1}{4\hat{q}} + 4x_{-1}(v+x_1)\hat{p} + \frac{1}{ 4\hat{p}}}}{\hat{p} - \hat{q}} \left(\frac{\hat{p}}{q}\right)^{\alpha}\\
   &=-4x_{-1}\left(\frac{u+x_1}{v+x_1}\right)^{\frac \alpha 2}K_{\mathrm{Be}}^{(\alpha)} (-4x_{-1}(u+x_1),-4x_{-1}(v+x_1)),
\end{align*}
Since the factor $\left(\frac{u+x_1}{v+x_1}\right)^{\frac \alpha 2}$ leaves the associated Fredholm determinant invariant, we have $G_\alpha(x_{-1},x_1)=\det(I-\mathcal{K}_{\mathrm{Toda}}(x_{-1},x_1,\alpha)|_{L^2([0,1])})=\det(I-\mathcal{K}_{\mathrm{Be}}^{(\alpha)}|_{L^2([s_1,s_2])})$ with $s_1=-4x_{-1}x_1,\ s_2=-4x_{-1}(1+x_1)$. Applying Jacobi's formula, we get
\begin{align*}
    \partial_{x_1}\log G_\alpha&=\partial_{x_1}\log\det(I-\mathcal{K}_{\mathrm{Be}}^{(\alpha)}|_{L^2([s_1,s_2])})=-4x_{-1}R(s_1,s_1)+4x_{-1}R(s_2,s_2),\\
    \partial_{x_{-1}}\log G_\alpha&=\partial_{x_{-1}}\log\det(I-\mathcal{K}_{\mathrm{Be}}^{(\alpha)}|_{L^2([s_1,s_2])})=-4x_1R(s_1,s_1)+4(1+x_1)R(s_2,s_2),
\end{align*}
where $R(u,v)$ is the kernel of the resolvent operator $\mathcal{R}=\mathcal{K}_{\mathrm{Be}}^{(\alpha)}(I-\mathcal{K}_{\mathrm{Be}}^{(\alpha)})^{-1}$. Note that when $x_1=0$ and $\alpha>0$, the above formulas reduce to 
\begin{align}
    ( \partial_{x_1}-x_{-1}\partial_{x_{-1}})\log G_\alpha=-4x_{-1}R(0,0)=0.
\end{align}
Thus, the 2DTL equation \eqref{eq:2dtl-central} reduces to \eqref{eqn:cyl-toda}. This finishes the proof of the proposition.

 \end{proof}
 
The above proof provides an alternative derivation of the well-known connection between the classical Bessel kernel and the Toda lattice.

\subsection{Proof of Thm. \ref{thm:bessel-reduction}}\label{subsec:bessel-main}

\begin{proof}[Proof of Thm. \ref{thm:bessel-reduction}]
    Let $\tau_n(x_1,x_2,x_{-1})$ satisfy  Eqs. \eqref{eq:mkp-first} and \eqref{eq:2dtl-central}. We define the scaled function
\[
\widehat\tau_n(x,t)
=x^{(4(\alpha+n)^2-1)/8}
\tau_n\!\left(\frac{x^2}{4},0,t\right),
\qquad n=-1,0,1.
\]
Applying the reduction constraint \eqref{noniso-reduction} to eliminate the $x_2$-derivatives, and using the fact $\partial_{x_1}=2x^{-1}\partial_x$, the first positive member of the mKP hierarchy \eqref{eq:mkp-first} at $n=-1, 0,$ becomes
\[
D_x^2\widehat\tau_{-1}\cdot\widehat\tau_0=0,
\qquad
D_x^2\widehat\tau_0\cdot\widehat\tau_1=0.
\]
Similarly, evaluating the 2DTL equation \eqref{eq:2dtl-central} at $n=0$ yields
\[
(D_xD_t-x)\widehat\tau_0\cdot\widehat\tau_0
=-\widehat\tau_1\widehat\tau_{-1}.
\]
Denote
\[
\phi=-\partial_x\log\widehat\tau_0+\frac{xt}{2},
\qquad
\psi_\pm=\frac{\widehat\tau_{\pm1}}{\widehat\tau_0}.
\]
Then, the above bilinear identities give us
\[
(\psi_\pm)_{xx}=(2\phi_x-t)\psi_\pm,
\qquad
\phi_t=\frac12\psi_+\psi_-.
\]
Differentiating the above formula and combining them together, we obtain
\begin{equation}
(\psi_+\psi_-)_{xxx}
=4(2\phi_x-t)(\psi_+\psi_-)_x
 +4\phi_{xx}\psi_+\psi_-,
\end{equation}
which is equivalent to Eq. \eqref{n-nkdv1}. This completes the proof of the theorem.
\end{proof}

\subsection{Reduction to the deformed Bessel determinant}\label{subsec:bessel-inde}
Let the weight $\sigma$ satisfy \cite[Assump.~1.1]{ruzza2025}. To rigorously justify the required kernel operations, we first assume $\alpha>5/2$. On the space $L^2(0,x_1)$, we introduce the operator $\mathcal K^{(\mu)}_0$ with kernel
\[
K^{(\mu)}_0(u,v;t)
=\left(\frac vu\right)^{\mu/2}
\int_0^\infty \sigma(t+\lambda)
J_\mu(2\sqrt{u\lambda})
J_\mu(2\sqrt{v\lambda})\,d\lambda,
\qquad \mu=\alpha+n.
\]
To introduce the auxiliary variable $x_2$, we extend the kernel to
\begin{align}
 K^{(\mu)}(u,v;x_2,t)
 =K^{(\mu)}_0(u,v;t)
  +x_2(\partial_u^2-\partial_v^2)K^{(\mu)}_0(u,v;t),\label{eq:app-be-extension}
\end{align}
and define
\begin{equation} \label{eqn:taun-bessel-def}
  \tau_n(x_1,x_2,t)
 =\det\bigl(I-\mathcal K^{(\alpha+n)}(x_2,t)\bigr)_{L^2(0,x_1)},
 \qquad n=-1,0,1.   
\end{equation}
For our purposes, we only require the behavior of this extended kernel and its first $x_2$-derivative evaluated at $x_2=0$. 
The following lemma ensures that this extension is well-defined at the operator level.
\begin{lemma}\label{lem:app-be-analytic}

For each $x_1 > 0$ and $n \in \{-1,0,1\}$, the integral operator $\mathcal K^{(\alpha+n)}(x_2,t)$ acting on $L^2(0,x_1)$ with kernel $K^{(\alpha+n)}(u,v;x_2,t)$ given by \eqref{eq:app-be-extension} is trace class, with trace norm bounded locally uniformly in $(x_1,x_2,t)$. Moreover, we have
\begin{equation}\label{eq:app-be-identification}
 \tau_n(x_1,0,t)
 =Q^{\mathrm{Be}}_{\alpha+n,\sigma}(2\sqrt{x_1},t),
 \qquad n=-1,0,1,
\end{equation}
where $Q_{\alpha+n,\sigma}^{\mathrm{Be}}$ is defined in \eqref{eq:Bessel-deformed-det}.
\end{lemma}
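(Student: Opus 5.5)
The lemma has two parts. First, the extended kernel is trace class, with bounds locally uniform in $(x_1,x_2,t)$. Second, at $x_2=0$ the determinant equals Ruzza's deformed Bessel determinant. I would prove trace class by factoring the kernel, and the identification by a change of variables plus Sylvester's identity.

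\emph{Factorization.} Define $\mathcal B_\mu:L^2(0,x_1)\to L^2(0,\infty)$ by
\[
(\mathcal B_\mu h)(\lambda)=\sqrt{\sigma(t+\lambda)}\int_0^{x_1}J_\mu(2\sqrt{u\lambda})\,h(u)\,\dd u .
\]
Then $K_0^{(\mu)}=M\,\mathcal B_\mu^*\mathcal B_\mu\,M^{-1}$, where $M$ is multiplication by $u^{-\mu/2}$. The conjugation by $M$ is harmless: it is a similarity by a multiplication operator and can be absorbed into the factors. So it suffices to show that $u^{-\mu/2}\mathcal B_\mu^*$ and $\mathcal B_\mu u^{\mu/2}$ are Hilbert--Schmidt. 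This reduces to
\[
\int_0^{x_1}\!\int_0^\infty \sigma(t+\lambda)\,u^{\mp\mu}\,J_\mu(2\sqrt{u\lambda})^2\,\dd\lambda\,\dd u<\infty .
\]
Near $u=0$ we have $J_\mu(2\sqrt{u\lambda})\sim (u\lambda)^{\mu/2}/\Gamma(\mu+1)$, which tames the factor $u^{-\mu}$. For large $\lambda$, $|J_\mu|^2\lesssim(u\lambda)^{-1/2}$, and the decay of $\sigma$ from \cite[Assump.~1.1]{ruzza2025} makes the $\lambda$-integral converge. Both estimates are locally uniform in $(x_1,t)$.

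\emph{The $x_2$-term.} For $x_2(\partial_u^2-\partial_v^2)K_0^{(\mu)}$, I would differentiate under the $\lambda$-integral. Using Bessel's equation, $\partial_u^2\bigl[u^{-\mu/2}J_\mu(2\sqrt{u\lambda})\bigr]$ is a combination of terms $u^{-\mu/2-k}\lambda^{j/2}J_{\mu\pm m}(2\sqrt{u\lambda})$. This gives the same kind of Hilbert--Schmidt factorization, now with powers of $\lambda$ (up to $\lambda^{1}$) and negative powers of $u$ (up to $u^{-2}$). This is where $\alpha>5/2$ is used: for $\mu=\alpha-1>3/2$, the small-$u$ behaviour $u^{\mu/2-2}\cdot u^{\mu/2}$ after conjugation remains square integrable near $0$. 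The $\lambda$-growth is absorbed by the decay of $\sigma$. Checking these exponents carefully is the main obstacle. I would first write out the explicit formula for $\partial_u^2$ of $u^{-\mu/2}J_\mu(2\sqrt{u\lambda})$ and verify each term's integrability, keeping $n=-1$ (the worst order) in view.

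\emph{Identification at $x_2=0$.} Substitute $\lambda=x^{-2}\cdot$(new variable) and $u=x^2\tilde u/4$ with $x=2\sqrt{x_1}$, so that $(0,x_1)$ maps to $(0,1)$-type scaling. The key formula is the representation
\[
\int_0^\infty J_\mu(\sqrt{a s})J_\mu(\sqrt{b s})\,\dd s=4K_{\mathrm{Be}}^{(\mu)}\text{-type}
\]
of the Bessel kernel as $\mathcal A^*\mathcal A$ restricted appropriately. Using it, I would apply Sylvester's identity $\det(I-\mathcal B^*\mathcal B)=\det(I-\mathcal B\mathcal B^*)$, exactly as in the Airy computation of Sec.~\ref{subsec:main-nonautonomous-results}. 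This converts the operator on $L^2(0,x_1)$, whose weight sits inside the $\lambda$-integral, into an operator on $L^2(0,\infty)$ with kernel $\sigma(x^{-2}u+t)K^{(\mu)}_{\mathrm{Be}}(u,v)$, up to a symmetric square-root splitting of $\sigma$ that leaves the determinant unchanged. The result is $Q^{\mathrm{Be}}_{\mu,\sigma}(x,t)$, with the rescaling of $\lambda$ against $u$ fixed by matching $J_\mu(2\sqrt{u\lambda})=J_\mu(\sqrt{4u\lambda})$. Finally, the conjugation $(v/u)^{\mu/2}$ drops out of the determinant.
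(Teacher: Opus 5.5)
Your proposal is correct and follows essentially the paper's argument. The paper writes the kernel as $\int_0^\infty\sigma(t+\lambda)\Phi^{(\mu)}(u,\lambda)\Psi^{(\mu)}(v,\lambda)\,\dd\lambda$, with $\Phi^{(\mu)}=(\lambda/u)^{\mu/2}J_\mu(2\sqrt{u\lambda})$ and $\Psi^{(\mu)}=(v/\lambda)^{\mu/2}J_\mu(2\sqrt{v\lambda})$. It then bounds the trace norm by $\int_0^\infty|\sigma(t+\lambda)|\,\|\partial_u^i\Phi^{(\mu)}\|_2\|\partial_v^j\Psi^{(\mu)}\|_2\,\dd\lambda$, which has the same effect as your Hilbert--Schmidt factorization. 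It identifies the determinant by Sylvester's identity plus gauge invariance, which you spell out in more detail than the paper does.

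The recurrences $\partial_u\Phi^{(\mu)}=-\Phi^{(\mu+1)}$ and $\partial_v\Psi^{(\mu)}=\Psi^{(\mu-1)}$ settle your ``main obstacle'' at once: each derivative stays a single Bessel term, and $\Psi^{(\mu-2)}\sim v^{\mu-2}$ gives exactly $\mu>3/2$. A term-by-term product-rule check would mislead you, because it produces individually non-square-integrable pieces like $u^{-\mu/2-2}J_\mu$ that only cancel in sum.

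One slip to fix: the identity you need is $K_{\mathrm{Be}}^{(\mu)}(a,b)=\frac14\int_0^1 J_\mu(\sqrt{as})J_\mu(\sqrt{bs})\,\dd s$, coming from the $u$-range $(0,x_1)$ rescaled to $(0,1)$. The integral over $(0,\infty)$ that you wrote is the Hankel-orthogonality delta function, not the Bessel kernel.
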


\begin{proof}
We write
\begin{align}
 \Phi^{(\mu)}(u,\lambda)=\left(\frac\lambda u\right)^{\mu/2}
 J_\mu(2\sqrt{u\lambda}),\qquad
 \Psi^{(\mu)}(v,\lambda)=\left(\frac v\lambda\right)^{\mu/2}
 J_\mu(2\sqrt{v\lambda}).
\end{align}
The recurrence relations for Bessel functions (cf. \cite[Eq. (10.6.6)]{DLMF}) yield
\begin{equation}\label{eq:app-be-wave-derivatives}
 \partial_u\Phi^{(\mu)}=-\Phi^{(\mu+1)},\qquad
 \partial_v\Psi^{(\mu)}=\Psi^{(\mu-1)},\qquad
 \mathcal K^{(\mu)}_{l}+\mathcal K^{(\mu+1)}_{r}=0.
\end{equation}
The last identity is used for $n=-1,0$. The operators $\mathcal K_l$ and $\mathcal K_r$ are defined as in Eq. \eqref{def-lr-deri}, except that they act here on \(L^2(0,x_1)\).
Since $\mu=\alpha+n>3/2$, we have the trace-norm estimate for $\mathcal K^{(\mu)}$: 
\begin{equation}\label{eq:app-be-trace-bound}
 \|\mathcal K^{(\mu)}_{l^i r^j}\|_1
 \leq\int_0^\infty |\sigma(t+\lambda)|
 \|\partial_u^i\Phi^{(\mu)}(\cdot,\lambda)\|_2
 \|\partial_v^j\Psi^{(\mu)}(\cdot,\lambda)\|_2\,\dd\lambda<\infty,
 \qquad 0\leq i,j\leq2.
\end{equation}
Indeed, the product of the norms is $O(\lambda^{\mu+i})$ at zero
and grows at most polynomially at infinity, where the decay of
$\sigma$ ensures integrability. These estimates are locally uniform
in $(x_1,t)$. Thus $\mathcal K^{(\mu)}$ and its auxiliary $x_2$ extension
are trace class. Furthermore, these bounds also give us
\begin{equation}\label{esti-low-bound}
\partial_u^i\partial_v^j K^{(\mu)}(u,v;t)=O(v^{\mu-j}),
\qquad v\rightarrow0,\ 0\leq i,j\leq2, 
\end{equation}
uniformly for $u\in [0,x_1]$. These estimates justify the application of Sylvester's determinant identity. Since the gauge factors leave the Fredholm determinant invariant, we obtain
\begin{equation}
\tau_n(x_1,0,t)
=\det(I-\mathcal K^{(\alpha+n)})_{L^2(0,x_1)}
=Q_{\alpha+n,\sigma}^{\rm Be}(2\sqrt{x_1},t).
\end{equation} 
This finishes the proof of the lemma.
\end{proof}

With the analytic properties of the extended determinant established, we next show that $\tau_n$ satisfies the first positive member of the mKP hierarchy, the 2DTL equation, and the non-isospectral reduction on $x_2=0$. 

\begin{lemma}\label{lem:app-be-mkp}
Let $\tau_n(x_1,x_2,t)$ be defined in Eq. \eqref{eqn:taun-bessel-def}. Then, evaluated at $x_2=0$, the following relations hold:
\begin{align}
 &(D_{x_1}^2+D_{x_2})\tau_n\cdot\tau_{n+1}=0,
 \qquad n=-1,0,
 \label{eq:app-be-mkp}\\
 &\left(\frac12D_{x_1}D_t-1\right)
 \tau_0\cdot\tau_0
 =-\tau_1\tau_{-1},\label{eq:app-be-toda}\\
 &\left(x_1\partial_{x_2}+(\alpha+n)\partial_{x_1}\right)
 \tau_n=0,\qquad n=-1,0,1.
 \label{eq:app-be-constraint}
\end{align}
\end{lemma}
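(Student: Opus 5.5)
The plan is to bring $\tau_n$ into the framework of Thm.~\ref{thm:main}, so that \eqref{eq:app-be-mkp} and \eqref{eq:app-be-toda} follow by re-running the corresponding parts of its proof. The constraint \eqref{eq:app-be-constraint} needs a separate argument based on the Bessel differential equation.

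\emph{Step 1: kernel relations.} First translate the interval: $u\mapsto u+x_1$ maps $(0,x_1)$ onto $(-x_1,0)$. Define $\widetilde K^{(n)}(u,v)=-K^{(\alpha+n)}(u+x_1,v+x_1;x_2,t)$ on $L^2(-x_1,0)$, so that $\tau_n=\det(I+\widetilde{\mathcal K}^{(n)})$. The right endpoint $0$ then plays the role of the endpoint $0$ of $\mathbb R_-$, and the relations are checked at $x_2=0$, which is all that is needed.
\begin{itemize}
\item The $x_1$-flow \eqref{eq:flow-x1} holds by construction of the translation.
\item The $x_2$-flow \eqref{eq:flow-x2} holds at $x_2=0$ by the definition \eqref{eq:app-be-extension} of the extension.
\item The adjacent-order relation \eqref{eq:adjacent-order} is the last identity in \eqref{eq:app-be-wave-derivatives}. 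It is linear, so the overall sign is harmless, and it is available for $\mu=\alpha-1,\alpha$, which is exactly what the Toda part of the proof of Thm.~\ref{thm:main} uses (the relations at $n-1$ and $n$ for $n=0$).
\item For the negative flow \eqref{eq:flow-xm1} with $x_{-1}=t$, integrate by parts in $\lambda$: $\partial_tK=\int\sigma'(t+\lambda)\Phi\Psi\,d\lambda=-\int\sigma(t+\lambda)\partial_\lambda(\Phi\Psi)\,d\lambda$. Then use the Bessel recurrences to write $\partial_\lambda\Phi^{(\mu)}$ and $\partial_\lambda\Psi^{(\mu)}$ as primitives in $u$ and $v$, normalized to vanish at the left endpoint $u=0$, i.e.\ at $-x_1$ after translation.
\end{itemize}

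\emph{Step 2: endpoint terms.} The proof of Thm.~\ref{thm:main} used vanishing boundary terms at $-\infty$ in \eqref{eq:poppe-identities} and in the Toda integrations by parts. Here these become boundary terms at the left endpoint, where they vanish by \eqref{esti-low-bound}, since $\mu-j>0$ for $j\le2$ because $\alpha>5/2$. Trace-class control of all derivatives comes from \eqref{eq:app-be-trace-bound}. With Steps 1 and 2 in place, the "first positive mKP" argument (\eqref{eq:first-normalized}--\eqref{first-eqn}) and the 2DTL argument (\eqref{eq:toda-normalized}--\eqref{eq:toda-log-derivative}) go through verbatim and give \eqref{eq:app-be-mkp} and \eqref{eq:app-be-toda}.

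\emph{Step 3: the constraint.} Jacobi's formula gives $\partial_{x_2}\log\tau_n|_{x_2=0}=-\Tr\bigl(U(\partial_u^2-\partial_v^2)K\bigr)$ on $L^2(0,x_1)$, where $U=(I-\mathcal K)^{-1}$. For the $x_1$-derivative, rescale to $(0,1)$: this gives $x_1\partial_{x_1}\log\tau_n=-\Tr\bigl(U(1+u\partial_u+v\partial_v)K\bigr)$. The Bessel equations
\[
u\Phi_{uu}+(\mu+1)\Phi_u+\lambda\Phi=0,\qquad v\Psi_{vv}+(1-\mu)\Psi_v+\lambda\Psi=0
\]
yield $(u\partial_u^2-v\partial_v^2)K=-(\mu+1)\partial_uK+(1-\mu)\partial_vK$. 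The goal is then to convert
\[
x_1\Tr\bigl(U(\partial_u^2-\partial_v^2)K\bigr)+\mu\,\Tr\bigl(U\mathcal K_{x_1}\bigr)
\]
into traces of commutators. To do this, write $x_1=u+(x_1-u)$, commute the multiplication operator by $u$ through $U$ using $[u,U]=U[u,\mathcal K]U$, and integrate by parts, where the endpoint values at $u=x_1$ produce the $\partial_{x_1}$ term. Everything should cancel to zero.

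This trace bookkeeping with the commutators $[u,\cdot]$ is the step I expect to be the main obstacle. If it becomes unwieldy, I would instead verify the constraint at the level of the gauged kernels. The dilation generator $u\partial_u+v\partial_v$ combined with the Bessel ODE should give an operator identity of the form
\[
x_1(\partial_u^2-\partial_v^2)\mathcal K\sim\mu(\partial_u+\partial_v)\mathcal K
\]
modulo commutators with $\mathcal K$, whose traces against $U$ vanish.
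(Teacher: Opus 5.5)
Your Steps 1--2 for \eqref{eq:app-be-mkp} are the paper's argument in translated coordinates; the moving left endpoint of $(-x_1,0)$ only produces terms like $R^{(\mu)}(\cdot,0)=0$. The 2DTL part, however, has a genuine gap.

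\textbf{The gap.} Step~2 claims that all left-endpoint boundary terms vanish by \eqref{esti-low-bound}. That estimate is one-sided: it controls $K^{(\mu)}(u,v)$ only as $v\to0$. As $u\to0$ the kernel does not vanish. Since $\Phi^{(\mu)}(0,\lambda)=\lambda^\mu/\Gamma(\mu+1)$, one has $K^{(\mu)}(0,v)=\Gamma(\mu+1)^{-1}\int_0^\infty\sigma(t+\lambda)\lambda^{\mu}\Psi^{(\mu)}(v,\lambda)\dd\lambda$. This is nonzero in general; its leading term as $v\to0$ is $\Gamma(\mu+1)^{-2}v^{\mu}\int_0^\infty\sigma(t+\lambda)\lambda^\mu\dd\lambda$.

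The bracket identities \eqref{brac-identity-bessel} need vanishing only in the second variable, which is why the mKP part survives. The Toda part of the proof of Thm.~\ref{thm:main} also needs decay at the left end in the first variable. Concretely:

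(i) Your negative-flow bullet is false as stated. Here $\partial_\lambda\Phi^{(\mu)}=\Phi^{(\mu-1)}=\lambda^{\mu-1}/\Gamma(\mu)-\int_0^u\Phi^{(\mu)}(s,\lambda)\dd s$ is not a primitive vanishing at $u=0$. The integration by parts in $\lambda$ actually gives
\[
K^{(\mu)}_t(u,v)=\int_0^uK^{(\mu)}(s,v)\dd s+\int_0^vK^{(\mu)}(u,s)\dd s-\int_0^vK^{(\mu-1)}(0,s)\dd s ,
\]
so \eqref{eq:flow-xm1} fails by a $u$-independent rank-one term.

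(ii) Take $\mathcal V$ normalized at $-x_1$ and $\mathcal K$ the translated kernel. Three identities from the Toda proof break:
\begin{itemize}
\item One gets $\mathcal V\mathcal K^+\mathsf D=\mathcal K+\mathbf v_+\otimes\operatorname{ev}_0-\mathbf 1\otimes K(-x_1,\cdot)$, with an extra rank-one term.
\item The Sylvester step behind $\tau_{n-1}/\tau_n=1-\delta_-$ uses $\mathcal V\mathsf D=I$. This fails on functions not vanishing at $-x_1$, and $\mathcal K^-$ does not map into such functions.
\item \eqref{eq:toda-endpoint-flow} uses $\int K(u,s)\dd s=-\mathbf v_+(u)$, obtained by integrating the adjacency relation from the left end. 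This again needs $K(-x_1,\cdot)=0$.
\end{itemize}
So neither \eqref{eq:toda-neighbour-ratios} nor the cancellation \eqref{eq:toda-log-derivative} carries over verbatim.

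\textbf{The missing idea.} Push every rank-one correction to the bracket endpoint $x_1$, so that at the origin only $K^{(\mu)}(u,0)=0$ is used. The paper stays on $(0,x_1)$ with $[\mathcal B]=B(x_1,x_1)$ and uses the right-normalized primitive $\mathcal V_1h(u)=\int_u^{x_1}h(v)\dd v$.
\begin{itemize}
\item Then $\mathcal K^{(\alpha+1)}=-\mathcal K^{(\alpha)}_l\mathcal V_1$ and $\mathcal K^{(\alpha)}=\mathcal K^{(\alpha)}_r\mathcal V_1$ hold exactly.
\item Also $\mathcal V_1\mathcal K^{(\alpha)}_l=\mathbf 1\otimes K^{(\alpha)}(x_1,\cdot)-\mathcal K^{(\alpha)}$ and $\mathcal V_1\mathcal K^{(\alpha)}_r=\mathcal K^{(\alpha-1)}-\mathbf 1\otimes K^{(\alpha-1)}(x_1,\cdot)$.
\item Sylvester then gives $\tau_1/\tau_0=1+[U^{(\alpha)}\mathcal K^{(\alpha)}\mathcal V_1]$ and $\tau_{-1}/\tau_0=1-[\mathcal K^{(\alpha-1)}\mathcal V_1U^{(\alpha)}]$.
\item The $t$-flow is used as a pure $v$-primitive, $\mathcal K^{(\alpha)}_t=(\mathcal K^{(\alpha)}-\mathcal K^{(\alpha-1)})\mathcal V_1$. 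By adjacency this is exactly the display above with the extra term absorbed.
\end{itemize}
Finally, $\partial_t\partial_{x_1}\log\tau_0=-[U^{(\alpha)}\mathcal K^{(\alpha)}_tU^{(\alpha)}]$ plus one application of the bracket-product identity gives $1-\tau_1\tau_{-1}/\tau_0^2$.

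\textbf{Step~3.} Your route is different from the paper's and workable, though unfinished; you do not need commutators $[u,U]$.
\begin{itemize}
\item Since $\partial_{x_1}\log\tau_n=-R^{(\mu)}(x_1,x_1)=-\Tr\bigl(U(\mathcal K_l+\mathcal K_r)\bigr)$, the constraint is $x_1\Tr\bigl(U(\mathcal K_{ll}-\mathcal K_{rr})\bigr)+\mu\Tr\bigl(U(\mathcal K_l+\mathcal K_r)\bigr)=0$.
\item Let $\mathcal R=\mathcal KU$ and $M$ be multiplication by $x_1-u$. Your ODE identity turns the left side into $\Tr(M\mathcal R_{ll}-\mathcal R_{rr}M)-\Tr(\mathcal R_l-\mathcal R_r)$.
\item One integration by parts along the diagonal reduces this to $-x_1(\partial_u-\partial_v)R^{(\mu)}(u,v)|_{u=v=0}=0$.
\end{itemize}
The paper avoids the Bessel equation entirely. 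Because $K^{(\mu)}=(v/u)^{\mu/2}G$ with $G$ symmetric, $(\partial_u-\partial_v)R^{(\mu)}|_{v=u}=-\mu R^{(\mu)}(u,u)/u$. Then $\partial_{x_2}\log\tau_n=-\bigl[(\partial_u-\partial_v)R^{(\mu)}|_{v=u}\bigr]_{u=0}^{u=x_1}$ gives the constraint immediately.
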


\begin{proof}
Throughout this proof, all expressions are evaluated at $x_2=0$, and we drop this evaluation from the notation for brevity. 

For an integral operator $\mathcal B$ acting on $L^2(0,x_1)$ with a continuous kernel $B(u,v)$ satisfying the decay condition \eqref{esti-low-bound} near the origin, we define the bracket notation $[\mathcal B]=B(x_1,x_1)$. Similar to Eq. \eqref{eq:poppe-identities}, we have the identities
\begin{align}
    [\mathcal B]=\operatorname{Tr}(\mathcal B_l+\mathcal B_r),
    \qquad
    [\mathcal B][\mathcal C]
    =[\mathcal B_r\mathcal C+\mathcal B\mathcal C_l].\label{brac-identity-bessel}
\end{align}
These identities follow easily from the decay estimate Eq. \eqref{esti-low-bound}, which ensures the boundary terms at the origin vanish. Indeed, for the first identity, we have
\begin{align}
    \operatorname{Tr}(\mathcal B_l+\mathcal B_r)=\int_0^{x_1}(\partial_u B(u,v)+\partial_v B(u,v))\big|_{v=u} \dd u=B(x_1,x_1)-B(0,0)=[\mathcal B].
\end{align}
For the second identity, we get
\begin{align}
    [\mathcal B_r\mathcal C+\mathcal B\mathcal C_l]&=\int_0^{x_1} \Big( \partial_sB(x_1,s)C(s,x_1)+B(x_1,s)\partial_s C(s,x_1) \Big) \dd s =\int_0^{x_1}\partial_s\Big( B(x_1,s)C(s,x_1) \Big) \dd s\notag\\
    &=B(x_1,x_1)C(x_1,x_1)-B(x_1,0)C(0,x_1) =[\mathcal B][\mathcal C].
\end{align}
Next, let
\begin{align*}
    U^{(\mu)}=(I-\mathcal K^{(\mu)})^{-1},
    \qquad
    \mathcal R^{(\mu)}=\mathcal K^{(\mu)} U^{(\mu)},
    \qquad
    w=(\log\tau_n)_{x_1}.
\end{align*}
where $R^{(\mu)}(u,v)$ denotes the kernel of the resolvent $\mathcal R^{(\mu)}$. Jacobi's determinant formula gives
\begin{align}
    w=-[\mathcal R^{(\mu)}] = -R^{(\mu)}(x_1,x_1).
\end{align}
Moreover, we have
\begin{align}
    \partial_{x_1}R^{(\mu)}(u,v)
    =R^{(\mu)}(u,x_1)R^{(\mu)}(x_1,v),
\end{align}
and hence
\begin{align}
    w_{x_1}
    =-[\mathcal R_{l}^{(\mu)}+\mathcal R_{r}^{(\mu)}]
     -[\mathcal R^{(\mu)}]^2
    =-[U^{(\mu)}(\mathcal K_{l}^{(\mu)}+\mathcal K_{r}^{(\mu)})U^{(\mu)}].
\end{align}
Using the identities \eqref{brac-identity-bessel} alongside the relations
\begin{align}
\mathcal K^{(\mu)}_l+\mathcal K^{(\mu+1)}_r=0,
\qquad
\mathcal K^{(\mu)}_{x_2}
=\mathcal K^{(\mu)}_{ll}-\mathcal K^{(\mu)}_{rr},
\end{align}
we can repeat the calculation in Eqs. \eqref{eq:first-normalized}-\eqref{first-eqn}, which establishes \eqref{eq:app-be-mkp}.

To prove Eq. \eqref{eq:app-be-toda}, we define
\begin{align}
(\mathcal V_1 h)(u)=\int_u^{x_1}h(v)\,dv.
\end{align}
The recurrence
$\mathcal K^{(\mu)}_{l}+\mathcal K^{(\mu+1)}_{r}=0$
and the boundary condition $K^{(\mu)}(u,0)=0$ give
\begin{align*}
\mathcal K^{(\alpha+1)}
&=-\mathcal K^{(\alpha)}_{l}\mathcal V_1,
&
\mathcal V_1\mathcal K_{l}^{(\alpha)}
&=\mathbf 1\otimes K^{(\alpha)}(x_1,\cdot)-\mathcal K^{(\alpha)},
\\
\mathcal K^{(\alpha)}
&=\mathcal K_{r}^{(\alpha)}\mathcal V_1,
&
\mathcal V_1\mathcal K_{r}^{(\alpha)}
&=\mathcal K^{(\alpha-1)}
  -\mathbf 1\otimes K^{(\alpha-1)}(x_1,\cdot).
\end{align*}
Here $\mathbf 1$ is the constant function, and
$\mathbf 1\otimes K^{(\mu)}(x_1,\cdot)$ denotes the rank-one
operator with kernel $K^{(\mu)}(x_1,v)$.

Applying Sylvester's determinant identity and the rank-one determinant formula, we have
\begin{align}
\frac{\tau_1}{\tau_0}
=1+[U^{(\alpha)}\mathcal K^{(\alpha)}\mathcal V_1],
\qquad
\frac{\tau_{-1}}{\tau_0}
=1-[\mathcal K^{(\alpha-1)}\mathcal V_1 U^{(\alpha)}].
\end{align}
Next, changing variables to $\rho=t+\lambda$ in the defining
integral for $K^{(\alpha)}$ and using the Bessel recurrences gives
\begin{align}
K_{t}^{(\alpha)}(u,v)
=\int_0^v
 \bigl(K^{(\alpha)}(u,s)-K^{(\alpha-1)}(u,s)\bigr)\,ds,
\end{align}
or equivalently,
\[
\mathcal K_{t}^{(\alpha)}
=(\mathcal K^{(\alpha)}-\mathcal K^{(\alpha-1)})\mathcal V_1.
\]
This differentiation holds in trace norm by the previous rank-one estimates, i.e., the Bessel integrand and its
$\lambda$-derivative are respectively
$O(\lambda^\alpha)$ and $O(\lambda^{\alpha-1})$ at zero, and the rapid decay of $\sigma$ ensures integrability of the tail.

Since
\begin{align}
(U^{(\alpha)}\mathcal K^{(\alpha)} \mathcal V_1)_r=U^{(\alpha)}\mathcal K^{(\alpha)},
\qquad
(\mathcal K^{(\alpha-1)}\mathcal V_1 U^{(\alpha)})_l=-\mathcal K^{(\alpha)} U^{(\alpha)},
\end{align}
the bracket product identity gives
\[
\begin{aligned}
&[U^{(\alpha)}\mathcal K^{(\alpha)} \mathcal V_1]\,
 [\mathcal K^{(\alpha-1)} \mathcal V_1 U^{(\alpha)}]
=[U^{(\alpha)}\mathcal K^{(\alpha)} \mathcal K^{(\alpha-1)}\mathcal V_1 U^{(\alpha)}
  -U^{(\alpha)}\mathcal K^{(\alpha)} \mathcal V_1\mathcal K^{(\alpha)} U^{(\alpha)}].
\end{aligned}
\]
Finally, the differentiation formula gives
\[
\partial_{x_1}\log\tau_0=-[\mathcal K^{(\alpha)} U^{(\alpha)}].
\]
Differentiating in $t$ and using
$U^{(\alpha)}-I=U^{(\alpha)}\mathcal K^{(\alpha)}=\mathcal K^{(\alpha)} U^{(\alpha)}$, we obtain
\[
\begin{aligned}
\partial_{x_1}\partial_t\log\tau_0
&=-[U^{(\alpha)}\mathcal K_{t}^{(\alpha)} U^{(\alpha)}]
\\
&=-[U^{(\alpha)}(\mathcal K^{(\alpha)} -\mathcal K^{(\alpha-1)} )\mathcal V_1 U^{(\alpha)}]
\\
&=-[U^{(\alpha)}\mathcal K^{(\alpha)} \mathcal V_1]
  +[\mathcal K^{(\alpha-1)} \mathcal V_1 U^{(\alpha)}]
  +[U^{(\alpha)}\mathcal K^{(\alpha)} \mathcal V_1]\,
   [\mathcal K^{(\alpha-1)} \mathcal V_1 U^{(\alpha)}]
\\
&=1-\frac{\tau_1\tau_{-1}}{\tau_0^2}.
\end{aligned}
\]
This is equivalent to the bilinear form \eqref{eq:app-be-toda}.

For the constraint \eqref{eq:app-be-constraint}, fix $n\in\{-1,0,1\}$ and let $\mu=\alpha+n$.
Jacobi's formula and integration by parts give
\begin{equation}\label{eq:app-be-constraint-trace}
 \partial_{x_2}\log \tau_n
 =-\operatorname{Tr}(\mathcal R_{ll}^{(\mu)} -\mathcal R_{rr}^{(\mu)})
 =-\left[(\partial_u-\partial_v)R^{(\mu)}(u,v)\big|_{v=u}\right]_{u=0}^{u=x_1}.
\end{equation}
Since 
\begin{align}
    K^{(\mu)}(u,v)=\left(\frac vu\right)^{\mu/2}G(u,v),\quad \textrm{with } G(u,v)=G(v,u),
\end{align}
the resolvent inherits this symmetry, so we can write $R^{(\mu)}(u,v)=(v/u)^{\mu/2}\widetilde{R}^{(\mu)}(u,v)$ with $\widetilde{R}^{(\mu)}(u,v)=\widetilde{R}^{(\mu)}(v,u)$. Hence,
\begin{align}
 (\partial_u-\partial_v)R^{(\mu)}(u,v)\big|_{v=u}
 &=\left(-\frac{\mu}{2u} R^{(\mu)}(u,v)+\left(\frac{v}{u}\right)^{\frac \mu2}\widetilde{R}_u^{(\mu)}(u,v)-\frac{\mu}{2v} R^{(\mu)}(u,v)-\left(\frac{v}{u}\right)^{\frac \mu2}\widetilde{R}_v^{(\mu)}(u,v)\right)\Big|_{v=u}\notag\\
 &=-\frac{\mu}{u} R^{(\mu)}(u,u).
\end{align}
The boundary term at zero vanishes since $R^{(\mu)}(u,u)=O(u^\mu)$. Thus, Eq. \eqref{eq:app-be-constraint-trace} simplifies to $\partial_{x_2}\log\tau_n
=-\mu\partial_{x_1}\log\tau_n/x_1$, which proves Eq. \eqref{eq:app-be-constraint}.
\end{proof}

Eqs. \eqref{eq:app-be-mkp} and \eqref{eq:app-be-toda} correspond exactly to the first member of the mKP hierarchy and the 2DTL equation, respectively, while \eqref{eq:app-be-constraint} is precisely the non-isospectral constraint required by Thm. \ref{thm:bessel-reduction}. Therefore, applying this reduction with $x_1=x^2/4$  shows that
\begin{equation}
 \phi(x,t)
 =-\partial_x\log Q_{\alpha,\sigma}^{\mathrm{Be}}(x,t)
   +\frac{xt}{2}-\frac{4\alpha^2-1}{8x},
\end{equation}
satisfies Eq. \eqref{n-nkdv1} wherever
$Q_{\alpha,\sigma}^{\mathrm{Be}}(x,t)\ne0$.

\section*{Acknownledgement}
This work was supported by grants from the Research Grants Council of the Hong Kong Special Administrative Region, China (Project No. CityU 11306723, CityU 11301924 and CityU 11301225).

\appendix
\section{Fredholm determinant solution to the fNLS equation}\label{pf-fnls}
In this appendix, we will present the Fredholm determinant solution of the fNLS equation
\begin{align}
    \mathrm i\phi_t-\phi_{xx}-2|\phi|^2\phi=0, \label{eq:fnls}
\end{align}
 as well as the construction of the breather gas  solution.  We first establish a Fredholm determinant lemma containing the bilinear equations required for the reduction to the fNLS equation. We then carry out the reduction and obtain a Fredholm
determinant representation of the solution to the fNLS equation, followed by an explicit example producing the
breather gas solution.

\begin{lemma}[An alternative Fredholm determinant representation of the mKP-2DTL equations]
\label{lem:fnls-mother}
Let $P,Q$ be bounded invertible operators on a complex Hilbert space $H$
and let $\mathcal K_0$ be trace class, with
\begin{equation}
 \operatorname{rank}(P\mathcal K_0-\mathcal K_0Q)\le1.
 \label{eq:fnls-mother-rank}
\end{equation}
For $\mathbf x=(x_{-1},x_1,x_2)\in\mathbb C^3$, put
\begin{equation}
 \theta(z)=x_{-1}z^{-1}+x_1z+x_2z^2,
 \label{eq:fnls-mother-phase}
\end{equation}
and define
\begin{equation}
 \mathcal K^{(n)}=P^n e^{\theta(P)}\mathcal K_0e^{-\theta(Q)}Q^{-n},\qquad
 \tau_n=\det(I+\mathcal K^{(n)}),\quad n\in\mathbb Z.
 \label{eq:fnls-mother-tau}
\end{equation}
Then $\tau_n$ satisfies the first mKP equation \eqref{eq:mkp-first} and the 2DTL equation \eqref{eq:2dtl-central}.
\end{lemma}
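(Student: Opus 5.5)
The plan is to reduce Lemma \ref{lem:fnls-mother} to the operator-level mechanism already used in the proof of Thm. \ref{thm:main}, with the roles of $\partial_u,\partial_v$ played by left multiplication by $P$ and right multiplication by $Q$. Write the rank condition as $P\mathcal K_0-\mathcal K_0Q=a\otimes b$ for some $a\in H$ and $b\in H^*$. Since $P^n e^{\theta(P)}$ commutes with $P$ and $e^{-\theta(Q)}Q^{-n}$ commutes with $Q$, we get
\[
P\mathcal K^{(n)}-\mathcal K^{(n)}Q=a_n\otimes b_n,\qquad a_n=P^ne^{\theta(P)}a,\quad b_n=b\,e^{-\theta(Q)}Q^{-n}.
\]
The flows become $\partial_{x_j}\mathcal K^{(n)}=P^j\mathcal K^{(n)}-\mathcal K^{(n)}Q^j$ for $j=1,2$, and $\partial_{x_{-1}}\mathcal K^{(n)}=P^{-1}\mathcal K^{(n)}-\mathcal K^{(n)}Q^{-1}$. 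The shift is $\mathcal K^{(n+1)}=P\mathcal K^{(n)}Q^{-1}$. Using the displacement identity, each of these commutator-type expressions is rank one modulo $\mathcal K^{(n)}$ itself:
\begin{align*}
P^2\mathcal K-\mathcal KQ^2&=P(a_n\otimes b_n)+(a_n\otimes b_n)Q,\\
P^{-1}\mathcal K-\mathcal KQ^{-1}&=-P^{-1}(a_n\otimes b_n)Q^{-1},\\
\mathcal K^{(n+1)}-\mathcal K^{(n)}&=(a_n\otimes b_n)Q^{-1}.
\end{align*}
This is the operator analogue of the bracket $[\mathcal A]=A(0,0)$. The role of the endpoint evaluation is taken by the pairing $\langle b_n,\,\cdot\,a_n\rangle$.

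For the 2DTL equation, first use the rank-one formula $\det(I+\mathcal K^{(n+1)})=\det(I+\mathcal K^{(n)})\,\bigl(1+b_nQ^{-1}U a_n\bigr)$ with $U=(I+\mathcal K^{(n)})^{-1}$, and similarly for $n-1$. This gives $\tau_{n\pm1}/\tau_n=1\pm\delta_\pm$ with explicit scalars $\delta_\pm$, paralleling \eqref{eq:toda-neighbour-ratios}.

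Next, compute $\partial_{x_1}\partial_{x_{-1}}\log\tau_n=\partial_{x_1}\Tr(U\mathcal K_{x_{-1}})$. Since $\mathcal K_{x_{-1}}$ is rank one up to the $\mathcal K$-factor, this first derivative is a scalar $-b_nQ^{-1}U P^{-1}a_n$ up to a trace term. That trace term vanishes because $\Tr(U(P^{-1}\mathcal K-\mathcal KQ^{-1}))$ can be rewritten via $U\mathcal K=I-U$. A cleaner route is to define the rank-one identities first and then differentiate in $x_1$ using $\partial_{x_1}U=-U(P\mathcal K-\mathcal KQ)U$. Then reduce all terms to pairings and match them to $-\delta_-+\delta_+-\delta_-\delta_+$, exactly as in \eqref{eq:toda-log-derivative}.

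For the first mKP equation, follow \eqref{eq:first-normalized}--\eqref{first-eqn}. Set $w=\partial_{x_1}\log\tau_n=\Tr(U(a_n\otimes b_n))=b_nUa_n$, and let $\mathscr C$ correspond to $\mathcal K^{(n)}Q+P\mathcal K^{(n+1)}$-type combinations. The algebraic input is that the difference of the adjacent-order relation, here $\mathcal K^{(n+1)}Q=P\mathcal K^{(n)}$, holds exactly. That is the analogue of $\mathscr A=0$, and it makes the final trace vanish.

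The main obstacle is bookkeeping the scalar pairings in the 2DTL step without an integration-by-parts shortcut. I would first verify the finite-dimensional case, where $P,Q$ are diagonal with a Cauchy-type $\mathcal K_0$, to fix signs. I would then extend by the usual density/continuity of Fredholm determinants in trace norm.
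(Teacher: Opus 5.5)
\textbf{The 2DTL half is fine; the mKP half has a genuine gap.}

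Your 2DTL argument is sound and is essentially the paper's computation. Note that $\mathcal K_{x_{-1}}=-P^{-1}a_n\otimes b_nQ^{-1}$ is exactly rank one, so $\partial_{x_{-1}}\log\tau_n=-b_nQ^{-1}UP^{-1}a_n$ with no leftover trace term. Differentiating in $x_1$, using $a_{x_1}=Pa$, $b_{x_1}=-bQ$ and $U_{x_1}=-U(a\otimes b)U$, gives
\[
\delta_--\delta_++\delta_-\delta_+=1-\frac{\tau_{n+1}\tau_{n-1}}{\tau_n^2},
\]
which is the negative of the expression you said you would match. The paper instead differentiates $w=b_nUa_n$ in $x_{-1}$, which is the same computation.

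\textbf{Why the mKP step does not go through.} ``Follow \eqref{eq:first-normalized}--\eqref{first-eqn}'' is not available in $H$. That argument runs entirely on the bracket calculus \eqref{eq:poppe-identities}--\eqref{eq:resolvent-bracket-3}. This calculus comes from integration by parts on $\mathbb R_-$, where left and right differentiation are the same operator $D$.
\begin{itemize}
\item The dictionary implicit in your proposal is $\mathcal X_l\mapsto P\mathcal X$, $\mathcal X_r\mapsto-\mathcal XQ$. This is the one that turns $\mathcal K_l+\mathcal K^+_r=0$ into $P\mathcal K^{(n)}=\mathcal K^{(n+1)}Q$.
\item Under it, the first identity $[\mathcal A]=\Tr(\mathcal A_l+\mathcal A_r)$ forces $[\mathcal X]=\Tr((P-Q)\mathcal X)$.
\item The product rule would then read
\[
\Tr\bigl((P-Q)\mathcal A\bigr)\,\Tr\bigl((P-Q)\mathcal B\bigr)=\Tr\bigl((P-Q)\mathcal A(P-Q)\mathcal B\bigr).
\]
Testing on rank-one $\mathcal A,\mathcal B$ shows this holds for all $\mathcal A,\mathcal B$ only if $P-Q$ has rank at most one. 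For the fNLS case, $P-Q$ is multiplication by $u-u^{-1}$, so it fails.
\end{itemize}
Equivalently, the identity operator has displacement $P-Q\neq0$ in $H$. So the resolvent identities built from $U=I-\mathcal KU$ do not transfer, and your ``endpoint evaluation $\leftrightarrow$ pairing'' correspondence is never made precise.

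You also cannot reach Thm.~\ref{thm:main} through the Sylvester factorization of Sec.~\ref{sec:lem-thm-rel}. That route needs \eqref{eq:fnls-halfline-separation}, which fails for $Q=P^{-1}$, exactly the case the lemma is used for. Naming $\mathcal K^{(n+1)}Q=P\mathcal K^{(n)}$ as the analogue of $\mathscr A=0$ identifies the right input, but it is not a proof.

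\textbf{How to fix it.} The rank-one structure makes a direct computation short; this is what the paper does. Drop the index $n$ and set $h=\tau_{n+1}/\tau_n=1+bQ^{-1}Ua$ and $w=bUa$. Use $\mathcal K_{x_1}=a\otimes b$, $\mathcal K_{x_2}=Pa\otimes b+a\otimes bQ$ and $U_{x_j}=-U\mathcal K_{x_j}U$. Then
\[
w_{x_1}=bUPa-bQUa-w^2,\qquad h_{x_1}=bQ^{-1}UPa-hw,
\]
\[
h_{x_2}=bQ^{-1}UP^2a-w\,bQ^{-1}UPa-h\,bQUa,\qquad h_{x_1x_1}=bQ^{-1}UP^2a-h\,bUPa-wh_{x_1}-hw_{x_1}.
\]
Hence $h_{x_2}-h_{x_1x_1}=2hw_{x_1}$, which is \eqref{eq:mkp-first} divided by $\tau_n^2$.

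This is pure algebra with bounded operators and the pairings $b(\cdot)a$, so it holds directly in $H$. Your final step (check diagonal/Cauchy finite-dimensional data, then extend by density) is therefore unnecessary. As stated it is also unjustified: there is no standard approximation of arbitrary bounded invertible $P,Q$ by diagonal matrices that preserves the rank-one displacement condition.
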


\begin{proof}
The operators $\mathcal K^{(n)}$ and their parameter derivatives are trace class, since
the factors in Eq. \eqref{eq:fnls-mother-tau} are bounded and entire
in the parameters. For fixed $n$, write $\mathcal K=\mathcal K^{(n)}$,
and factor
\[
 P\mathcal K-\mathcal KQ=a\otimes b,
 \qquad (a\otimes b)h=a\,b(h),
\]
where $a$ is a vector and $b$ is a continuous linear functional.
The factors may be chosen so that
\[
 a_{x_j}=P^ja,\qquad b_{x_j}=-bQ^j,
 \qquad j=-1,1,2.
\]
The definition gives
\begin{align*}
 \mathcal K^{(n+1)}-\mathcal K&=a\otimes bQ^{-1},&
 \mathcal K^{(n-1)}-\mathcal K&=-P^{-1}a\otimes b,\\
 \mathcal K_{x_1}&=a\otimes b,&
 \mathcal K_{x_{-1}}&=-P^{-1}a\otimes bQ^{-1},\\
 \mathcal K_{x_2}&=Pa\otimes b+a\otimes bQ.
\end{align*}
For $\tau_n\ne0$, we introduce
\[
 U=(I+\mathcal K)^{-1},\qquad
 h=\frac{\tau_{n+1}}{\tau_n},\qquad
 w=(\log\tau_n)_{x_1}.
\]
The rank-one determinant formula and Jacobi's formula give
\begin{equation}
 h=1+bQ^{-1}Ua,\qquad
 \frac{\tau_{n-1}}{\tau_n}=1-bUP^{-1}a,\qquad
 w=bUa.
 \label{eq:fnls-mother-ratios}
\end{equation}
Using $U_{x_j}=-U\mathcal K_{x_j}U$, we obtain
\begin{align*}
 w_{x_{-1}}
 &=-bQ^{-1}Ua+bUP^{-1}a
   +(bUP^{-1}a)(bQ^{-1}Ua)\\
 &=1-\frac{\tau_{n+1}\tau_{n-1}}{\tau_n^2},
\end{align*}
which proves Eq. \eqref{eq:2dtl-central}.

Similarly, to establish the first mKP equation, we compute the following derivatives:
\begin{align*}
 w_{x_1}&=bUPa-bQUa-w^2,\\
 h_{x_1}&=bQ^{-1}UPa-hw,\\
 h_{x_2}&=bQ^{-1}UP^2a-w\,bQ^{-1}UPa-h\,bQUa,\\
 h_{x_1x_1}&=bQ^{-1}UP^2a-h\,bUPa-w h_{x_1}-h w_{x_1}.
\end{align*}
Consequently, we obtain
\[
 h_{x_2}-h_{x_1x_1}=2w_{x_1}h,
\]
which is equivalent to Eq. \eqref{eq:mkp-first} after division by
$\tau_n^2$. 
\end{proof}

Next, we proceed to obtain the solution of the fNLS equation \eqref{eq:fnls} through reduction. We start from a specific kernel. Let
\[
    H=L^2(\Gamma,d\nu),
    \qquad
    (Pf)(u)=u f(u),
    \qquad
    Q=P^{-1}.
\]
Since $\Gamma\subset\mathbb C\setminus\{0\}$, both $P$ and $P^{-1}$
are bounded and invertible. Choose functions $a,b$ on $\Gamma$ such that the integral operator
$\mathcal K_0$ with kernel
\begin{align}
    K_0(u,v)=\frac{a(u)b(v)}{u-v^{-1}},\label{eqn-k0}
\end{align}
is trace class. The kernel of the operator $P\mathcal K_0-\mathcal K_0Q$ is then given by
\[
    (P\mathcal K_0-\mathcal K_0Q)(u,v)
    =(u-v^{-1})K_0(u,v)
    =a(u)b(v),
\]
and hence
\[
    \operatorname{rank}(P\mathcal K_0-\mathcal K_0Q)\leq1.
\]
Thus, the assumption \eqref{eq:fnls-mother-rank} of Lem. \ref{lem:fnls-mother} is satisfied. With the definition of $\mathcal K^{(n)}$ in Eq. \eqref{eq:fnls-mother-tau}, we can obtain
\begin{equation}
 K^{(n)}(u,v)
   =(uv)^n\frac{a(u)b(v)}{u-v^{-1}}
   \exp\!\left\{
      x_{-1}(u^{-1}-v)
      +x_1(u-v^{-1})
      +x_2(u^2-v^{-2})
   \right\},
\label{eq:fnls-mkp-kernel}
\end{equation}
and consequently, the Fredholm determinant $\tau_n=\det(I+\mathcal K^{(n)})$ satisfies both the first mKP equation \eqref{eq:mkp-first} and the 2DTL equation \eqref{eq:2dtl-central}.

Note the shift property of the kernel:
\begin{align}
\mathcal K^{(n)}(x_1+c,x_{-1}+c,x_2)
=
e^{c(P+P^{-1})}\mathcal K^{(n)}(x_1,x_{-1},x_2)
e^{-c(P+P^{-1})}.
\end{align}
Since $P+P^{-1}$ is bounded, we obtain
\begin{align}
\tau_n (x_1+c,x_{-1}+c,x_2)&=\det(I+\mathcal K^{(n)}(x_1+c,x_{-1}+c,x_2))\notag\\
&=\det(I+e^{c(P+P^{-1})}\mathcal K^{(n)}(x_1,x_{-1},x_2)e^{-c(P+P^{-1})})\notag\\
&=\det(I+\mathcal K^{(n)}(x_1,x_{-1},x_2))=\tau_n (x_1,x_{-1},x_2),
\end{align}
which gives
\begin{equation}
 (\partial_{x_1}+\partial_{x_{-1}})\tau_n(x_1,x_{-1},x_2)=\partial_c\tau_n (x_1+c,x_{-1}+c,x_2)|_{c=0}=0.
 \label{eq:fnls-hierarchy-constraint}
\end{equation}
Applying Eq. \eqref{eq:fnls-hierarchy-constraint} and 
\begin{equation}
 x_{-1}=0,\qquad x_1=x,\qquad x_2=-\mathrm it,
\end{equation}
to Eq. \eqref{eq:mkp-first} and Eq. \eqref{eq:2dtl-central} yields
\begin{equation}
 (D_x^2-\mathrm iD_t)\tau_1\cdot\tau_0=0,\qquad
 (D_x^2+2)\tau_0\cdot\tau_0=2\tau_1\tau_{-1}.
 \label{eq:fnls-hierarchy-bilinear}
\end{equation}
Correspondingly, the kernel \eqref{eq:fnls-mkp-kernel} simplifies to
\begin{align}
     K^{(n)}(u,v)
   =(uv)^n\frac{a(u)b(v)}{u-v^{-1}}
   \exp\!\left\{
      x(u-v^{-1})
      -\mathrm{i}t(u^2-v^{-2})
   \right\},
\end{align}
which satisfies $K^{(n)}(\iota(u),\iota(v))=\overline{K^{(-n)}(v,u)}$ with $\iota(u)=-\frac{1}{\overline{u}}$.
Assume in addition that $\Gamma$ and the measure $\nu$ is invariant under $\iota$, we have $\tau_1=\overline{\tau_{-1}}$.  Finally, the dependent variable transformation $\phi=e^{-2\mathrm i t}{\tau_1}/{\tau_0}$ maps the bilinear system \eqref{eq:fnls-hierarchy-bilinear} directly to Eq. \eqref{eq:fnls}. Thus, the Fredholm determinant solutions to the fNLS equation can be obtained, which we summarize in the following corollary.

\begin{corollary}[Fredholm determinant solutions to the fNLS equation]
\label{cor:fnls}
Let $\Gamma\subset\mathbb C\setminus\{0\}$ be compact and invariant
under $\iota(u)=-1/\overline u$, and let $\nu$ be a finite positive
measure invariant under $\iota$.
Define $\mathcal K^{(n)}$ on $L^2(\Gamma,d\nu)$ by the kernels
\begin{align}
K^{(n)}(u,v;x,t)
=
-\frac{(uv)^n}{u-v^{-1}}
\exp\!\left[
(u-v^{-1})x-\mathrm i(u^2-v^{-2})t
\right],
\qquad n=-1,0,1.\label{kernel-fnls}
\end{align}
and put
\[
\tau_n=\det(I+\mathcal K^{(n)}),
\qquad n=-1,0,1.
\]
Then, $\tau_0$ is real and $\tau_{-1}=\overline{\tau_1}$.
The function
\[
\phi(x,t)=e^{-2\mathrm i t}\frac{\tau_1}{\tau_0}
\]
solves the fNLS equation \eqref{eq:fnls}
wherever $\tau_0\ne0$.
\end{corollary}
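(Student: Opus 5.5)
The statement to prove is Cor. \ref{cor:fnls}. I would deduce it from Lem. \ref{lem:fnls-mother}, applied with $H=L^2(\Gamma,d\nu)$, $P$ multiplication by $u$, $Q=P^{-1}$, and $a\equiv-1$, $b\equiv1$ in \eqref{eqn-k0}. Since $\Gamma$ is compact and avoids $0$, $P^{\pm1}$ are bounded.

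\textbf{Trace class.} For $\mathcal K_0$ to be trace class I need $u-v^{-1}\neq0$ on $\Gamma\times\Gamma$, i.e.\ $uv\neq1$. I would either assume this or note that the separation $\iota(u)=-1/\bar u$ keeps $\Gamma$ away from $\{u: 1/u\in\Gamma\}$. Trace class then follows by writing $1/(u-v^{-1})$ as a smooth kernel on a compact set, or as $\int_0^\infty e^{-s(\cdot)}$ after a rotation.

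\textbf{Bilinear equations.} The kernel \eqref{kernel-fnls} is then exactly \eqref{eq:fnls-mkp-kernel} at $x_{-1}=0$, $x_1=x$, $x_2=-\mathrm it$. Lem. \ref{lem:fnls-mother} gives the first mKP equation \eqref{eq:mkp-first} and the 2DTL equation \eqref{eq:2dtl-central}. The shift invariance \eqref{eq:fnls-hierarchy-constraint} lets me replace $\partial_{x_{-1}}$ by $-\partial_{x_1}$. This turns the system into \eqref{eq:fnls-hierarchy-bilinear}, used at $n=0$ for the mKP equation and at $n=0$ for the Toda equation.

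\textbf{Reality.} I would verify the symmetry $K^{(n)}(\iota u,\iota v)=\overline{K^{(-n)}(v,u)}$ directly:
\begin{itemize}
\item $\iota u\,\iota v=1/(\bar u\bar v)$, so $(\iota u\,\iota v)^n=\overline{(uv)^{-n}}$.
\item $\iota u-(\iota v)^{-1}=-1/\bar u+\bar v=\overline{v-u^{-1}}$.
\item $(\iota u)^2-(\iota v)^{-2}=\bar u^{-2}-\bar v^2=-\overline{(v^2-u^{-2})}$, so $-\mathrm it$ times this equals the conjugate of $-\mathrm it(v^2-u^{-2})$.
\end{itemize}
Since $\iota$ preserves $\nu$, the map $f\mapsto f\circ\iota$ is unitary. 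It conjugates $\mathcal K^{(n)}$ to the operator with kernel $\overline{K^{(-n)}(v,u)}$, which is the adjoint $(\mathcal K^{(-n)})^*$. Taking determinants gives $\tau_n=\overline{\tau_{-n}}$. In particular $\tau_0$ is real and $\tau_{-1}=\overline{\tau_1}$.

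\textbf{From bilinear to fNLS.} Set $g=\tau_1$, $f=\tau_0$ and $\phi=e^{-2\mathrm it}g/f$. I will use the standard identities
\[
\frac{D_x^2 g\cdot f}{f^2}=\Big(\frac gf\Big)_{xx}+\frac gf\,\frac{D_x^2f\cdot f}{f^2},\qquad \frac{D_t g\cdot f}{f^2}=\Big(\frac gf\Big)_t .
\]
The Toda relation reads $D_x^2 f\cdot f=2|g|^2-2f^2$. Substituting it into the first bilinear equation gives an equation for $g/f$ of NLS type with cubic term $2|g/f|^2(g/f)$ and a linear term $-2(g/f)$. The factor $e^{-2\mathrm it}$ removes that linear term, yielding $\mathrm i\phi_t-\phi_{xx}-2|\phi|^2\phi=0$.

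\textbf{Main obstacle.} This last step needs careful sign bookkeeping. The signs in $(D_x^2-\mathrm iD_t)$ and in the Toda constant must combine into focusing NLS with exactly $e^{-2\mathrm it}$. If a sign fails, I would adjust the orientation of the Hirota operator, since $D_x^2\tau_1\cdot\tau_0$ is symmetric while $D_t$ is antisymmetric.
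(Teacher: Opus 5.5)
Your proposal is correct and follows the paper's route: Lem.~\ref{lem:fnls-mother} with $P$ multiplication by $u$ and $Q=P^{-1}$, then the shift invariance $(\partial_{x_1}+\partial_{x_{-1}})\tau_n=0$ to reach \eqref{eq:fnls-hierarchy-bilinear}, then the symmetry $K^{(n)}(\iota u,\iota v)=\overline{K^{(-n)}(v,u)}$, and finally the gauge $e^{-2\ii t}$. Your unitary-conjugation argument fills in what the paper leaves implicit, and the signs work out exactly as stated, since $\psi=\tau_1/\tau_0$ satisfies $\ii\psi_t-\psi_{xx}-2|\psi|^2\psi+2\psi=0$. One caveat: $\iota$-invariance does not by itself keep $\Gamma$ away from $\Gamma^{-1}$. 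For example, $\Gamma=\{\ii,-\ii\}$ is $\iota$-invariant but gives $u=v^{-1}$, so $uv\neq1$ on $\Gamma\times\Gamma$ has to be assumed, just as the paper implicitly assumes $\mathcal K_0$ is trace class.
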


\begin{example}[Breather and breather gas]\label{ex:nls-breather-gas}
Let $\mathcal K^{(n)}$ on $L^2(\Gamma,d\nu)$ is defined by Eq. \eqref{kernel-fnls}. $\Gamma$, $\nu$ satisfy the assumptions in Cor. \ref{cor:fnls}.
For continuous spectral data, take
\begin{align*}
\Gamma_+
=\{\kappa e^{\mathrm i\vartheta}:a\le\kappa\le b\},
\qquad
\Gamma_-=\iota(\Gamma_+),
\qquad
\Gamma=\Gamma_+\cup\Gamma_-,
\end{align*}
where $1<a<b$ and $|\vartheta|<\pi/2$.
Prescribe $d\nu=\rho(\kappa)\,d\kappa$ on $\Gamma_+$,
with $\rho\ge0$ integrable, and use its pushforward on $\Gamma_-$.
The choices $\vartheta=0$ and $\vartheta\ne0$ give the
Kuznetsov-Ma (KM) and Tajiwara-Watanabe (TW) breather spectral families, respectively. If we take instead
\[
\Gamma_+
=\{e^{\mathrm i\vartheta}:
  \vartheta_-\le\vartheta\le\vartheta_+\},
\qquad
0<\vartheta_-<\vartheta_+<\frac{\pi}{2},
\]
with $d\nu=\rho(\vartheta)\,d\vartheta$ on $\Gamma_+$, we obtain the Akhmediev-type breather gas.
Discrete measures give finite $N$-breather solutions. Fig. \ref{nls-breathergas} shows the dynamics of KM-type breather gas, as well as the acceleration of an individual breather by the breather gas. Other results about the breather gas of fNLS equation can be found in \cite{eltovbis2020,falquigravapuntini2025,weng2025,hanzhangdong2026}.
\end{example}

\begin{remark}
    Under the transformation $\lambda=u+u^{-1},\ \mu=v+v^{-1}$, the kernel $K^{(n)}(u,v;x,t)$ can be written into the IIKS form.
\end{remark}
\begin{figure}[H]
	\centering
	\subfigure[breather gas]
    {
\includegraphics[width=2.5in]{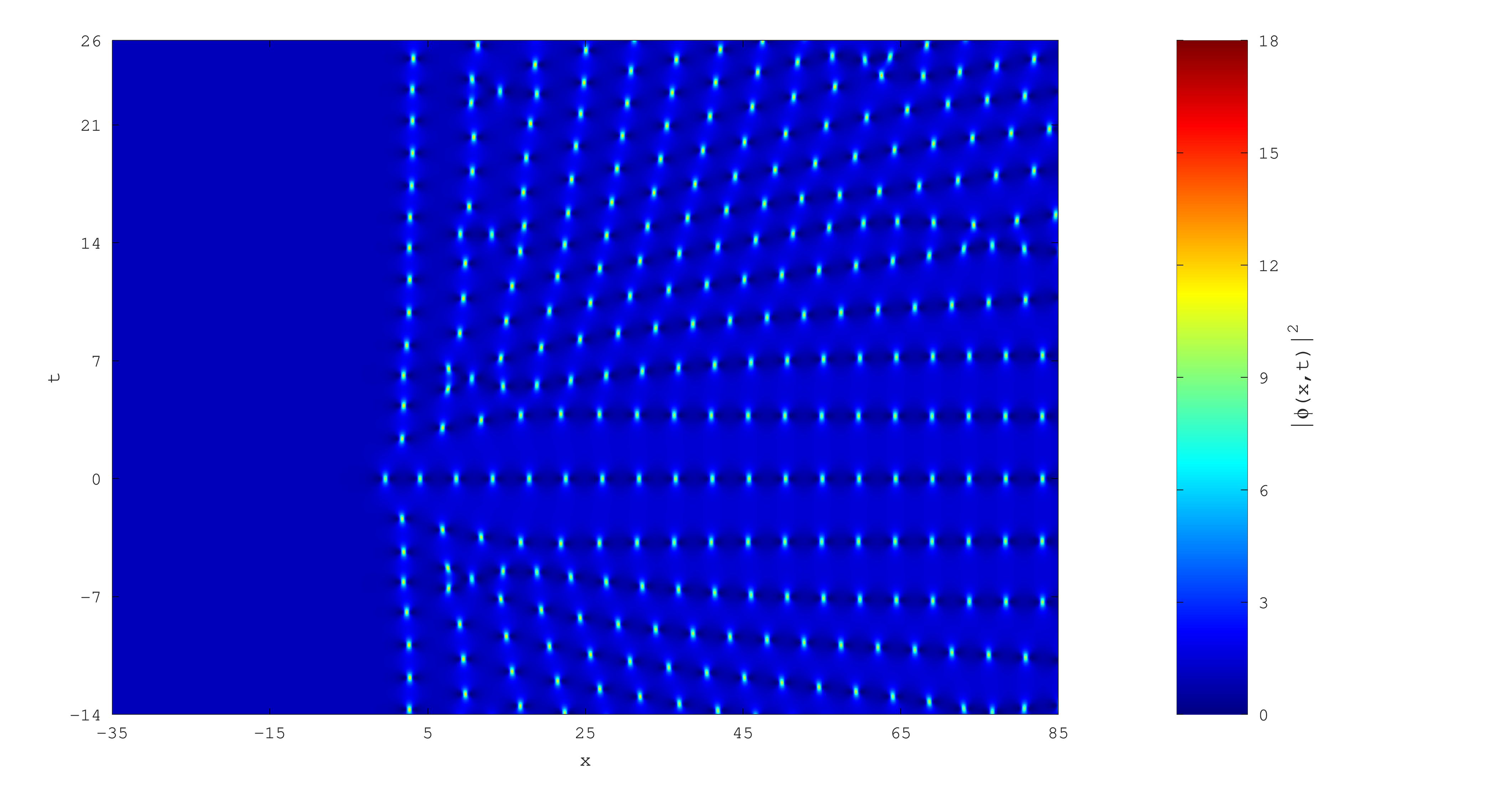}
    }
	\hspace*{3em}
    	\subfigure[breather-breather gas interaction]
	{
		\includegraphics[width=2.5in]{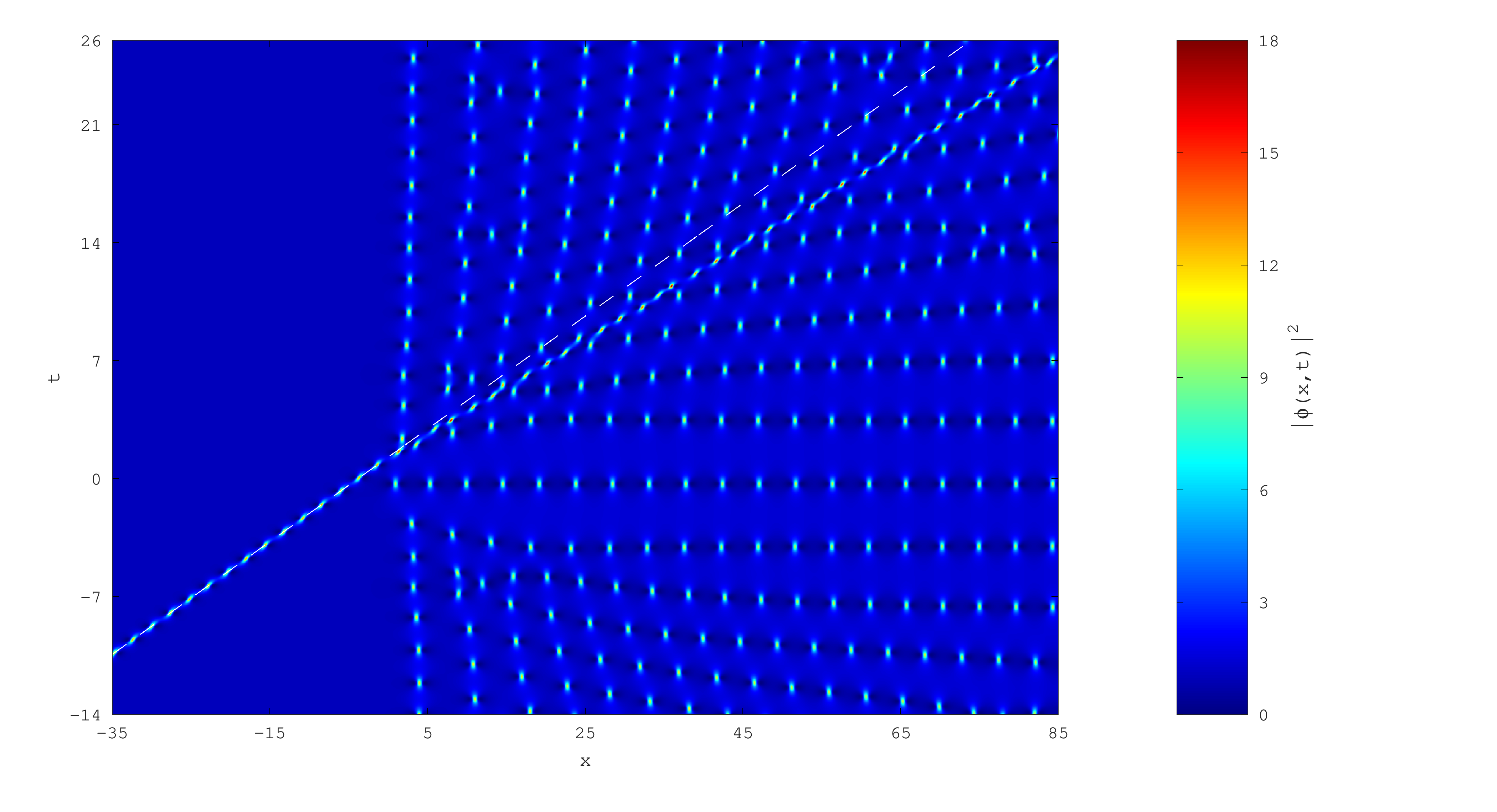}
	}
	\caption{(a) KM-type breather gas to the fNLS equation in Example \ref{ex:nls-breather-gas} with $\Gamma_+=[1.2,1.9]$; (b) interaction between single TW breather and KM-type breather gas with $\Gamma_+=\{2.6e^{-\frac{\mathrm{i}}{2}}\}\cup[1.2,1.9]$ (dash line: single TW breather without gas).}\label{nls-breathergas}
\end{figure}

\subsection{Relationship between Lem. \ref{lem:fnls-mother} and Thm. \ref{thm:main}}\label{sec:lem-thm-rel}
 We now show that, under a suitable spectral separation condition, the Fredholm determinants in Lem. \ref{lem:fnls-mother} can be represented by the half-line construction of Thm. \ref{thm:main}. Consider the operator $\mathcal{K}^{(n)}$ defined in the assumptions in Lem. \ref{lem:fnls-mother}. Suppose additionally that, for some $\delta>0$,
\begin{equation}
 \sigma(P)\subset\{\operatorname{Re}z>\delta\},\qquad
 \sigma(Q)\subset\{\operatorname{Re}z<-\delta\}.
 \label{eq:fnls-halfline-separation}
\end{equation}
As before, we write $P\mathcal K_0-\mathcal K_0Q=a_*\otimes b_*$, where $a_*$
is a vector and $b_*$ is a continuous linear functional, with
$(a_*\otimes b_*)h=a_*b_*(h)$. Set
\[
 a_n=P^ne^{\theta(P)}a_*,\qquad
 b_n=b_*e^{-\theta(Q)}Q^{-n}.
\]
The spectral separation condition \eqref{eq:fnls-halfline-separation} ensures the exponential decay of the operator $e^{rP}\mathcal K^{(n)}e^{-rQ}$ as $r\to-\infty$. By differentiating this operator with respect to $r$ and integrating from $-\infty$ to $0$, we obtain the factorization
\begin{equation}
 \mathcal K^{(n)}
 =\int_{-\infty}^0e^{rP}a_n\otimes b_ne^{-rQ}\,dr
 =A_nB_n,
 \label{eq:fnls-sylvester-factorization}
\end{equation}
where $A_n:L^2(\mathbb R_-)\to H$ and
$B_n:H\to L^2(\mathbb R_-)$ are the Hilbert-Schmidt operators
\[
 A_nh=\int_{-\infty}^0e^{rP}a_nh(r)\,dr,\qquad
 (B_nf)(u)=b_ne^{-uQ}f.
\]
Let $\mathcal F^{(n)}=(B_nA_n)^T$, where $T$ denotes kernel
transposition without conjugation. Its integral kernel is explicitly given by
\begin{equation}
 F^{(n)}(u,v)=b_ne^{-vQ}e^{uP}a_n,\qquad u,v\le0.
 \label{eq:fnls-associated-halfline-kernel}
\end{equation}
Applying Sylvester's determinant identity, we have
\begin{equation}
 \det\nolimits_H(I+\mathcal K^{(n)})
 =\det\nolimits_{L^2(\mathbb R_-)}(I+B_nA_n)
 =\det\nolimits_{L^2(\mathbb R_-)}(I+\mathcal F^{(n)}).
 \label{eq:fnls-sylvester-identification}
\end{equation}
Direct differentiation and zero-normalized integration give
\begin{align*}
 &F^{(n)}_{x_j}
 =\bigl(\partial_u^j-(-\partial_v)^j\bigr)F^{(n)},
 j=1,2,\\
 &F^{(n)}_{x_{-1}}
 =(\partial_u^{-1}+\partial_v^{-1})F^{(n)},\\
 &\partial_uF^{(n)}+\partial_vF^{(n+1)}=0.
\end{align*}
The exponential decay as $r \to -\infty$ guarantees that the necessary primitives and parameter derivatives are trace-class, and that all boundary terms vanish. Thus, under the separation condition \eqref{eq:fnls-halfline-separation}, the conclusion follows directly from Thm. \ref{thm:main}. 

We note, however, that the fNLS reduction $Q=P^{-1}$ does not satisfy this half-line condition. Because $\operatorname{Re}(z^{-1})=\operatorname{Re}(z)/|z|^2$, the real parts of the spectra of $P$ and $P^{-1}$ share the same sign, making the separation condition \eqref{eq:fnls-halfline-separation} impossible. Consequently, the fNLS solutions constructed in Cor. \ref{cor:fnls} cannot be obtained via a direct reduction of Thm. \ref{thm:main} within the half-line class considered above.

\section{Non-isospectral Lax interpretations}\label{non-isospectral}
This appendix presents non-isospectral Lax interpretations for the equation satisfied by the finite-temperature deformed Bessel kernel determinant in Eq. \eqref{n-nkdv1}.

Let $r=r(x,t)$ and $b=b(x,t)$ be real-valued functions.  We start
from the following Lax pair
\begin{equation}
    \boldsymbol{\Upsilon}_x=U(\lambda)\boldsymbol{\Upsilon},
    \qquad
    \boldsymbol{\Upsilon}_t=V(\lambda)\boldsymbol{\Upsilon},
    \label{eq:nkdv-lax-system}
\end{equation}
where
\begin{equation}
    U(\lambda)=
    \begin{pmatrix}
        0&1\\
        \lambda-r&0
    \end{pmatrix}
    \label{eq:nkdv-U}
\end{equation}
and
\begin{equation}
    V(\lambda)=\frac{1}{4\lambda}
    \begin{pmatrix}
        b_x&-2b\\
        b_{xx}-2(\lambda-r)b&-b_x
    \end{pmatrix}.
    \label{eq:nkdv-V}
\end{equation}
In the classical (isospectral) setting, the spectral parameter $\lambda$ is independent of $x$ and $t$. Then, the compatibility
condition $\boldsymbol{\Upsilon}_{xt}=\boldsymbol{\Upsilon}_{tx}$ yields the following system
\begin{align}
    &b_{xxx}+2r_xb+4rb_x=0,                          \label{eq:negative-kdv-2}\\
    &r_t=b_x.                                         \label{eq:negative-kdv-1}
\end{align}
which is the negative KdV equation studied in \cite{Verosky1991}. 

We now introduce a non-isospectral deformation by allowing the spectral parameter to depend explicitly on the time variable. More precisely, we set
\begin{equation}
    \lambda=\lambda(z,t)=-z+t,
    \label{eq:nkdv-nonisospectral-law}
\end{equation}
where $z$ is a new spectral parameter independent of $x$ and $t$. Expressed in terms of $z$, the Lax pair \eqref{eq:nkdv-lax-system} becomes 
\begin{align}    
\boldsymbol{\Upsilon}_x=\widetilde{U}(z)\boldsymbol{\Upsilon},
    \qquad
    \boldsymbol{\Upsilon}_t=\widetilde{V}(z)\boldsymbol{\Upsilon},
    \label{eq:nkdv-lax-system-deform}
\end{align}
with
\begin{equation}
    \widetilde{U}(z)=
    \begin{pmatrix}
        0&1\\
        -z+t-r&0
    \end{pmatrix},\quad \widetilde{V}(z)=\frac{1}{4(-z+t)}
    \begin{pmatrix}
        b_x&-2b\\
        b_{xx}+2(z-t+r)b&-b_x
    \end{pmatrix}.
    \label{eq:nkdv-UV-deform}
\end{equation}
The compatibility
condition $\boldsymbol{\Upsilon}_{xt}=\boldsymbol{\Upsilon}_{tx}$ now gives
\begin{align}
   & b_{xxx}+2r_xb+4rb_x=0,                          \label{eq:negative-kdv-noniso-2}\\
     &r_t=b_x+1.                                     \label{eq:negative-kdv-noniso-1}
\end{align}
Setting $b=-2\phi_t$ and $r=-2\phi_x+t$, the relation \eqref{eq:negative-kdv-noniso-1} is always satisfied, while Eq. \eqref{eq:negative-kdv-noniso-2} reduces directly to equation \cref{n-nkdv1}. We also note that the deformed Lax pair \eqref{eq:nkdv-lax-system-deform} coincides with that given in \cite[Prop. 4.1]{ruzza2025}.
\end{document}